%
\newif\ifnotanonymous \notanonymoustrue
\newif\iffullversion \fullversiontrue
\documentclass[runningheads]{llncs}
\usepackage{graphicx}
\usepackage{color}
\usepackage{amsmath}
\usepackage{amssymb}
\usepackage{bcprules, proof}
\usepackage{fancybox}
\usepackage{mathtools}
\usepackage{float}
\usepackage{xparse}
\usepackage{lscape}
\usepackage{xspace}
\usepackage{url}
\usepackage{bcpproof}

%
\usepackage{cite}
\usepackage{hyperref}

\newcommand{\LTP}{$\lambda^{\triangleright\%}$\xspace}
\newcommand{\LMD}{$\lambda^{\textrm{MD}}$\xspace}
\newcommand{\LLF}{$\lambda\textrm{LF}$\xspace}

\newcommand{\G}{\Gamma}
\newcommand{\D}{\Delta}
\newcommand{\V}{\vdash_\Sigma}

\newcommand{\iskind}{\text{\ kind}}
\newcommand{\TW}{{\mathop{\triangleright}}}

\newcommand{\F}{\forall}
\newcommand{\TB}{{\mathop{\blacktriangleright}}}
\newcommand{\TBL}{{\mathop{\blacktriangleleft}}}
\newcommand{\E}{\equiv}
\newcommand{\FV}{\text{FV}}
\newcommand{\FTV}{\text{FSV}}

\newcommand{\WStar}{\textsc{W-Star}}
\newcommand{\WAbs}{\textsc{W-Abs}}

\newcommand{\KTConst}{\textsc{K-TConst}}
\newcommand{\KAbs}{\textsc{K-Abs}}
\newcommand{\KApp}{\textsc{K-App}}
\newcommand{\KConv}{\textsc{K-Conv}}
\newcommand{\KTW}{\textsc{K-$\TW$}}

\newcommand{\KGen}{\textsc{K-Gen}}
\newcommand{\KCsp}{\textsc{K-Csp}}

\newcommand{\TConst}{\textsc{T-Const}}
\newcommand{\TVar}{\textsc{T-Var}}
\newcommand{\TAbs}{\textsc{T-Abs}}
\newcommand{\TApp}{\textsc{T-App}}
\newcommand{\TConv}{\textsc{T-Conv}}
\newcommand{\TTB}{\textsc{T-$\TB$}}
\newcommand{\TTBL}{\textsc{T-$\TBL$}}
\newcommand{\TGen}{\textsc{T-Gen}}
\newcommand{\TIns}{\textsc{T-Ins}}
\newcommand{\TCsp}{\textsc{T-Csp}}

\newcommand{\QKAbs}{\textsc{QK-Abs}}

\newcommand{\QKRefl}{\textsc{QK-Refl}}
\newcommand{\QKSym}{\textsc{QK-Sym}}
\newcommand{\QKTrans}{\textsc{QK-Trans}}

\newcommand{\QTAbs}{\textsc{QT-Abs}}
\newcommand{\QTApp}{\textsc{QT-App}}
\newcommand{\QTTW}{\textsc{QT-$\TW$}}
\newcommand{\QTGen}{\textsc{QT-Gen}}
\newcommand{\QTCsp}{\textsc{QT-Csp}}
\newcommand{\QTRefl}{\textsc{QT-Refl}}
\newcommand{\QTSym}{\textsc{QT-Sym}}
\newcommand{\QTTrans}{\textsc{QT-Trans}}

\newcommand{\QAbs}{\textsc{Q-Abs}}
\newcommand{\QApp}{\textsc{Q-App}}
\newcommand{\QTB}{\textsc{Q-$\TB$}}
\newcommand{\QTBL}{\textsc{Q-$\TBL$}}
\newcommand{\QGen}{\textsc{Q-Gen}}
\newcommand{\QIns}{\textsc{Q-Ins}}
\newcommand{\QCsp}{\textsc{Q-Csp}}
\newcommand{\QRefl}{\textsc{Q-Refl}}
\newcommand{\QSym}{\textsc{Q-Sym}}
\newcommand{\QTrans}{\textsc{Q-Trans}}
\newcommand{\QBeta}{\textsc{Q-$\beta$}}

\newcommand{\QTBLTB}{\textsc{Q-$\TBL\TB$}}
\newcommand{\QLambda}{\textsc{Q-$\Lambda$}}
\newcommand{\QPercent}{\textsc{Q-\%}}

\newcommand{\I}{\textrm{Int}}
\newcommand{\B}{\textrm{Bool}}

\newcommand{\rulefbox}[1]{\fbox{\ensuremath{#1}} \hspace{1mm}}

\begin{document}
\title{A Dependently Typed Multi-Stage Calculus%
}
%
%
\ifnotanonymous
	\author{Akira Kawata\inst{1} \and
		Atsushi Igarashi\inst{1}\orcidID{0000-0002-5143-9764}}
	\authorrunning{A. Kawata, A. Igarashi}
	\institute{Graduate School of Informatics, Kyoto University, Kyoto, Japan\\
          \email{akira@fos.kuis.kyoto-u.ac.jp} \\
          \email{igarashi@kuis.kyoto-u.ac.jp}
	}
\fi
\maketitle              
\begin{abstract}

	We study a dependently typed extension of a multi-stage programming language
\`a la MetaOCaml, which supports quasi-quotation and cross-stage persistence
for manipulation of code fragments as first-class values and an evaluation
construct for execution of programs dynamically generated by this code
manipulation. Dependent types are expected to bring to multi-stage
programming enforcement of strong invariant---beyond simple type safety---on
the behavior of dynamically generated code. An extension is, however, not
trivial because such a type system would have to take stages of types---roughly speaking,
the number of surrounding quotations---into account.

To rigorously study properties of such an extension, we develop \LMD, which
is an extension of Hanada and Igarashi's typed calculus \LTP with dependent
types, and prove its properties including preservation, confluence, strong
normalization for full reduction, and progress for staged reduction.
Motivated by code generators that generate code whose type depends on
a value from outside of the quotations, we argue the significance of cross-stage
persistence in dependently typed multi-stage programming and certain type
equivalences that are not directly derived from reduction rules.

	
	\keywords{multi-stage programming, cross-stage persistence, dependent types}
\end{abstract}


\section{Introduction}
\label{sec:intro}

\subsection{Multi-stage Programming and MetaOCaml}

Multi-stage programming makes it easier for programmers to implement
generation and execution of code at run time by providing language
constructs for composing and running pieces of code as first-class
values.  A promising application of multi-stage programming is
(run-time) code specialization, which generates program code
specialized to partial inputs to the program and such applications are
studied in the literature~\cite{8384206,mainland2012metahaskell,taha2007gentle}.

MetaOCaml~\cite{calcagno2003implementing,oleg2014} is an extension of
OCaml\footnote{\url{http://ocaml.org}} with special constructs for
multi-stage programming, including brackets and escape, which are
(hygienic) quasi-quotation, and \texttt{run}, which is similar to
\texttt{eval} in Lisp, and cross-stage persistence
(CSP)~\cite{MetaML}.  Programmers can easily write code generators by
using these features.  Moreover, MetaOCaml is equipped with a powerful
type system for safe code generation and execution.  The notion of
code types is introduced to prevent code values that represent
ill-typed expressions from being generated.  For example, a quotation
of expression \texttt{1 + 1} is given type \texttt{int code}
and a code-generating function, which takes a code value \(c\) as an
argument and returns \(c \texttt{ + } c\), is given type \texttt{int
  code -> int code} so that it cannot be applied to, say, a quotation
of \texttt{"Hello"}, which is given type \texttt{string
  code}.  Ensuring safety for \verb|run| is more challenging because
code types by themselves do not guarantee that the execution of code values never
results in unbound variable errors.  Taha and
Nielsen~\cite{taha2003environment} introduced the notion of
environment classifiers to address the problem, developed a type
system to ensure not only type-safe composition but also type-safe
execution of code values, and proved a type soundness theorem (for a formal calculus \(\lambda^\alpha\) modeling a pure subset of MetaOCaml).

However, the type system, which is based on the Hindley--Milner
polymorphism~\cite{Milner78JCSS}, is not strong enough to guarantee
invariant beyond simple types.  For example, Kiselyov~\cite{8384206}
demonstrates specialization of vector/matrix computation with respect
to the sizes of vectors and matrices in MetaOCaml but the type system
of MetaOCaml cannot prevent such specialized functions from being applied to
vectors and matrices of different sizes.




\subsection{Multi-stage Programming with Dependent Types}


One natural idea to address this problem is the introduction of dependent types
to express the size of data structures in static types~\cite{Xi98}.
For example, we could declare vector types indexed by the size of
vectors as follows.
\begin{verbatim}
    Vector :: Int -> *
\end{verbatim}
\verb|Vector| is a type constructor that takes an integer (which
represents the length of vectors): for example, \verb|Vector 3| is the
type for vectors whose lengths are 3.  Then, our hope is to specialize
vector/matrix functions with respect to their size and get a piece of
function code, whose type respects the given size, \emph{provided at
  specialization time}.  For example, we would like to specialize a
function to add two vectors with respect to the size of vectors, that
is, to implement a code generator that takes a (nonnegative) integer $n$ as an
input and generates a piece of function code of type
\verb|(Vector |$n$\verb| -> Vector |$n$\verb| -> Vector |$n$\verb|) code|.


\subsection{Our Work}
In this paper, we develop a new multi-stage calculus \LMD by extending
the existing multi-stage calculus \LTP\cite{Hanada2014} with dependent
types and study its properties.  We base our work on \LTP, in which
the four multi-stage constructs are handled slightly differently from
MetaOCaml, because its type system and semantics are arguably simpler
than \(\lambda^\alpha\)~\cite{taha2003environment}, which formalizes
the design of MetaOCaml more faithfully.  Dependent types are based on
\LLF~\cite{attapl}, which has one of the simplest forms of dependent
types.  Our technical contributions are summarized as follows:
\begin{itemize}
\item We give a formal definition of \LMD with its syntax, type system and
two kinds of reduction: full reduction, allowing reduction of any redex,
including one under $\lambda$-abstraction and quotation, and staged reduction, a
small-step call-by-value operational semantics that is closer to the intended
multi-stage implementation.
\item We show preservation, strong normalization, and confluence for
  full reduction; and show unique decomposition (and progress as its
  corollary) for staged reduction.
\end{itemize}
The combination of multi-stage programming and dependent types has
been discussed by Pasalic, Taha, and Sheard~\cite{pasalic2002tagless}
and Brady and Hammond~\cite{brady2006dependently} but, to our
knowledge, our work is a first formal calculus of full-spectrum dependently typed
multi-stage programming with all the key constructs mentioned above.

\subsubsection{Organization of the Paper.}

The organization of this paper is as follows.
Section~\ref{sec:informal-overview} gives an informal overview of
\LMD.  Section~\ref{sec:formal-definition} defines \LMD and
Section~\ref{sec:properties} shows properties of \LMD.
Section~\ref{sec:related-work} discusses related work and Section
\ref{sec:conclusion} concludes the paper with discussion of future
work.  \iffullversion
For reference, we give the full definition of \LMD and proofs of properties in the Appendix.
\else We omit proofs and (details of) some definitions for brevity;
interested readers are referred to a full version of the paper,
which is available at \url{https://arxiv.org/abs/1908.02035}. \fi


\section{Informal Overview of \LMD \label{sec:informal-overview}}

We describe our calculus \LMD informally.  \LMD is based on
\LTP~\cite{Hanada2014} by Hanada and Igarashi and so we start with a review of 
\LTP.

\subsection{\LTP}


In \LTP, brackets (quasi-quotation) and escape (unquote) are written
$\TB_\alpha M$ and $\TBL_\alpha M$, respectively.  For example,
$\TB_\alpha (1 + 1)$ represents code of expression $1 + 1$ and thus
evaluates to itself.  Escape $\TBL_\alpha M$ may appear under
$\TB_\alpha$; it evaluates $M$ to a code value and splices it into the
surrounding code fragment.  Such splicing is expressed by the
following reduction rule:
\begin{align*}
	\TBL_\alpha (\TB_\alpha M) \longrightarrow M .
\end{align*}

The subscript $\alpha$ in $\TB_\alpha$ and $\TBL_\alpha$ is a \textit{stage
  variable}\footnote{%
  In Hanada and Igarashi~\cite{Hanada2014}, it was called a
  \textit{transition variable}, which is derived from correspondence
  to modal logic, studied by Tsukada and Igarashi~\cite{Tsukada}.} and
a sequence of stage variables is called a \textit{stage}.  Intuitively,
a stage represents the depth of nested brackets.  Stage variables can be
abstracted by $\Lambda\alpha.M$ and instantiated by an application
$M\ A$ to stages.  For example,
$\Lambda\alpha.\TB_\alpha ((\lambda x:\I.x+10)\ 5)$ is a code value,
where \(\alpha\) is abstracted.  If it is applied to \(A = \alpha_1 \cdots \alpha_n\), \(\TB_\alpha\) becomes \(\TB_{\alpha_1} \cdots \TB_{\alpha_n}\); in particular,
if \(n = 0\), \(\TB_\alpha\) disappears.  So, an
application of $\Lambda\alpha.\TB_\alpha ((\lambda x:\I.x+10)\ 5)$
to the empty sequence \(\varepsilon\) reduces to
(unquoted) \((\lambda x:\I.x+10)\ 5\) and to 15.  In other words, application of a
\(\Lambda\)-abstraction to $\varepsilon$ corresponds to \texttt{run}.
This is expressed by the following reduction rule:
\begin{align*}
	(\Lambda\alpha.M)\ A \longrightarrow M[\alpha\mapsto A]
\end{align*}
where stage substitution \([\alpha \mapsto A]\) manipulates the nesting of
\(\TB_\alpha\) and \(\TBL_\alpha\) (and also \(\%_\alpha\) as we see later).

Cross-stage persistence (CSP), which is an important feature of \LTP,
is a primitive to embed values (not necessarily code values) into a
code value.  For example, a \LTP-term
\[
  M_1 = \lambda x:\I.\Lambda\alpha.(\TB_\alpha ((\%_\alpha x) * 2))
\]
takes an integer \(x\) as an input and returns a code value, into
which \(x\) is embedded.  If $M_1$ is applied to $38 + 4$ as in
\[
  M_2 = (\lambda x:\I.\Lambda\alpha.(\TB_\alpha ((\%_\alpha x) * 2)))\ (38 + 4),
\]
then it evaluates to
\(M_3 = \Lambda\alpha.(\TB_\alpha ((\%_\alpha 42) * 2))\).  According
to the semantics of \LTP, the subterm $\%_\alpha 42$ means that it
waits for the surrounding code to be run (by an application to
$\varepsilon$) and so it does not reduce further.  If \(M_3\) is run
by application to \(\varepsilon\), substitution of \(\varepsilon\) for
\(\alpha\) eliminates \(\TB_\alpha\) and \(\%_\alpha\) and so
\(42 * 2\), which reduces to 84, is obtained.
CSP is practically important because
one can call library functions from inside quotations.  

The type system of \LTP uses code types---the type of code of type
\(\tau\) is written \(\TW_\alpha \tau\)---for typing \(\TB_\alpha\),
\(\TBL_\alpha\) and \(\%_\alpha\).  It takes stages into account: a
variable declaration (written $x:\tau@A$) in a type environment is associated with its
declared stage $A$ as well as its type $\tau$ and the type judgement of \LTP is of
the form $\G \vdash M : \tau@A$, in which $A$ stands for the stage
of term $M$.\footnote{%
  In Hanada and Igarashi~\cite{Hanada2014}, it is written
  $\G \vdash^A M : \tau$.
  }
For example,
$y:\I@\alpha \vdash (\lambda x:\I.y) : \I \to \I @ \alpha$ holds, but
$y:\I@\alpha \vdash (\lambda x:\I.y) : \I \to \I @ \varepsilon$ does
not because the latter uses $y$ at stage \(\varepsilon\) but $y$ is
declared at $\alpha$.  Quotation \(\TB_\alpha M\) is given type
\(\TW_\alpha \tau\) at stage \(A\) if \(M\) is given type \(\tau\) at
stage \(A\alpha\); unquote \(\TBL_\alpha M\) is given type \(\tau\)
at stage \(A\alpha\) if \(M\) is given type \(\TW_\alpha \tau\) at
stage \(A\alpha\); and CSP \(\%_\alpha M\) is give type \(\tau\)
at stage \(A\alpha\) if \(M\) is given type \(\tau\) at \(A\).
These are expressed by the following typing rules.
\begin{center}
	\infrule{\G\vdash M:\tau @{A\alpha}}{\G\vdash \TB_{\alpha}M:\TW_{\alpha}\tau @A} \hfil
	\infrule{\G\vdash M:\TW_{\alpha}\tau @A}{\G\vdash \TBL_{\alpha}M:\tau @{A\alpha}} \hfil
	\infrule{\G\vdash M: \tau @A}{\G\vdash \%_{\alpha}M:\tau @{A\alpha}}
\end{center}

\subsection{Extending \LTP with Dependent Types}

In this paper, we add a simple form of dependent types---{\`a} la
Edinburgh LF~\cite{harper1993framework} and \LLF~\cite{attapl}---to \LTP.
Types can be indexed by terms as in \texttt{Vector} in
Section~\ref{sec:intro} and \(\lambda\)-abstractions can be given
dependent function types of the form \(\Pi x:\tau. \sigma\) but we do
not consider type operators (such as $\texttt{list } \tau$) or
abstraction over type variables.  We introduce kinds to classify
well-formed types and equivalences for kinds, types, and terms---as
in other dependent type systems---but we have to address a question
how the notion of stage (should) interact with kinds and types.

On the one hand, base types such as \(\I\) should be able to be used
at every stage as in \LTP so that
\(\lambda x:\I.\Lambda \alpha. \TB_\alpha \lambda y:\I.M\) is a valid
term (here, \(\I\) is used at \(\varepsilon\) and \(\alpha\)).
Similarly for indexed types such as Vector 4.  On the other hand, it
is not immediately clear how a type indexed by a variable, which can be used only
at a declared stage, can be used.  For example,
consider
\[\TB_\alpha (\lambda x:\I. (\TBL_\alpha (\lambda y:\text{Vector
  }x.M)N) )
  \text{ and }
  \lambda x:\I. \TB_\alpha (\lambda y:\text{Vector }x.M).
\]
Is Vector\ \(x\) a legitimate type at \(\varepsilon\) (and \(\alpha\),
resp.)  even if \(x:\I\) is declared at stage \(\alpha\) (and
\(\varepsilon\), resp.)?  We will give our answer to this question in two
steps.

First, type-level constants such as \(\I\) and Vector can be used at
every stage in \LMD.  Technically, we introduce a signature that
declares kinds of type-level constants and types of constants.  For
example, a signature for the Boolean type and constants is given as
follows $\B::*, \text{true}:\B, \text{false}:\B$ (where $*$ is the
kind of proper types).  Declarations in a signature are not
associated to particular stages; so they can be used at every stage.

Second, an indexed type such as Vector\ 3 or Vector\ $x$ is well
formed only at the stage(s) where the index term is well-typed.  Since
constant \(3\) is well-typed at every stage (if it is declared in the
signature), Vector\ 3 is a well-formed type at every stage, too.
However, Vector\ $M$ is well-formed only at the stage where index term
$M$ is typed.  Thus, the kinding judgment
of \LMD takes the form \(\G\V \tau :: K @ A\), where stage $A$ stands for
where \(\tau\) is well-formed.  For example,
given \(\text{Vector}:: \I \rightarrow *\) in the signature \(\Sigma\),
\(x:\I@\varepsilon \V \text{Vector }x :: * @\varepsilon\) can be
derived but
neither
\(x:\I@\alpha \V \text{Vector }x :: * @\varepsilon\)
nor 
\(x:\I@\varepsilon \V \text{Vector }x :: * @\alpha\)
can be.

Apparently, the restriction above sounds too severe, because a term like
\(\lambda x:\I. \TB_\alpha (\lambda y:\text{Vector }x.M) \), which models a
typical code generator which takes the size $x$ and returns code for vector
manipulation specialized to the given size, will be rejected. It seems crucial
for \(y\) to be given a type indexed by $x$. We can address this problem by
CSP---In fact, $\text{Vector }x$ is not well formed at $\alpha$ under
$x:\I@\varepsilon$ but $\text{Vector }(\%_\alpha x)$ is!  Thus, we can still
write \(\lambda x:\I. \TB_\alpha (\lambda y:\text{Vector }(\%_\alpha x).M) \)
for the typical sort of code generators.

Our decision that well-formedness of types takes stages of index terms
into account will lead to the introduction of CSP at the type level
and special equivalence rules, as we will see later.


\section{Formal Definition of \LMD \label{sec:formal-definition}}
\label{sec:formal}

In this section, we give a formal definition of \LMD, including
the syntax, full reduction, and type system.  In addition to the full reduction,
in which any redex at any stage can be reduced, we also give staged reduction,
which models program execution (at \(\varepsilon\)-stage).

\subsection{Syntax}

We assume the denumerable set of \emph{type-level constants}, ranged over by
metavariables \(X, Y, Z\), the denumerable set of \emph{variables}, ranged
over by \(x,y,z\), the denumerable set of \emph{constants}, ranged over by
\(c\), and the denumerable set of \emph{stage variables}, ranged over by
\(\alpha, \beta, \gamma\).  The metavariables \(A, B, C\) range over
sequences of stage variables; we write \(\varepsilon\) for the empty
sequence. \LMD is defined by the following grammar:

{
\begin{align*}
    \textrm{kinds}             &  & K,J,I,H,G                & ::= * \mid \Pi x:\tau.K                                                           \\
    \textrm{types}             &  & \tau,\sigma,\rho,\pi,\xi & ::= X \mid \Pi x:\tau.\sigma \mid \tau\ M \mid \TW_{\alpha} \tau \mid \F\alpha.\tau \\
    \textrm{terms}             &  & M,N,L,O,P                & ::= c \mid x \mid \lambda x:\tau.M\ \mid M\ N \mid \TB_\alpha M                   \\
                               &  &                          & \ \ \ \ \mid \TBL_\alpha M \mid \Lambda\alpha.M \mid M\ A \mid \%_\alpha M        \\
    \textrm{signatures}         &  & \Sigma                   & ::= \emptyset \mid \Sigma, X::K \mid \Sigma, c:\tau                               \\
    \textrm{type env.} &  & \Gamma                   & ::= \emptyset \mid  \Gamma,x:\tau @A                                              \\
\end{align*}
}




A kind, which is used to classify types, is either $*$, the kind of
proper types (types that terms inhabit), or $\Pi x\colon\tau.K$, the kind
of type operators that takes $x$ as an argument of type $\tau$ and returns a type
of kind $K$.
A type is a type-level constant $X$, which is declared in the signature with its kind, a dependent function type $\Pi x:\tau.\sigma$,
an application $\tau\ M$ of a type (operator of $\Pi$-kind) to a term, a code type $\TW_\alpha \tau$, or an $\alpha$-closed type $\F\alpha.\tau$.
An example of an application of a type (operator) of $\Pi$-kind to a term is $\text{Vector}\ 10$; it is well kinded
if, say, the type-level constant $\text{Vector}$ has kind $\Pi x:\I.*$.
A code type $\TW_\alpha \tau$ is for a code fragment of a term of type $\tau$.
An $\alpha$-closed type, when used with $\TW_\alpha$, represents runnable code.



Terms include ordinary (explicitly typed) \(\lambda\)-terms, constants,
whose types are declared in signature $\Sigma$, and the following five forms
related to multi-stage programming:
$\TB_\alpha M$ represents a code fragment; $\TBL_\alpha M$ represents escape;
$\Lambda\alpha.M$ is a stage variable abstraction;
$M\ A$ is an application of a stage abstraction $M$ to stage $A$; and
$\%_\alpha M$ is an operator for cross-stage persistence.


We adopt the tradition of \LLF-like systems, where types of constants and
kinds of type-level constants are globally declared in a signature $\Sigma$,
which is a sequence of declarations of the form $c:\tau$ and $X::K$. For
example, when we use Boolean in \LMD, $\Sigma$ includes $\B :: *,
\textrm{true}:\B, \textrm{false}:\B$. Type environments are sequences of
triples of a variable, its type, and its stage. We write
\(\textit{dom}(\Sigma)\) and \(\textit{dom}(\Gamma)\) for the set of
(type-level) constants and variables declared in \(\Sigma\) and \(\Gamma\),
respectively. As in other multi-stage
calculi~\cite{taha2003environment,Tsukada,Hanada2014}, a variable declaration
is associated with a stage so that a variable can be referenced only at the
declared stage. On the contrary, constants and type-level constants are
\emph{not} associated with stages; so, they can appear at any stage. We
define well-formed signatures and well-formed type environments later.

The variable $x$ is bound in $M$ by $\lambda x:\tau.M$ and in $\sigma$
by $\Pi x:\tau.\sigma$, as usual; the stage variable $\alpha$ is
bound in $M$ by $\Lambda \alpha.M$ and $\tau$ by $\F\alpha.\tau$.
The notion of free variables is defined in a standard manner.
We write $\FV(M)$ and $\FTV(M)$ for the set of free variables and the set of free stage variables in $M$, respectively.  Similarly, $\FV(\tau)$, $\FTV(\tau)$,
$\FV(K)$, and $\FTV(K)$ are defined.
We sometimes abbreviate $\Pi x:\tau_1.\tau_2$ to $\tau_1 \rightarrow \tau_2$ if
$x$ is not a free variable of $\tau_2$.
We identify $\alpha$-convertible terms and assume the names of bound variables are pairwise distinct.

The prefix operators $\TW_\alpha, \TB_\alpha, \TBL_\alpha$, and
$\%_\alpha$ are given higher precedence over the three forms $\tau\ M$, $M\ N$,
$M\ A$ of applications, which are left-associative. The binders $\Pi$,
$\forall$, and $\lambda$ extend as far to the right as possible.
Thus, $\F\alpha.\TW_{\alpha} (\Pi x:\I.\text{Vector}\ 5)$ is
interpreted as
$\F\alpha.(\TW_{\alpha} (\Pi x:\I.(\text{Vector}\ 5)))$; and
$\Lambda\alpha.\lambda x:\I.\TB_\alpha x\ y$ means
$\Lambda\alpha.(\lambda x:\I.(\TB_\alpha x)\ y)$.

\paragraph{Remark:} Basically, we define \LMD to be an extension of
\LTP with dependent types.  One notable difference is that \LMD has
only one kind of \(\alpha\)-closed types, whereas \LTP has two kinds
of \(\alpha\)-closed types \(\forall\alpha.\tau\) and
\(\forall^\varepsilon\alpha.\tau\).  We have omitted the first kind,
for simplicity, and dropped the superscript $\varepsilon$ from the second. It
would not be difficult to recover the distinction to show properties related
to program residualization~\cite{Hanada2014}, although they are left as 
conjectures.

\subsection{Reduction}

Next, we define full reduction for \LMD.
Before giving the definition of reduction, we define two kinds of substitutions.
Substitution $M[x\mapsto N], \tau[x \mapsto N]$ and $K[x \mapsto N]$ are
ordinary capture-avoiding substitution of
term $N$ for $x$ in term $M$, type $\tau$, and kind $K$, respectively,
and we omit their definitions here.
Substitution $M[\alpha \mapsto A], \tau [\alpha \mapsto A], K[\alpha \mapsto A]$ and $B[\alpha\mapsto A]$ are
substitutions of stage $A$ for stage variable $\alpha$ in term $M$, type $\tau$, kind $K$, and stage $B$, respectively.
We show representative cases below.
{
\begin{align*}
    (\lambda x:\tau.M)[\alpha \mapsto A] & = \lambda x:(\tau[\alpha \mapsto A]).(M[\alpha \mapsto A])                                  \\
    (M\ B)[\alpha \mapsto A]             & = (M[\alpha \mapsto A])\ B[\alpha\mapsto A]                                                 \\
    (\TB_\beta M)[\alpha \mapsto A]      & = \TB_{\beta[\alpha \mapsto A]}M[\alpha \mapsto A]                                          \\
    (\TBL_\beta M)[\alpha \mapsto A]     & = \TBL_{\beta[\alpha \mapsto A]}M[\alpha \mapsto A]                                         \\
    (\%_\beta M)[\alpha \mapsto A]       & = \%_{\beta[\alpha \mapsto A]}M[\alpha \mapsto A]                                           \\
    (\beta B)[\alpha \mapsto A]          & = \beta (B[\alpha\mapsto A])                               & (\text{if } \alpha \neq \beta) \\
    (\beta B)[\alpha \mapsto A]          & = A (B[\alpha\mapsto A])                                   & (\text{if } \alpha = \beta)
\end{align*}
}
Here, $\TB_{\alpha_1\cdots\alpha_n} M$,
$\TBL_{\alpha_1\cdots\alpha_n} M$, and $\%_{\alpha_1\cdots\alpha_n} M$
$(n \geq 0)$ stand for $\TB_{\alpha_1} \cdots \TB_{\alpha_n} M$,
$\TBL_{\alpha_n}\cdots \TBL_{\alpha_1} M$, and
$\%_{\alpha_n}\cdots \%_{\alpha_1} M$, respectively.  
In particular,
$\TB_{\varepsilon} M = \TBL_{\varepsilon} M = \%_{\varepsilon} M = M$.
Also, it is important that
the order of stage variables is reversed for $\TBL$ and $\%$.
We also define substitutions of a stage or a term for variables in type environment $\G$.

\begin{definition}[Reduction]
    The relations $M \longrightarrow_\beta N$, $M \longrightarrow_\blacklozenge N$, and $M \longrightarrow_\Lambda N$
    are the least compatible relations closed under the rules below.
{
    \begin{align*}
         (\lambda x:\tau.M) N & \longrightarrow_\beta M[x \mapsto N]         \\
         \TBL_\alpha \TB_\alpha M & \longrightarrow_\blacklozenge M          \\
         (\Lambda \alpha.M)\ A & \longrightarrow_\Lambda M[\alpha \mapsto A]
    \end{align*}
  }    
\end{definition}
We write $ M \longrightarrow M'$ iff $ M \longrightarrow_\beta M'$,
$ M \longrightarrow_\blacklozenge M'$, or
$ M \longrightarrow_\Lambda M'$ and we call $\longrightarrow_\beta$,
$\longrightarrow_\blacklozenge$, and $\longrightarrow_\Lambda$
$\beta$-reduction, $\blacklozenge$-reduction, and $\Lambda$-reduction,
respectively.
$M \longrightarrow^* N$ means that there is a sequence of reduction $\longrightarrow$ whose length is greater than or equal to 0.

The relation $\longrightarrow_\beta$ represents ordinary $\beta$-reduction in the \(\lambda\)-calculus; the relation
$\longrightarrow_\blacklozenge$ represents that quotation $\TB_\alpha M$ is canceled by escape and $M$ is spliced into the code fragment surrounding the escape;
the relation $\longrightarrow_\Lambda$ means that a stage abstraction applied to  stage $A$ reduces to the body of the abstraction
where $A$ is substituted for the stage variable.
There is no reduction rule for CSP as with Hanada and Igarashi \cite{Hanada2014}.
The CSP operator $\%_\alpha$ disappears when $\varepsilon$ is substituted for $\alpha$.
We show an example of a reduction sequence below.
Underlines show the redexes.
\begin{align*}
     & \hspace{10mm} \underline{(\lambda f:\I\to\I.(\Lambda\alpha.\TB_\alpha (\%_\alpha f\ 1 + (\TBL_\alpha \TB_\alpha 3))\ \varepsilon))\ (\lambda x:\I.x)} \\
     & \longrightarrow_\beta (\Lambda\alpha.\TB_\alpha (\%_\alpha (\lambda x:\I.x)\ 1 + (\underline{\TBL_\alpha \TB_\alpha 3})))\ \varepsilon        \\
     & \longrightarrow_\blacklozenge \underline{(\Lambda\alpha.\TB_\alpha (\%_\alpha (\lambda x:\I.x)\ 1 + 3))\ \varepsilon}                                         \\
     & \longrightarrow_\Lambda \underline{(\lambda x:\I.x)\ 1} + 3                                                                                           \\
     & \longrightarrow_\beta 1 + 3                                                                                                                           \\
     & \longrightarrow^* 4
\end{align*}


\subsection{Type System}

In this section, we define the type system of \LMD.
It consists of eight judgment forms for signature well-formedness, type environment well-formedness, kind well-formedness, kinding, typing, kind equivalence, type equivalence, and term equivalence.
We list the judgment forms in Figure~\ref{fig:LMD-six-judgments}.
They are all defined in a mutual recursive manner.  We will discuss
each judgment below.

\begin{figure}[tbp]
  \begin{center}
    \begin{align*}
      \vdash & \Sigma                     & \text{signature well-formedness}        \\
      \V     & \G                         & \text{type environment well-formedness} \\
      \G     & \V K \iskind @ A           & \text{kind well-formedness}             \\
      \G     & \V \tau :: K @ A           & \text{kinding}                          \\
      \G     & \V M : \tau @ A            & \text{typing}                           \\
      \G     & \V K \E J @ A              & \text{kind equivalence}                 \\
      \G     & \V \tau \E \sigma :: K @ A & \text{type equivalence}                 \\
      \G     & \V M \E N : \tau @ A       & \text{term equivalence}
    \end{align*}
    \caption{Eight judgment forms of the type system of \LMD.}
    \label{fig:LMD-six-judgments}
  \end{center}
\end{figure}

\subsubsection{Signature and Type Environment Well-formedness.}
The rules for Well-formed signatures and type environments are
shown below:
{\small
\begin{center}
  \infrule{
  }{
    \vdash \emptyset
  }
  \hfil
  \infrule{
    \vdash \Sigma \andalso
    \V K \iskind @ \varepsilon \\
    X\notin\textit{dom}(\Sigma)
  }{
    \vdash \Sigma, X::K
  }
  \hfil
  \infrule{
    \vdash \Sigma \andalso
    \V \tau :: * @ \varepsilon \\
    c\notin\textit{dom}(\Sigma)
  }{
    \vdash \Sigma, c:\tau
  }
  \\[2mm]
  \infrule{
  }{
    \V \emptyset
  }
  \hfil
  \infrule{
    \V \Gamma \andalso
    \Gamma \V \tau :: * @ A \andalso
    x\notin\textit{dom}(\Sigma)
  }{
    \V \Gamma, x:\tau@A
  }
\end{center}
}

To add declarations to a signature, the kind/type of a (type-level)
constant has to be well-formed at stage \(\varepsilon\) so that it is
used at any stage.  In what follows, well-formedness is not explicitly
mentioned but we assume that all signatures and type environments are
well-formed.

\subsubsection{Kind Well-formedness and Kinding.}

The rules for kind well-formedness and kinding are a straightforward
adaptation from \LLF and \LTP, except for the following rule for type-level CSP.
\begin{center}
  \infrule[K-Csp]{
    \G \V \tau :: * @ A
  }{
    \G \V \tau :: * @ A\alpha
  }
\end{center}
Unlike the term level, type-level CSP is implicit because there is no staged
semantics for types.

\subsubsection{Typing.}

\begin{figure}[tbp]
  \begin{center}
    \infrule[\TConst]{c:\tau \in \Sigma}{\G \V c:\tau @A} \hfil
    \infrule[\TVar]{x:\tau @A \in \G}{\G \V x:\tau @A} \\[2mm]
    \infrule[\TAbs]{\G\V \sigma::*@A\andalso\G,x:\sigma@A\V M:\tau @A}{\G\V(\lambda (x:\sigma).M):(\Pi (x:\sigma).\tau)@A} \\[2mm]
    \infrule[\TApp]{\G\V M:(\Pi (x:\sigma).\tau)@A \andalso \G\V N:\sigma@A}{\G\V M\ N : \tau[x\mapsto N]@A} \\[2mm]
    \infrule[\TConv]{\G\V M:\tau @A \andalso \G\V \tau\equiv \sigma :: K@A}{\G\V M:\sigma@A} \\[2mm]
    \infrule[\TTB]{\G\V M:\tau @{A\alpha}}{\G\V\TB_{\alpha}M:\TW_{\alpha}\tau @A} \andalso
    \infrule[\TTBL]{\G\V M:\TW_{\alpha}\tau @A}{\G\V\TBL_{\alpha}M:\tau @{A\alpha}} \\[2mm]
    \infrule[\TGen]{\G\V M:\tau @A \andalso \alpha\notin\rm{FTV}(\G)\cup\rm{FTV}(A)}{\G\V\Lambda\alpha.M:\forall\alpha.\tau @A} \\[2mm]
    \infrule[\TIns]{\G\V M:\forall\alpha.\tau @A}{\G\V M\ B:\tau[\alpha \mapsto B]@A} \andalso
    \infrule[\TCsp]{\G\V M:\tau @A}{\G\V \%_\alpha M:\tau @{A\alpha}}
    \caption{Typing Rules.}
    \label{fig:typing-rules}
  \end{center}
\end{figure}

The typing rules of \LMD are shown in Figure~\ref{fig:typing-rules}.
The rule \TConst{} means that a constant can appear at any stage.
The rules \TVar, \TAbs, and \TApp{} are almost the same as those in the simply typed
lambda calculus or \LLF.  Additional conditions are that subterms must be
typed at the same stage (\TAbs{} and \TApp); the type
annotation/declaration on a variable has to be a proper type of kind
$*$ (\TAbs) at the stage where it is declared (\TVar{} and \TAbs).


As in standard dependent type systems, \TConv{} allows us to replace the type
of a term with an equivalent one. For example, assuming integers and
arithmetic, a value of type $\textrm{Vector}\ (4+1)$ can also have type
$\textrm{Vector}\ 5$ because of \TConv{}.

The rules \TTB, \TTBL, \TGen, \TIns, and \TCsp{} are constructs for
multi-stage programming. \TTB{} and \TTBL{} are the same as in \LTP, as we
explained in Section \ref{sec:informal-overview}. The rule \TGen{} for stage
abstraction is straightforward. The condition
$\alpha\notin\rm{FTV}(\G)\cup\rm{FTV}(A)$ ensures that the scope of $\alpha$
is in $M$, and avoids capturing variables elsewhere. The
rule \TIns{} is for applications of stages to stage abstractions. The rule
\TCsp{} is for CSP, which means that, if term $M$ is of type $\tau$ at stage
$A$, then $\%_\alpha M$ is of type $\tau$ at stage $A\alpha$. Note that CSP
is also applied to the type \(\tau\) (although it is implicit) in the
conclusion. Thanks to implicit CSP, the typing rule is the same as in \LTP.

\subsubsection{Kind, Type and Term Equivalence.}

Since the syntax of kinds, types, and terms is mutually recursive,
the corresponding notions of equivalence are also mutually recursive.
They are congruences closed under a few axioms for term equivalence.
Thus, the rules for kind and type equivalences are not very interesting, 
except that implicit CSP is allowed.
We show a few representative rules below.

  {\small
    \begin{center}
      \infrule[\textsc{QK-Csp}]{%
        \G\V K \E J @ A
      }{
        \G\V K \E J @ A\alpha
      }
      \hfil
      \infrule[\QTCsp]{
        \G\V \tau \E \sigma :: *@A
      }{
        \G\V \tau \E \sigma :: *@{A\alpha}
      }
      \\[2mm]
      \infrule[\QTApp]{%
        \G\V \tau \E \sigma :: (\Pi x:\rho.K)@A \andalso
        \G\V M \E N : \rho @A
      }{
        \G\V \tau\ M \E \sigma\ N :: K[x \mapsto M]@A
      }
    \end{center}
  }

We show the rules for term equivalence  in
Figure~\ref{fig:term-equivalence-rules}, omitting
straightforward rules for reflexivity, symmetry, transitivity,
and compatibility.
The rules \QBeta, \QTBLTB, and \QLambda{} correspond to
$\beta$-reduction, $\blacklozenge$-reduction, and $\Lambda$-reduction, respectively.

\begin{figure}[tbp]
  \begin{center}
    \infrule[\QBeta]{\G,x:\sigma@A\V M:\tau @A \andalso \G\V N:\sigma@A}{\G\V(\lambda x:\sigma.M)\ N\E M[x\mapsto N] : \tau[x \mapsto N]@A} \\[2mm]
    \infrule[\QLambda]{\G\V (\Lambda\alpha.M) : \forall\alpha.\tau @A}{\G\V (\Lambda\alpha.M)\ \varepsilon \E M[\alpha \mapsto \varepsilon] : \tau[\alpha \mapsto \varepsilon]@A} \\[2mm]
    \infrule[\QTBLTB]{\G\V M \E N : \tau @A}{\G\V \TBL_\alpha(\TB_\alpha M) \E N : \tau @A} \hfil
    \infrule[\QPercent]{\G\V M:\tau @{A\alpha} \andalso \G\V M:\tau @A}{\G\V\%_\alpha M \E M : \tau @{A\alpha}}
    \caption{Term Equivalence Rules.}
    \label{fig:term-equivalence-rules}
  \end{center}
\end{figure}

The only rule that deserves elaboration is the last rule \QPercent.
Intuitively, it means that the CSP operator applied to term $M$ can be
removed if $M$ is also well-typed at the next stage \(A\alpha\).
For example, constants do not depend on the stage (see \TConst) and
so \(\G\V \%_\alpha c \E c : \tau @ A\alpha\) holds but variables
do depend on stages and so this rule does not apply.

\subsubsection{Example.}

We show an example of a dependently typed code generator in a
hypothetical language based on \LTP.  
This language provides definitions by \textbf{let},
recursive functions (represented by \textbf{fix}), \textbf{if}-expressions,
and primitives cons, head, and tail to manipulate vectors. We assume that
$\text{cons}$ is of type $\Pi n:\I.\I \to \text{Vector}\ n \to \text{Vector}\ (n+1)$, 
$\text{head}$ is of type $\Pi n:\I.\text{Vector}\ (n+1) \to \I$, and
$\text{tail}$ is of type $\Pi n:\I.\text{Vector}\ (n+1) \to (\text{Vector}\ n)$.

Let's consider an application, for example, in computer graphics, in which we
have potentially many pairs of vectors of the fixed (but statically unknown)
length and a function---such as vector addition---to be applied to
them. This function should be fast because it is applied many times and be
safe because just one runtime error may ruin the whole long-running calculation.


\newcommand{\Vpn}{\text{Vector}\ (\%_\alpha n)}

Our goal is to define the function vadd of type
\[
  \Pi n:\I.\F\beta.\TW_\beta(\Vpn\to\Vpn\to\Vpn).
\]
\renewcommand{\Vpn}{\text{Vector}\ n}
It takes the length $n$ and returns ($\beta$-closed) code of a
function to add two vectors of length $n$.  The generated
code is run by applying it to \(\varepsilon\) to obtain
a function of type \(\Vpn\to\Vpn\to\Vpn\) as expected.

We start with the helper function vadd$_1$, which takes a stage, the length $n$ of vectors, and two quoted vectors as arguments and returns code that computes the addition of the given two vectors:
\begin{tabbing}
	  $\textbf{let}\ \text{vadd}_1 : \F\alpha.\Pi n:\I.\TW_\alpha\Vpn\to\TW_\alpha\Vpn\to\TW_\alpha\Vpn$                                \\
	  \hspace{6mm} \= $= \textbf{fix}\ f.\Lambda\alpha.\lambda n:\I.\ \lambda v_1:\TW_\alpha\Vpn.\ \lambda v_2:\TW_\alpha\Vpn.$            \\
	  \> \hspace{6mm} \= $\textbf{if}\ n = 0 \ \textbf{then}\ \TB_\alpha \text{nil}$ \\
	  \>\> $\textbf{else}\ \TB_\alpha ($ \= $\textbf{let}\ t_1 = \text{tail}\ (\TBL_\alpha v_1)\ \textbf{in}$ \\
	  \>\>\> $\textbf{let}\ t_2 = \text{tail}\ (\TBL_\alpha v_2)\ \textbf{in}$ \\
          \>\>\> $\text{cons}\ $\= $(\text{head}\ (\TBL_\alpha v_1) + \text{head} \ (\TBL_\alpha v_2))$ \\
          \>\>\>\> $\TBL_\alpha (f\ (n-1)\ (\TB_\alpha t_1)\ (\TB_\alpha t_2)))$
\end{tabbing}
Note that the generated code will not contain branching on $n$ or recursion.
(Here, we assume that the type system can determine whether $n=0$ when
\textbf{then}- and \textbf{else}-branches are typechecked so that
both branches can be given type \(\TW_\alpha \text{Vector }n\).)

Using vadd$_1$, the main function vadd can be defined as follows:
\renewcommand{\Vpn}{\text{Vector}\ (\%_\beta n)}
\begin{align*}
	  & \textbf{let}\ \text{vadd}: \Pi n:\I.\F\beta.\TW_\beta(\Vpn\to\Vpn\to\Vpn)                \\ 
	  & \hspace{6mm} = \lambda n:\I.\Lambda\beta.\TB_\beta (\lambda v_1:\Vpn.\ \lambda v_2:\Vpn. \\
	  & \hspace{63mm} \TBL_\beta (\text{vadd}_1\ \beta\ n\ (\TB_\beta\ v_1)\ (\TB_\beta\ v_2))) 
\end{align*}
\renewcommand{\Vpn}{\text{Vector}\ (\%_\beta 5)}%
The auxiliary function vadd$_1$ generates code to compute addition of
the formal arguments $v_1$ and $v_2$ without branching on $n$ or recursion.
As we mentioned already, if this function is applied to a
(nonnegative) integer constant, say 5, it returns function code for adding
two vectors of size 5.  The type of vadd\ 5, obtained by
substituting 5 for $n$, is
$\F\beta.\TW_\beta(\Vpn\to\Vpn\to\Vpn)$.
\renewcommand{\Vpn}{\text{Vector}\ 5}
If the obtained code is run by applying to \(\varepsilon\),
the type of vadd\ 5\ $\varepsilon$ is
\(\Vpn\to\Vpn\to\Vpn\) as expected.

There are other ways to implement the vector addition function:
by using tuples instead of lists if the length
for all the vectors is statically known or by checking dynamically the
lengths of lists for every pair.  However, our method is better than
these alternatives in two points.  First, our function, $\text{vadd}_1$
can generate functions for vectors of arbitrary length unlike the one
using tuples.  Second, $\text{vadd}_1$ has an advantage in speed over
the one using dynamic checking because it can generate an optimized
function for a given length.

We make two technical remarks before proceeding:
\begin{enumerate}
\item If the generated function code is composed with another piece of code of type, say,
\(\TW_\gamma \text{Vector }5\), \QPercent{} plays an essential role; that is,
\(\text{Vector }5\) and \(\text{Vector }(\%_\gamma 5)\), which would occur
by applying the generated code to \(\gamma\) (instead of \(\varepsilon\)), are syntactically
different types but \QPercent{} enables to equate them.
Interestingly, Hanada and Igarashi~\cite{Hanada2014} rejected the idea of
reduction that removes $\%_\alpha$ when they developed \LTP{}, as such
reduction does not match the operational behavior of the CSP operator in
implementations. However, as an equational system for multi-stage programs,
the rule \QPercent{} makes sense.
\item \renewcommand{\Vpn}{\text{Vector}\ n}
By using implicit type-level CSP, the type of vadd
could have been written
\(\Pi n:\I.\F\beta.\TW_\beta(\Vpn\to\Vpn\to\Vpn)\).  In this type,
Vector\ $n$ is given kind at stage \(\varepsilon\) and type-level CSP
implicitly lifts it to stage \(\beta\).  However, if a type-level
constant takes two or more arguments from different stages, term-level
CSP is necessary.  A matrix type (indexed by the numbers of columns
and rows) would be such an example.
\end{enumerate}

\subsection{Staged Semantics}

The reduction given above is full reduction and any redexes---even
under $\TB_\alpha$---can be reduced in an arbitrary order.
Following previous work~\cite{Hanada2014},
we introduce (small-step, call-by-value) staged semantics,
where only $\beta$-reduction or $\Lambda$-reduction at stage $\varepsilon$ or the outer-most $\blacklozenge$-reduction are allowed,
modeling an implementation.

We start with the definition of values. Since terms under quotations are
not executed, the grammar is indexed by stages.

\begin{definition}[Values]
  The family $V^A$ of sets of values, ranged over by $v^A$,
  is defined by the following grammar.  In the grammar, $A' \neq \varepsilon$ is assumed.
  \begin{align*}
    v^\varepsilon \in V^\varepsilon & ::= \lambda x:\tau.M \mid\ \TB_\alpha v^\alpha \mid \Lambda\alpha.v^\varepsilon                                             & \\
    v^{A'} \in V^{A'}               & ::= x \mid \lambda x:\tau.v^{A'} \mid v^{A'}\ v^{A'} \mid\ \TB_\alpha v^{A'\alpha} \mid \Lambda\alpha.v^{A'} \mid v^{A'}\ B & \\
                                    & \quad\   \mid\ \TBL_\alpha v^{A''} (\text{if } A' = A''\alpha \text{ for some } \alpha, A'' \neq \varepsilon)               & \\
                                    & \quad\   \mid\ \%_\alpha v^{A''} (\text{if } A' = A''\alpha)
  \end{align*}
\end{definition}

Values at stage $\varepsilon$ are $\lambda$-abstractions, quoted pieces of code,
or $\Lambda$-abstractions.  The body of a $\lambda$-abstraction can
be any term but the body of $\Lambda$-abstraction has to be a value.  It
means that the body of $\Lambda$-abstraction must be evaluated.  The
side condition for $\TBL_\alpha v^{A'}$ means that escapes in a value
can appear only under nested quotations because an escape under a
single quotation will splice the code value into the surrounding
code.  See Hanada and Igarashi~\cite{Hanada2014} for details.

In order to define staged reduction, we define redex and evaluation contexts.

\begin{definition}[Redex]
  The sets of $\varepsilon$-redexes (ranged over by $R^\varepsilon$) and $\alpha$-redexes (ranged over by $R^\alpha$) are defined by the following grammar.
  \begin{align*}
     & R^\varepsilon ::= (\lambda x:\tau.M)\ v^\varepsilon \mid (\Lambda\alpha.v^\varepsilon)\ \varepsilon \\
     & R^\alpha      ::=\ \TBL_\alpha \TB_\alpha v^\alpha                                                         \\
  \end{align*}
\end{definition}

\begin{definition}[Evaluation Context]
  Let $B$ be either \(\varepsilon\) or a stage variable \(\beta\).
  The family of sets $ECtx^A_B$ of evaluation contexts, ranged over by $E^A_B$, is defined by the following grammar (in which $A'$ stands for a non-empty stage).
  \begin{align*}
    E^\varepsilon_B \in ECtx^\varepsilon_B & ::= \square\ (\text{if\ } B = \varepsilon)
    \mid E^\varepsilon_B\ M \mid v^\varepsilon\ E^\varepsilon_B \mid \TB_\alpha E^\alpha_B
    \mid \Lambda\alpha.E^\varepsilon_B \mid E^\varepsilon_B\ A                                                                                    \\
    E^{A'}_B \in ECtx^{A'}_B               & ::= \square\ (\text{if } A' = B) \mid \lambda x:\tau.E^{A'}_B \mid E^{A'}_B\ M \mid v^{A'}\ E^{A'}_B \\
                                           & \mid \TB_\alpha E^{A'\alpha}_B \mid \TBL_\alpha E^{A}_B \ (\text{where } A\alpha = A')               \\
                                           & \mid \Lambda\alpha.E^{A'}_B \mid E^{A'}_B\ A \mid \%_\alpha\ E^{A}_B \ (\text{where } A\alpha = A')
  \end{align*}
\end{definition}

The subscripts $A$ and $B$ in $E^A_B$ stand for the stage of the evaluation
context and of the hole, respectively. The grammar represents that staged
reduction is left-to-right and call-by-value and terms under \(\Lambda\) are
reduced. Terms at non-$\varepsilon$ stages are not reduced, except
redexes of the form \(\TBL_\alpha \TB_\alpha v^\alpha\) at stage \(\alpha\).
A few examples of
evaluation contexts are shown below:
\begin{align*}
  \square\ (\lambda x:\I.x)                  & \in  ECtx^\varepsilon_\varepsilon \\
  \Lambda\alpha.\square\ \varepsilon            & \in ECtx^\varepsilon_\varepsilon  \\
  \TBL_\alpha \TB_\alpha \TBL_\alpha \square & \in ECtx^\alpha_\varepsilon
\end{align*}
We write $E^A_B[M]$ for the term obtained by filling the hole $\square$ in $E^A_B$ by $M$.

Now we define staged reduction using the redex and evaluation contexts.

\begin{definition}[Staged Reduction]\sloppy
  The staged reduction relation, written $M \longrightarrow_s N$, is defined by
  the least relation closed under the rules below.
  \begin{align*}
    E^A_\varepsilon [(\lambda x:\tau.M)\ v^\varepsilon] & \longrightarrow_s E^A_\varepsilon[M[x\mapsto v^\varepsilon]]      \\
    E^A_\varepsilon [(\Lambda\alpha.v^\varepsilon)\ A]  & \longrightarrow_s E^A_\varepsilon[v^\varepsilon[\alpha\mapsto A]] \\
    E^A_\alpha [\TBL_\alpha \TB_\alpha v^\alpha]        & \longrightarrow_s E^A_\alpha[v^\alpha]                            \\
  \end{align*}
\end{definition}

This reduction relation reduces a term in a deterministic,
left-to-right, call-by-value manner.  An application of an abstraction
is executed only at stage \(\varepsilon\) and only a quotation at
stage \(\varepsilon\) is spliced into the surrounding code---notice
that, if \(\TB_\alpha v^\alpha\) is at stage \(\varepsilon\), then the
redex \(\TBL_\alpha \TB_\alpha v^\alpha\) is at stage \(\alpha\).
In other words, terms in brackets are not evaluated until the terms are run
and arguments of a function are evaluated before the application.
We show an example of staged reduction.
Underlines show the redexes.
\begin{align*}
                    & (\Lambda\alpha.(\TB_\alpha \underline{\TBL_\alpha \TB_\alpha ((\lambda x:\I.x)\ 10))})\ \varepsilon \\
  \longrightarrow_s & \underline{(\Lambda\alpha.(\TB_\alpha ((\lambda x:\I.x)\ 10)))\ \varepsilon}                        \\
  \longrightarrow_s & \underline{(\lambda x:\I.x)\ 10}                                                                    \\
  \longrightarrow_s & 10
\end{align*}


\section{Properties of \LMD \label{sec:properties}}

In this section, we show the basic properties of \LMD: preservation, strong
normalization, confluence for full reduction, and progress for staged
reduction.


The Substitution Lemma in \LMD{} is a little more complicated than usual
because there are eight judgment forms and two kinds of substitution.
The Term Substitution Lemma states that term substitution $[z \mapsto M]$
preserves derivability of judgments. The Stage Substitution Lemma
states similarly for stage substitution $[\alpha\mapsto A]$.

We let $\mathcal{J}$ stand for the judgments $K \iskind @A$, $\tau::K@A$,
$M:\tau@A$, $K \E J @ A$, $\tau \E \sigma @ A$, and
$M \E N : \tau @ A$.  Substitutions $\mathcal{J}[z \mapsto M]$ and
$\mathcal{J}[\alpha \mapsto A]$ are defined in a straightforward
manner.  Using these notations, the two substitution lemmas are stated as follows:

We proved the next two leammas by simultaneous induction on derivations.

\begin{lemma}[Term Substitution]
  If $\G, z:\xi@B, \D \V \mathcal{J}$ and $\G\V N:\xi @B$, then $\G, (\D[z \mapsto N]) \V \mathcal{J}[z \mapsto N]$.  Similarly, if $\V \G, z:\xi@B, \D$ and
  $\G\V N:\xi @B$, then $\V \G, (\D[z \mapsto N])$.
\end{lemma}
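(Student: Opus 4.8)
The plan is to prove both parts by a single simultaneous induction on derivations: on the derivation of $\G, z:\xi@B, \D \V \mathcal{J}$ for the first part and on the derivation of $\V \G, z:\xi@B, \D$ for the second. A simultaneous induction is forced for two reasons. First, the eight judgment forms are mutually recursive (e.g.\ \TAbs{} has a kinding premise and \TConv{} a type-equivalence premise), so each case may appeal to the induction hypothesis at a different judgment form. Second, context well-formedness and kinding refer to each other: the rule forming $\V \G', x:\tau@A$ has the kinding premise $\G' \V \tau::*@A$, so the second part consumes the first on strictly smaller kinding derivations, while the first part consumes the second to re-establish well-formedness of the substituted context.

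Before the induction I would record two syntactic commutation facts, each by routine structural induction on the object $\mathcal{O}$ being substituted, where $\mathcal{O}$ ranges over terms, types, and kinds. For $x \neq z$ and $x \notin \FV(N)$, term--term substitution commutes as $\mathcal{O}[x\mapsto N'][z\mapsto N] = \mathcal{O}[z\mapsto N][x\mapsto N'[z\mapsto N]]$; and for $\alpha \notin \FTV(N)$, term and stage substitution commute as $\mathcal{O}[\alpha\mapsto C][z\mapsto N] = \mathcal{O}[z\mapsto N][\alpha\mapsto C]$. These are exactly what is needed to reconcile the substituted conclusion with the result produced by rules whose conclusion already carries a substitution, namely \TApp{} and \QBeta{} (term--term) and \TIns{} and \QLambda{} (term--stage). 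I would also assume Weakening (if $\G \V \mathcal{J}$ and $\G, \D'$ is well-formed, then $\G, \D' \V \mathcal{J}$), proved separately by an easier induction.

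The bulk of the cases are mechanical: for each rule, apply the induction hypothesis to every premise and reassemble the same rule, pushing $[z\mapsto N]$ through the relevant former. Because $z$ is a term variable, it is untouched by the prefixes $\TW_\alpha, \TB_\alpha, \TBL_\alpha, \%_\alpha$ and by the stages $A, B, C$, so the multi-stage rules \TTB, \TTBL, \TCsp, \KCsp, \QTBLTB, and \QPercent{} go through once substitution is commuted past the prefix. In \TGen{} I would invoke the Barendregt convention to assume the bound $\alpha$ fresh, in particular $\alpha \notin \FTV(N)$, so the side condition $\alpha \notin \FTV(\G,\D[z\mapsto N]) \cup \FTV(A)$ survives. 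The one load-bearing case is \TVar{} with the variable equal to $z$: here $\mathcal{J} = (z:\xi@B)$, and since $\xi$ is declared in $\G$ we have $z \notin \FV(\xi)$, so the goal reduces to $\G, \D[z\mapsto N] \V N : \xi@B$, obtained by weakening the hypothesis $\G \V N : \xi@B$ along $\D[z\mapsto N]$; the required well-formedness of $\G, \D[z\mapsto N]$ is delivered by the second part. For the second part itself I would induct on the length of $\D$, using the first part on the kinding premise of the context-extension rule.

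I expect the main obstacle to be organizational rather than conceptual: keeping the simultaneous induction well-founded where the \TVar{} case of the first part appeals to the second part for well-formedness, while the second part appeals back to the first on the smaller kinding derivations of the types recorded in $\D$. What makes term substitution comparatively benign here, in contrast to stage substitution, is that a variable $z$ declared at stage $B$ can only be typed at stages of the form $B\gamma_1\cdots\gamma_k$, always through iterated \TCsp; substitution therefore replaces $z$ by $\%_{\gamma_k}\cdots\%_{\gamma_1} N$ up to the commutation facts, and each \TCsp{} layer is re-derived by the induction from the single hypothesis $\G \V N : \xi@B$, so no stage mismatch can arise. The remaining effort is just discharging the commutation obligations uniformly across the many judgment forms.
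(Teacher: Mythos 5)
Your proposal is correct and follows essentially the same route as the paper's proof: a simultaneous induction on derivations with case analysis on the last rule, where the critical \TVar{} case with $y = z$ is discharged exactly as in the paper (well-formedness gives $z \notin \FV(\xi)$, then Weakening lifts $\G \V N : \xi @ B$ to the extended context), and the \TApp{}/\TIns{}-style cases are reassembled after commuting substitutions. The only difference is one of explicitness—you state the substitution-commutation facts and the Barendregt-convention freshness assumptions as separate obligations, whereas the paper absorbs them silently into phrases like ``this is equivalent to''—which is a refinement, not a different argument.
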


\begin{lemma}[Stage Substitution]
	If $\G \V \mathcal{J}$, then $\G[\beta\mapsto B] \V \mathcal{J}[\beta\mapsto B]$.  Similarly, if $\V \G$, then $\V \G[\beta\mapsto B]$.
\end{lemma}


The following Inversion Lemma is needed to prove the main theorems.
As usual~\cite{TAPL}, the Inversion Lemma enables us to infer the types of subterms of a term from the shape of the term.

\begin{lemma}[Inversion]\ 
	\begin{enumerate}
		\item If $\G \V (\lambda x:\sigma.M) : \rho$ then there are $\sigma'$ and $\tau'$ such that
		      $\rho = \Pi x:\sigma'.\tau'$, $\G \V \sigma \E \sigma'@A$ and $\G ,x:\sigma'@A\V M:\tau'@A$.
		\item If $\G \V \TB_\alpha M : \tau@A$ then 
		      there is $\sigma$ such that $\tau = \TW_\alpha \sigma$ and $\G \V M : \sigma@A$.
		  \item If $\G \V \Lambda\alpha.M : \tau$ then 
		  there is $\sigma$ such that $\sigma = \forall\alpha.\sigma$ and $\G \V M : \sigma@A$.
	\end{enumerate}
\end{lemma}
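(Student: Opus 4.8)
The plan is to prove all three parts simultaneously by induction on the derivation of the typing judgment. The key structural fact is that, for a subject of each given form, only two rules can be the last rule applied: the matching introduction rule---\TAbs{} for $\lambda x{:}\sigma.M$, \TTB{} for $\TB_\alpha M$, and \TGen{} for $\Lambda\alpha.M$---and the conversion rule \TConv. Every other typing rule has a conclusion whose subject has a different outermost form, so those cases are vacuous. When the last rule is the matching introduction rule, the required witnesses are read off directly from its premises, with the type equivalence instantiated by reflexivity: for instance in part~1 we take $\rho = \Pi x{:}\sigma.\tau$, $\sigma' = \sigma$, $\tau' = \tau$, and $\G \V \sigma \E \sigma :: *@A$; in part~2 the premise of \TTB{} gives $\G \V M : \sigma @ A\alpha$ directly.

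The substantive case is \TConv, where $\G \V M_0 : \rho @ A$ is derived from $\G \V M_0 : \rho_0 @ A$ together with $\G \V \rho_0 \E \rho :: K @ A$. The induction hypothesis applied to the first premise decomposes $\rho_0$ (e.g.\ $\rho_0 = \Pi x{:}\sigma_0'.\tau_0'$ in part~1, $\rho_0 = \TW_\alpha\sigma_0$ in part~2) and supplies the corresponding component derivations. To push this decomposition across $\rho_0 \E \rho$, I would invoke a \emph{constructor-injectivity} property of type equivalence: a $\Pi$-type is equivalent only to a $\Pi$-type, a code type $\TW_\alpha(\cdot)$ only to a code type with the same stage variable $\alpha$, and a $\forall$-type only to a $\forall$-type, and moreover their immediate subcomponents are pairwise equivalent. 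This yields the required shape of $\rho$ and the component equivalences $\sigma_0' \E \sigma'$, $\tau_0' \E \tau'$ (part~1) or $\sigma_0 \E \sigma$ (parts~2,~3). Transitivity then gives $\sigma \E \sigma'$ in part~1, and re-applying \TConv{} to the body retypes $M$ at the desired component type; in part~1 a routine context-conversion lemma additionally moves the body derivation from the binding $x{:}\sigma_0'@A$ to $x{:}\sigma'@A$. Parts~2 and~3 need no context conversion, as neither construct binds a term variable.

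I expect the main obstacle to be constructor injectivity itself, which resists a direct induction on the equivalence derivation because transitivity interposes an unknown intermediate type. What makes it tractable in \LMD{} is the \emph{absence} of type-level abstraction, and hence of type-level $\beta$-redexes: reduction and equivalence act only on term subcomponents---the argument $M$ of an application $\tau\ M$, and terms nested more deeply---and never change the outermost type former nor the subscript on $\TW_\alpha$. I would exploit this by characterizing type equivalence \emph{structurally}: two types are equivalent precisely when they share the same skeleton of type constructors and their corresponding term leaves are term-equivalent. Concretely, I define such a structural relation, verify that it is reflexive, symmetric, transitive, and closed under all the congruence rules together with the axioms \QTApp{} and \QTCsp{} (the latter merely lifts the stage and so leaves the skeleton fixed), and conclude that the declarative type equivalence is contained in it. Injectivity and the absence of confusion between distinct constructors are then immediate. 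Alternatively, the same conclusions follow from confluence of full reduction---which is established independently---combined with the observation that reduction preserves the type skeleton.
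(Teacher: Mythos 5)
Your proof has the same top-level structure as the paper's: induction on the typing derivation, observing that only the matching introduction rule (\TAbs, \TTB, \TGen) and \TConv{} can conclude a judgment with the given subject, with all the real work concentrated in the \TConv{} case, which needs an inversion/injectivity principle for type equivalence. Where you genuinely diverge is in how that principle is obtained. The paper does not prove a standalone constructor-injectivity lemma via a structural characterization; instead it strengthens each item of the Inversion Lemma with \emph{one-sided} equivalence-inversion statements (e.g., if $\G \V \rho \E (\Pi x:\sigma.\tau) :: K @A$ then $\rho = \Pi x:\sigma'.\tau'$ with component-wise equivalences, together with the symmetric orientation) and proves typing inversion and equivalence inversion simultaneously by induction on derivations. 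This formulation defuses exactly the obstacle you identified: your claim that injectivity ``resists a direct induction on the equivalence derivation because transitivity interposes an unknown intermediate type'' holds only for the two-sided formulation in which both sides are known to be $\Pi$-types; with only one side's shape fixed, the \QTTrans{} case goes through by applying the induction hypothesis first to the premise whose $\Pi$-shaped side is known (which pins down the shape of the intermediate type) and then to the other premise, finishing with transitivity of the component equivalences. So the paper gets injectivity essentially for free from the induction you already set up. Your primary detour---containing declarative equivalence in a skeleton-preserving structural relation---is workable and yields a reusable characterization, but it is heavier than necessary and you still owe the step recovering \emph{declarative} component equivalences from the structural relation. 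Your second alternative, deriving injectivity from confluence, is more doubtful: in \LMD{} type equivalence is not the convertibility relation generated by reduction, because \QPercent{} and the implicit type-level CSP rules (\QTCsp{} and \textsc{QK-Csp}) have no reduction counterparts, so equivalent types need not share a common reduct and confluence alone does not characterize the equivalence.
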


\begin{proof}
  Each item is strengthened by statements about type equivalence.
  For example, the first statement is augmented by
  \begin{quotation}
    If $\G \V \rho \E (\Pi x:\sigma.\tau) : K @A$, then there exist
    $\sigma'$ and $\tau'$ such that $\rho = \Pi x:\sigma'.\tau'$ and
    $\G \V \sigma \E \sigma' : K @A$ and
    $\G, x:\sigma@A\V \tau \E \tau' : J @A$.
  \end{quotation}
  and its symmetric version.  Then, they are proved simultaneously by induction on derivations.
  Similarly for the others.
  \qed
\end{proof}


Thanks to Term/Stage Substitution and Inversion, we can prove Preservation easily.

\begin{theorem}[Preservation]
	If $\G\V M:\tau @A$ and $M \longrightarrow M'$, then $\G\V M':\tau @A$.
\end{theorem}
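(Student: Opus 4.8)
The plan is to prove the statement by induction on the derivation of $M \longrightarrow M'$, that is, on the structure of the compatible closure defining $\longrightarrow_\beta$, $\longrightarrow_\blacklozenge$, and $\longrightarrow_\Lambda$. There are two kinds of cases: the three contraction axioms (base cases) and the congruence rules, one for each argument position of each term former (inductive cases). In every case the strategy is the same: use the Inversion Lemma to recover typings for the relevant subterms of $M$---inverting through any intervening uses of \TConv, which is precisely what the type-equivalence strengthening built into the Inversion Lemma supplies---then reassemble a derivation of $\G \V M' : \tau @ A$, inserting \TConv{} wherever the dependency of the result type on a reduced subterm forces a change of type.

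For the base cases I would argue as follows. If $M = (\lambda x:\sigma.M_0)\,N \longrightarrow_\beta M_0[x \mapsto N]$, then inverting \TApp{} gives $\G \V \lambda x:\sigma.M_0 : \Pi x:\sigma_0.\tau_0 @ A$ and $\G \V N : \sigma_0 @ A$ with $\tau = \tau_0[x \mapsto N]$, and inverting the abstraction (Inversion, item 1) gives $\G, x:\sigma_0@A \V M_0 : \tau_0 @ A$; the Term Substitution Lemma then yields $\G \V M_0[x \mapsto N] : \tau_0[x \mapsto N] @ A$, which is exactly $\G \V M_0[x \mapsto N] : \tau @ A$. If $M = \TBL_\alpha \TB_\alpha M_0 \longrightarrow_\blacklozenge M_0$, then the stage has the form $A = A'\alpha$, inverting \TTBL{} gives $\G \V \TB_\alpha M_0 : \TW_\alpha\tau @ A'$, and inverting \TTB{} gives $\G \V M_0 : \tau @ A'\alpha$, i.e. $\G \V M_0 : \tau @ A$. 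If $M = (\Lambda\alpha.M_0)\,B \longrightarrow_\Lambda M_0[\alpha \mapsto B]$, then inverting \TIns{} gives $\G \V \Lambda\alpha.M_0 : \forall\alpha.\sigma @ A$ with $\tau = \sigma[\alpha \mapsto B]$, and inverting \TGen{} (Inversion, item 3) gives $\G \V M_0 : \sigma @ A$ together with the side condition $\alpha \notin \FTV(\G) \cup \FTV(A)$; the Stage Substitution Lemma then gives $\G[\alpha \mapsto B] \V M_0[\alpha \mapsto B] : \sigma[\alpha \mapsto B] @ A[\alpha \mapsto B]$, and the side condition collapses $\G[\alpha \mapsto B]$ to $\G$ and $A[\alpha \mapsto B]$ to $A$, producing $\G \V M_0[\alpha \mapsto B] : \tau @ A$.

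For the congruence cases I would invert the typing of $M$, apply the induction hypothesis to the reduced subterm, and reapply the same typing rule. Most positions are immediate because the reduced subterm keeps its type and stage. The one position that needs care---and which I expect to be the main obstacle---is the argument of an application: when $N \longrightarrow N'$ inside $M_0\,N$, reapplying \TApp{} to the reduct gives type $\tau_0[x \mapsto N']$, whereas the required type is $\tau = \tau_0[x \mapsto N]$, which mentions the old argument. Closing this gap requires an auxiliary fact, namely that a single reduction step is contained in the term-equivalence judgment---the axioms \QBeta, \QLambda, and \QTBLTB{} are exactly the equational counterparts of the three contractions, and the omitted congruence rules for term equivalence lift them to redexes in context---together with preservation of equivalence under substitution into a type, so that $\G \V \tau_0[x \mapsto N] \E \tau_0[x \mapsto N'] :: K @ A$ and a single application of \TConv{} finishes the case. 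The same phenomenon occurs if a redex lies inside the type annotation of a $\lambda$-abstraction, where the domain of the inferred $\Pi$-type changes to an equivalent one and is again absorbed by \TConv. Once this interaction between reduction and the conversion relation is established, every remaining congruence case is a routine application of the induction hypothesis followed by at most one \TConv.
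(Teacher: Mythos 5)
Your proof is correct, and it handles the three redex cases exactly as the paper does: inversion through possible \TConv{} steps (the strengthened Inversion Lemma), then the Term Substitution Lemma for $\beta$, the $\TW$-inversion for $\blacklozenge$, and the Stage Substitution Lemma plus the $\alpha\notin\FTV(\G)\cup\FTV(A)$ side condition for $\Lambda$. The structural difference is that you induct on the derivation of $M \longrightarrow M'$, whereas the paper splits the statement into three separate theorems and inducts on the \emph{typing} derivation for each. These are near-dual decompositions; the paper's choice makes the \TConv{} case of the induction trivial (the IH applies under the conversion), while yours pushes all conversion-handling into the inversion lemmas, which the paper's equivalence-strengthened Inversion Lemma indeed supports. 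The more substantive difference is in the congruence case for an application argument, $M_0\,N \longrightarrow M_0\,N'$: the paper's appendix disposes of it with ``from the induction hypothesis and \TApp,'' which silently ignores that the resulting type is $\tau_0[x\mapsto N']$ rather than the required $\tau_0[x\mapsto N]$ --- precisely the obstacle you identify. Your resolution (a lemma that one-step reduction is contained in term equivalence, via \QBeta, \QLambda, \QTBLTB{} and the compatibility rules; functionality of type-level substitution with respect to equivalent terms, provable from \QTApp{} and reflexivity by induction on the type; then one \TConv) is the standard and correct repair, and it is an auxiliary development the paper does not carry out at all. So your route is not only sound but more honest about where the dependent typing actually bites; the cost is that you must state and prove those two auxiliary lemmas, which your sketch identifies but does not discharge.
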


\begin{proof}
	First, there are three cases for $M \longrightarrow M'$.
	They are $M \longrightarrow_\beta M'$, $M \longrightarrow_\Lambda M'$, and $M \longrightarrow_\blacklozenge M'$.
	For each case, we can use straightforward induction on typing derivations.
        \qed
\end{proof}


Strong Normalization is also an important property, which guarantees that
no typed term has an infinite reduction sequence.
Following standard proofs (see, e.g., \cite{harper1993framework}), we prove this theorem by translating \LMD to the simply typed lambda calculus.

\begin{theorem}[Strong Normalization]
	If $\G\V M_1:\tau@A$ then there is no infinite sequence $(M_i)_{i\ge1}$ of terms such that
	$M_i \longrightarrow M_{i+1}$ for $i\ge 1$.
\end{theorem}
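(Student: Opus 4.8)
The plan is to follow the standard method for dependently typed systems (as in \cite{harper1993framework}) and reduce the claim to strong normalization of the simply typed $\lambda$-calculus (STLC), which I take as known. I would define a type-erasure $(\cdot)^\circ$ that forgets both the dependency and all staging information: a dependent product $\Pi x{:}\tau.\sigma$ erases to the simple arrow $\tau^\circ \to \sigma^\circ$, a type-family application $\tau\ M$ erases to $\tau^\circ$ (the index is dropped), and a type-level constant $X$ erases to a base type; at the term level $\lambda x{:}\tau.M$ erases to $\lambda x{:}\tau^\circ.M^\circ$, application to application, and every staging construct is erased to its body, i.e. $(\TB_\alpha M)^\circ = (\TBL_\alpha M)^\circ = (\%_\alpha M)^\circ = (\Lambda\alpha.M)^\circ = M^\circ$ and $(M\ A)^\circ = M^\circ$; correspondingly $(\TW_\alpha \tau)^\circ = (\F\alpha.\tau)^\circ = \tau^\circ$. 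Contexts and signatures erase pointwise, discarding stages.

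Two commutation properties drive the argument, both by routine induction. First, erasure commutes with term substitution, $(M[x\mapsto N])^\circ = M^\circ[x\mapsto N^\circ]$; this uses that substituting into the dropped index terms cannot change a type's erasure, i.e. $(\tau[x\mapsto N])^\circ = \tau^\circ$. Second, erasure is invariant under stage substitution, $(M[\alpha\mapsto A])^\circ = M^\circ$ (and $(\tau[\alpha\mapsto A])^\circ=\tau^\circ$), since expanding or renaming stage subscripts only affects the discarded staging layer. Here it is important that full reduction, being the compatible closure over the term grammar, never reduces \emph{inside} a type annotation (types form a separate syntactic category, with no reduction of their own within $\longrightarrow$); hence index terms are frozen and dropping them is sound. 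From the first property, a $\beta$-redex maps to a genuine STLC $\beta$-redex, $((\lambda x{:}\tau.M)\ N)^\circ \longrightarrow_\beta (M[x\mapsto N])^\circ$, whereas using $\blacklozenge$- and $\Lambda$-reduction both $(\TBL_\alpha\TB_\alpha M)^\circ = M^\circ$ and $((\Lambda\alpha.M)\ A)^\circ = M^\circ = (M[\alpha\mapsto A])^\circ$ hold as \emph{equalities}, so those two steps leave the erasure unchanged.

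Next I would show that erasure preserves typing: if $\G \V M:\tau@A$ then $\G^\circ \vdash M^\circ : \tau^\circ$ in STLC. Because the eight judgment forms are mutually recursive, this is established simultaneously with the companion statement that equivalent types have identical erasures---if $\G\V\tau\E\sigma::K@A$ then $\tau^\circ = \sigma^\circ$---which is exactly what discharges \TConv. Each typing rule erases to the corresponding STLC rule essentially by inspection (the staging rules \TTB, \TTBL, \TGen, \TIns, \TCsp{} all become identities on erasures, since e.g. $\TB_\alpha M$ and $\TW_\alpha\tau$ erase to $M^\circ$ and $\tau^\circ$), and the stage-carrying conclusions such as $\tau[x\mapsto N]$ and $\tau[\alpha\mapsto A]$ are handled by the two commutation lemmas. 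Granting this, take any reduction sequence $M_1 \longrightarrow M_2 \longrightarrow \cdots$ from a well-typed $M_1$; applying $(\cdot)^\circ$ turns each $\beta$-step into one STLC step and each $\blacklozenge$- or $\Lambda$-step into an equality, so the images form a reduction sequence out of the STLC-typable term $M_1^\circ$. Since STLC is strongly normalizing, $M_1^\circ$ admits only finitely many steps, whence the original sequence contains only finitely many $\beta$-steps.

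It remains to rule out an infinite tail consisting solely of $\blacklozenge$- and $\Lambda$-steps. For this I would use the lexicographic measure $(\sharp_\Lambda(M), |M|)$, where $\sharp_\Lambda(M)$ counts occurrences of $\Lambda$ in $M$ and $|M|$ is term size. A $\Lambda$-step $(\Lambda\alpha.M)\ A \longrightarrow M[\alpha\mapsto A]$ consumes exactly one $\Lambda$ binder and, since stage substitution introduces no new $\Lambda$, strictly decreases the first component; a $\blacklozenge$-step $\TBL_\alpha\TB_\alpha M \longrightarrow M$ leaves $\sharp_\Lambda$ unchanged while strictly decreasing the size. Thus $\longrightarrow_\blacklozenge \cup \longrightarrow_\Lambda$ is strongly normalizing on its own, and combining the two facts yields the theorem: an infinite sequence would have only finitely many $\beta$-steps and hence an infinite suffix of $\blacklozenge$/$\Lambda$-steps, contradicting the measure. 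I expect the main obstacle to be the conversion lemma $\G\V\tau\E\sigma \Rightarrow \tau^\circ=\sigma^\circ$ together with its simultaneous induction with typing preservation: the nonstandard equivalences of \LMD---implicit type-level CSP (\QKCsp, \QTCsp) and the CSP-elimination rule \QPercent---must be threaded through the mutual recursion carefully, even though each is ultimately stage- or index-erasing and therefore harmless to the erasure.
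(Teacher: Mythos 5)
Your proposal takes essentially the same route as the paper's own proof: an erasure into the simply typed $\lambda$-calculus (the paper's $(\cdot)^\natural$, identical to your $(\cdot)^\circ$), proved to preserve typing---with type equivalence collapsing to syntactic equality of erasures so that \TConv{} is discharged---and to send $\beta$-steps to genuine STLC steps while $\blacklozenge$- and $\Lambda$-steps erase to equalities, so SN of STLC bounds the number of $\beta$-steps. The one place you genuinely improve on the paper is the residual tail of non-$\beta$ steps: the paper asserts that all non-$\beta$ reductions shrink term size, which fails for $\Lambda$-reduction when a stage $A$ of length $\geq 2$ is substituted under several $\alpha$-subscripted operators (each $\TB_\alpha$ expands to $|A|$ brackets), whereas your lexicographic measure $(\sharp_\Lambda(M),|M|)$ handles exactly this case correctly.
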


\begin{proof}
	In order to prove this theorem, we define a translation $(\cdot)^\natural$ from \LMD\ to the simply typed lambda calculus.
	Second, we prove the $\natural$-translation preserves typing and reduction.
	Then, we can prove Strong Normalization of \LMD from Strong Normalization of the simply typed lambda calculus.
 \qed
\end{proof}

Confluence is a property that any reduction sequences from one typed term converge.
Since we have proved Strong Normalization, we can use Newman's Lemma~\cite{DBLP:books/daglib/0092409} to prove Confluence.

\begin{theorem}[Confluence]
	For any term $M$, if $M \longrightarrow^* M'$ and $M \longrightarrow^* M''$ then
	there exists $M'''$ that satisfies $M' \longrightarrow^* M'''$ and $M'' \longrightarrow^* M'''$.
\end{theorem}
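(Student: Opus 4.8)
The plan is to invoke Newman's Lemma, exactly as the surrounding text indicates. Strong Normalization has already been established, so the reduction relation $\longrightarrow$ is terminating (on well-typed terms), and for a terminating relation local confluence implies confluence. The entire theorem therefore reduces to proving \emph{local confluence}: whenever $M \longrightarrow M_1$ and $M \longrightarrow M_2$ by two single steps, there is a common reduct $M_3$ with $M_1 \longrightarrow^* M_3$ and $M_2 \longrightarrow^* M_3$.

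Before the case analysis I would set up two ``substitution is compatible with reduction'' lemmas, one for each kind of substitution: first, if $N \longrightarrow N'$ then $M[x \mapsto N] \longrightarrow^* M[x \mapsto N']$, obtained by contracting each copy of the redex that the substitution may duplicate; second, if $M \longrightarrow M'$ then $M[\alpha \mapsto A] \longrightarrow^* M'[\alpha \mapsto A]$. I would also record the standard commutation equalities between the two substitutions, notably $M[x \mapsto N][\alpha \mapsto A] = M[\alpha \mapsto A][x \mapsto N[\alpha \mapsto A]]$ and the analogous identity for two stage substitutions, which are needed to line up the result of a $\beta$- or $\Lambda$-step with the result of substituting first.

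Then I would prove local confluence by inspecting the two contracted redexes. The three rewrite rules have pairwise distinct head constructs at their left-hand sides --- a $\beta$-redex $(\lambda x:\tau.M)\,N$, an escaped bracket $\TBL_\alpha \TB_\alpha M$, and a stage application $(\Lambda\alpha.M)\,A$ --- and none of the three left-hand sides unifies with a non-variable subterm of another, so the system has \emph{no critical pairs}. Consequently only two configurations can occur: the two redexes are disjoint, in which case the steps plainly commute; or one redex lies strictly inside the other, in which case joinability is either immediate (when the outer contraction merely copies or discards the subterm containing the inner redex, e.g.\ the $\blacklozenge$-rule) or follows from the compatibility lemmas above (when the inner redex sits in an argument or body that a $\beta$- or $\Lambda$-step substitutes).

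The step I expect to be hardest is the nested case in which the outer contraction is a $\Lambda$-reduction $(\Lambda\alpha.M)\,A \longrightarrow M[\alpha \mapsto A]$ and the inner redex interacts with the stage substitution $[\alpha \mapsto A]$. This substitution is not a mere renaming: when $A = \alpha_1\cdots\alpha_n$ it rewrites $\TB_\alpha$ into $\TB_{\alpha_1}\cdots\TB_{\alpha_n}$ and $\TBL_\alpha$, $\%_\alpha$ correspondingly (with the order reversed), so a single $\blacklozenge$-redex $\TBL_\alpha \TB_\alpha N$ occurring in $M$ can fan out into a \emph{sequence} of $n$ nested $\blacklozenge$-redexes in $M[\alpha \mapsto A]$, and substitution may even create fresh $\blacklozenge$-redexes by identifying previously distinct stage variables. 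This is precisely why the second compatibility lemma must be stated with $\longrightarrow^*$ rather than $\longrightarrow$, and why Newman's Lemma --- which tolerates multi-step joins --- is the appropriate tool rather than a one-step diamond argument. I would discharge this case by induction on the length $n$ of $A$, peeling off one $\TBL_{\alpha_i}\TB_{\alpha_i}$ cancellation at a time and closing the diagram with the commutation equalities, while explicitly checking the boundary case $A = \varepsilon$, where $\TB_\varepsilon$, $\TBL_\varepsilon$, and $\%_\varepsilon$ collapse to the identity.
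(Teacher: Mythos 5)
Your proposal follows exactly the paper's route: Strong Normalization plus Newman's Lemma, with local confluence (Weak Church--Rosser) established by observing that the three rewrite rules have no critical pairs, so two redexes are always either disjoint or nested, and the diagram closes by multi-step joins. In fact your treatment of the nested case---where a stage substitution $[\alpha \mapsto A]$ fans a single $\TBL_\alpha\TB_\alpha$ redex into $n$ nested $\blacklozenge$-redexes, forcing the substitution-compatibility lemma to be stated with $\longrightarrow^*$---is considerably more careful than the paper's own two-line argument, which merely asserts that non-overlapping redexes can be reduced one after the other.
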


\begin{proof}
  We can easily show Weak Church-Rosser.  Use Newman's Lemma.
 \qed
\end{proof}

Now, we turn our attention to staged semantics.  First, the staged
reduction relation is a subrelation of full reduction, so Subject
Reduction holds also for the staged reduction.

\begin{theorem}
  If $M \longrightarrow_s M'$, then $M \longrightarrow M'$.
\end{theorem}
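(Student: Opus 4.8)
The relation $\longrightarrow$ is, by definition, the least \emph{compatible} relation closed under the three base rules for $\longrightarrow_\beta$, $\longrightarrow_\blacklozenge$, and $\longrightarrow_\Lambda$. The plan is to exploit this compatibility directly: every staged step factors as the contraction of a single base redex sitting inside an evaluation context, and compatibility lets that contraction propagate up through the context to the whole term. Concretely, I would first check that each staged redex is itself a base full redex, and then push the contraction outward through the surrounding evaluation context by induction on its structure.

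The proof proceeds by case analysis on which of the three staged rules was applied. In each case the step has the shape $E^A_B[R] \longrightarrow_s E^A_B[R']$ (with $B \in \{\varepsilon, \alpha\}$), where the redex--contractum pair $(R,R')$ is one of $((\lambda x:\tau.M)\,v^\varepsilon,\ M[x\mapsto v^\varepsilon])$, $((\Lambda\alpha.v^\varepsilon)\,A,\ v^\varepsilon[\alpha\mapsto A])$, or $(\TBL_\alpha \TB_\alpha v^\alpha,\ v^\alpha)$. Since values are in particular terms, each of these pairs is literally an instance of $\longrightarrow_\beta$, $\longrightarrow_\Lambda$, or $\longrightarrow_\blacklozenge$ respectively; hence $R \longrightarrow R'$ in all three cases.

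It then remains to show that $R \longrightarrow R'$ implies $E^A_B[R] \longrightarrow E^A_B[R']$ for every evaluation context $E^A_B$, which follows by a routine induction on the derivation of $E^A_B$ from the evaluation-context grammar. The base case is the hole $\square$ (where $B$ equals the hole's stage), for which $E^A_B[R] = R \longrightarrow R' = E^A_B[R']$. Every other production plugs the hole underneath a single term constructor --- application on either side ($E\,M$, $v\,E$, $E\,A$), abstraction $\lambda x:\tau.E$, stage abstraction $\Lambda\alpha.E$, quotation $\TB_\alpha E$, escape $\TBL_\alpha E$, or CSP $\%_\alpha E$ --- and for each, compatibility of $\longrightarrow$ supplies the inductive step immediately. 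The stage annotations and side conditions of the grammar only constrain \emph{where} the hole may occur; they are irrelevant to reducibility and so can be ignored here.

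I do not anticipate a genuine obstacle: compatibility of full reduction does essentially all the work, and the only thing to verify is that the evaluation-context grammar is built exclusively from term constructors under which $\longrightarrow$ is already closed. The single point worth stating explicitly is that the claim is purely syntactic --- it holds for arbitrary, not necessarily well-typed, terms --- so no appeal to typing, the substitution lemmas, or the value/redex invariants is needed beyond recognizing each staged redex as a base full redex.
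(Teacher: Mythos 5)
Your proof is correct and fills in exactly what the paper dismisses as ``Easy'': each staged redex--contractum pair is literally an instance of $\longrightarrow_\beta$, $\longrightarrow_\Lambda$, or $\longrightarrow_\blacklozenge$, and compatibility of full reduction propagates the contraction through the evaluation context by induction on its structure. This is the same (and essentially the only) argument the authors intend, so no further comparison is needed.
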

\begin{proof}
  Easy.
 \qed
\end{proof}

The following theorem Unique Decomposition ensures that every typed
term is either a value or can be uniquely decomposed to an evaluation
context and a redex, ensuring that a well-typed term is not
immediately stuck and the staged semantics is deterministic.

\begin{theorem}[Unique Decomposition]
  If $\G$ does not have any variable declared at stage $\varepsilon$ 
  and $\G \V M : \tau @ A$, then either
  \begin{enumerate}
  \item $ M \in V^A$, or
  \item $M$ can be uniquely decomposed into an evaluation context and a redex, that is, there uniquely exist $B, E^A_B$, and $R^B$ such that $M = E^A_B[R^B]$.
  \end{enumerate}
\end{theorem}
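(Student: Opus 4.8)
The plan is to proceed by induction on the derivation of $\G \V M : \tau@A$, proving the sharpened statement that \emph{exactly one} of the two alternatives holds; this yields both the progress-style existence of a decomposition and its uniqueness in a single pass. The case analysis is on the last typing rule, equivalently on the shape of $M$; the rule \TConv{} is transparent, since it leaves $M$ unchanged and is discharged directly by the induction hypothesis on its premise. Two ingredients drive the argument. The first is the Inversion Lemma, used to pin down the shape of a stage-$\varepsilon$ value from its type: a value of a $\Pi$-type must be a $\lambda$-abstraction, one of a $\forall$-type a $\Lambda$-abstraction, and one of a code type $\TW_\alpha\sigma$ a quotation $\TB_\alpha w$. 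The second is the hypothesis that $\G$ declares no variable at stage $\varepsilon$, which rules out the only genuinely stuck leaf, namely a variable used at $\varepsilon$.

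For the leaves, a variable typed by \TVar{} sits at a stage $A\neq\varepsilon$ by the hypothesis on $\G$, hence $x\in V^{A}$; a constant is a value at every stage. For the introduction forms I would apply the induction hypothesis to the immediate subterm at its stage and then either recognise a value-production of $V^{A}$ or lift a decomposition by prepending the matching context frame. The one short-circuit is $\lambda x{:}\sigma.N$ at stage $\varepsilon$, which is a value with $N$ arbitrary (no recursion); at a stage $A'\neq\varepsilon$ one instead recurses on $N$ with $\G$ extended by $x{:}\sigma@A'$, so the no-$\varepsilon$-variable condition is preserved, and the frame $\lambda x{:}\sigma.\square$ lies in $ECtx^{A'}_{B}$. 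For $\TB_\alpha N$ the recursion is on $N$ at the raised stage $A\alpha$ with frame $\TB_\alpha\square$; for $\Lambda\alpha.N$ the body is required to be a value even at $\varepsilon$, so one recurses on $N$ at $A$ with frame $\Lambda\alpha.\square$; and for $\%_\alpha N$ the body lives one stage lower with frame $\%_\alpha\square$. In each case the grammar supplies the frame at the correct stage.

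The substance is in the elimination forms—term application $M_1\,M_2$, stage application $M_1\,B$, and escape $\TBL_\alpha N$. I would apply the induction hypothesis to the principal subterm first; if it decomposes, prepend the appropriate frame ($\square\,M_2$, $v^\varepsilon\,\square$, $\square\,B$, or $\TBL_\alpha\square$) at the stage dictated by the grammar. When the principal subterm is a value the behaviour splits on the stage. At $\varepsilon$, Inversion forces a redex: $(\lambda x{:}\sigma.N)\,v^\varepsilon$ is a $\beta$-redex, a value of $\forall$-type is $\Lambda\alpha.w$ so $M_1\,B$ exposes a stage-application redex, and for escape a value of code type is a quotation $\TB_\alpha w^\alpha$, whence $\TBL_\alpha\TB_\alpha w^\alpha$ is an $\alpha$-redex. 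The delicate bookkeeping here is that this escape redex lives at stage $\varepsilon\alpha=\alpha$, so it must be placed in a context of $ECtx^\alpha_\alpha$, matching the stage of $M$. At a stage $A'\neq\varepsilon$ these same forms are themselves values—$v^{A'}v^{A'}$, $v^{A'}B$, and $\TBL_\alpha v^{A''}$ with $A''\neq\varepsilon$—so no redex is exposed and the whole term is in $V^{A'}$.

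Uniqueness is then argued by observing that the three alternatives are mutually exclusive: the value and redex grammars are disjoint at every stage, so a value contains no redex in an active position, and the evaluation-context grammar is deterministic—each production is selected by whether the immediate left subterm is already a value, so the position of the hole is forced. I expect the main obstacle to be precisely this stage arithmetic: ensuring the indices on $V^{A}$, $R^{B}$, and $E^{A}_{B}$ line up in every elimination case (in particular that $\TBL$ lowers and $\TB$ raises the stage consistently, and that the escape redex lands at $\alpha$ rather than $\varepsilon$), while simultaneously maintaining the value/redex classification as an \emph{exclusive} disjunction rather than merely an exhaustive one, which is what upgrades progress into \emph{unique} decomposition.
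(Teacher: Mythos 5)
Your proposal is correct and follows essentially the same route as the paper's own proof: induction on the typing derivation, the Inversion Lemma to pin down the shapes of stage-$\varepsilon$ values in the elimination cases (\TApp, \TIns, \TTBL), the no-$\varepsilon$-variable hypothesis to dispatch \TVar, and stage case analysis with frame prepending for the remaining cases. Your explicit ``exactly one'' sharpening is the same disjointness fact the paper invokes implicitly for uniqueness (e.g.\ its observation that $E^\alpha_B[R^B] \notin V^\alpha$, so the outer $\TBL_\alpha\TB_\alpha$ does not form a second redex in the \TTBL{} case).
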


\begin{proof}
  We can prove by straightforward induction on typing derivations.
  \qed
\end{proof}

The type environment $\G$ in the statement usually has to be empty;
in other words, the term has to be closed.  The condition is relaxed here
because variables at stages higher than \(\varepsilon\) are considered
symbols.  In fact, this relaxation is required for proof by induction
to work.

      
Progress is a corollary of Unique Decomposition.

\begin{corollary}[Progress]
	If $\G$ does not have any variable declared at stage $\varepsilon$ and $\G \V M : \tau  @ A$, then
	$ M \in V^A $ or there exists $M'$ such that $M \longrightarrow_s M'$.
\end{corollary}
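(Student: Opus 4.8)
The plan is to derive Progress directly from the Unique Decomposition theorem by a simple case analysis, since all the substantive work has already been carried out there. Given the hypotheses that $\G$ has no variable declared at stage $\varepsilon$ and $\G \V M : \tau @ A$, I would first invoke Unique Decomposition, which yields exactly one of two situations: either $M \in V^A$, in which case the first disjunct of Progress holds immediately, or $M$ decomposes uniquely as $M = E^A_B[R^B]$ for some stage $B$, evaluation context $E^A_B$, and redex $R^B$.

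In the second situation it remains only to exhibit a term $M'$ with $M \longrightarrow_s M'$. I would proceed by case analysis on the shape of the redex $R^B$, following the redex grammar. If $B = \varepsilon$ and $R^\varepsilon = (\lambda x:\sigma.N)\ v^\varepsilon$, the first staged reduction rule applies and yields $M' = E^A_\varepsilon[N[x \mapsto v^\varepsilon]]$. If $B = \varepsilon$ and $R^\varepsilon = (\Lambda\alpha.v^\varepsilon)\ \varepsilon$, the second rule applies and yields $M' = E^A_\varepsilon[v^\varepsilon[\alpha \mapsto \varepsilon]]$. Finally, if $B = \alpha$ for some stage variable $\alpha$ and $R^\alpha = \TBL_\alpha \TB_\alpha v^\alpha$, the third rule applies and yields $M' = E^A_\alpha[v^\alpha]$. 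These three cases exhaust the grammar of redexes, so a staged reduction step exists in every case, establishing the second disjunct.

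Because Progress is a direct corollary, there is essentially no hard step: the content lies entirely in matching each redex form to its corresponding staged reduction rule and in checking that the stage annotation $B$ on the hole of the evaluation context agrees with the stage of the redex---namely $\varepsilon$ for the $\beta$- and $\Lambda$-redexes, and a stage variable $\alpha$ for the $\blacklozenge$-redex. The only point deserving a moment's attention is confirming that the redex grammar and the staged reduction rules are in exact correspondence, so that the case analysis is genuinely exhaustive; once Unique Decomposition is granted, this correspondence is immediate from the definitions, and no separate inductive argument is required.
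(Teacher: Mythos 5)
Your proof is correct and follows exactly the paper's route: the paper likewise derives Progress immediately from Unique Decomposition, with the value case giving the first disjunct and the redex case giving a staged reduction step via the one-to-one match between redex forms and the staged reduction rules. Your write-up merely makes explicit the case analysis that the paper leaves implicit.
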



\section{Related Work}
\label{sec:related-work}


MetaOCaml is a programming language with quoting, unquoting, run, and CSP.
Kiselyov~\cite{8384206} describes many applications of MetaOCaml, including
filtering in signal processing, matrix-vector product, and a DSL compiler.


Theoretical studies on multi-stage programming owe a lot to seminal work by
Davies and Pfenning~\cite{DaviesPfenning01JACM} and
Davies~\cite{davies1996temporal}, who found Curry-Howard correspondence
between multi-stage calculi and modal logic. In particular, Davies'
$\lambda^\circ$~\cite{davies1996temporal} has been a basis for several
multi-stage calculi with quasi-quotation. $\lambda^\circ$ did not have
operators for run and CSP; a few
studies~\cite{benaissa1999logical,MoggiTahaBenaissaSheard99ESOP} enhanced and
improved $\lambda^\circ$ towards the development of a type-safe multi-stage
calculus with quasi-quotation, run, and CSP, which were proposed by Taha and
Sheard as constructs for multi-stage programming~\cite{MetaML}. 
Finally, Taha and Nielsen invented the concept of environment
classifiers~\cite{taha2003environment} and developed a typed calculus
$\lambda^\alpha$, which was equipped with all the features above in a type
sound manner and formed a basis of earlier versions of MetaOCaml. Different
approaches to type-safe multi-stage programming with slightly different
constructs for composing and running code values have been studied by Kim,
Yi, and Calcagno~\cite{DBLP:conf/popl/KimYC06} and Nanevski and
Pfenning~\cite{DBLP:journals/jfp/NanevskiP05}.

Later, Tsukada and Igarashi~\cite{Tsukada} found correspondence
between a variant of \(\lambda^\alpha\) called $\lambda^\TW$
and modal logic and showed that run could be represented as a special
case of application of a transition abstraction ($\Lambda\alpha.M$) to
the empty sequence $\varepsilon$.  Hanada and
Igarashi~\cite{Hanada2014} developed \LTP as an extension
$\lambda^\TW$ with CSP.


There is much work on dependent types and most of it is affected by
the pioneering work by Martin-L\"{o}f~\cite{martin1973intuitionistic}.
Among many dependent type systems such as
$\lambda^\Pi$~\cite{Meyer1986}, The Calculus of
Constructions~\cite{coquand:inria-00076024}, and Edinburgh
LF~\cite{harper1993framework}, we base our work on \LLF~\cite{attapl}
(which is quite close to $\lambda^\Pi$ and Edinburgh LF) due to its
simplicity.  It is well known that dependent types are useful to
express detailed properties of data structures at the type level such
as the size of data structures~\cite{Xi98} and typed abstract syntax
trees~\cite{DBLP:conf/dsl/LeijenM99,DBLP:conf/popl/XiCC03}.  The
vector addition discussed in Section~\ref{sec:formal} is also such an
example.






The use of dependent types for code generation is studied by
Chlipala~\cite{chlipala2010ur} and Ebner et
al.~\cite{DBLP:journals/pacmpl/EbnerURAM17}.  They use inductive types
to guarantee well-formedness of generated code.  Aside from the lack
of quasi-quotation, their systems are for heterogeneous
meta-programming and compile-time code generation and they do not
support features for run-time code generation such as run and CSP, as \LMD{} does.


We discuss earlier attempts at incorporating dependent types into
multi-stage programming.  Pasalic and Taha~\cite{pasalic2002tagless}
designed \(\lambda_{H\circ}\) by introducing the concept of stage into
an existing dependent type system
\(\lambda_H\)~\cite{zhong2002certified}.  However,
\(\lambda_{H\circ}\) is equipped with neither run nor CSP.  Forgarty,
Pasalic, Siek and Taha~\cite{fogarty2007concoqtion} extended the type
system of MetaOCaml with indexed types.  With this extension, types
can be indexed with a Coq term.  Chen and Xi~\cite{chen2003meta}
introduced code types augmented with information on types of free
variables in code values in order to prevent code with free variables
from being evaluated.  These systems separate the language of type
indices from the term language.  As a result, they do not enjoy
full-spectrum dependent types but are technically simpler because
there is no need to take stage of types into account.  Brady and
Hammond~\cite{brady2006dependently} have discussed a combination of
(full-spectrum) dependently typed programming with staging in the
style of MetaOCaml to implement a staged interpreter, which is
statically guaranteed to generate well-typed code.  However, they
focused on concrete programming examples and there is no theoretical
investigation of the programming language they used.

Berger and Tratt~\cite{martin2015HGRTMP} gave program logic for
\(\text{Mini-ML}^\square_e\)~\cite{DaviesPfenning01JACM}, which would
allow fine-grained reasoning about the behavior of code generators.
However, it cannot manipulate open code which ours can deal with.




\section{Conclusion \label{sec:conclusion}}

We have proposed a new multi-stage calculus \LMD with dependent types,
which make it possible for programmers to express finer-grained
properties about the behavior of code values.  The combination leads
to augmentation of almost all judgments in the type system with stage
information.  CSP and type equivalence (specially tailored for CSP) are
keys to expressing dependently typed practical code generators.  We
have proved basic properties of \LMD, including preservation,
confluence, strong normalization for full reduction, and progress for
staged reduction.

Developing a typechecking algorithm for \LMD is left for future
work.  We expect that most of the development is straightforward,
except for implicit CSP at the type-level and \%-erasing equivalence
rules.

\subsection*{Acknowledgments.}
We would like to thank John Toman, Yuki Nishida, and anonymous reviewers for useful comments.  We would also like to thank Ayato Yokoyama for pointing out an error in the type system.

\bibliographystyle{splncs04}
\bibliography{main}
%
%
%
%
%

\iffullversion
\appendix

\newtheorem{dfn}{Definition}
\newtheorem{ex}{Example}
\newtheorem{cm}{Comment}
\newcommand{\figheader}[2]{
  \begin{flushleft}
    #2 {\bf \normalsize #1}
\end{flushleft}}

\newpage
\section{Full Definition of \LMD}
\subsection{Syntax}

\begin{align*}
	\textrm{Terms}             &  & M,N,L,O,P                & ::= c \mid x \mid \lambda x:\tau.M\ \mid M\ M \mid \TB_\alpha M                                                                                    \\
	                           &  &                          & \ \ \ \ \mid \TBL_\alpha M \mid \Lambda\alpha.M \mid M\ A \mid M\ \varepsilon \mid \%_\alpha M                                                                       \\
	\textrm{Types}             &  & \tau,\sigma,\rho,\pi,\xi & ::= X \mid \Pi x:\tau.\tau \mid \tau\ M \mid \TW_{\alpha} M \mid \F\alpha.\tau                                                                     \\
	\textrm{Kinds}             &  & K,J,I,H,G                & ::= * \mid \Pi x:\tau.K                                                                                                                            \\
	\textrm{Type environments} &  & \Gamma                   & ::= \emptyset \mid \Gamma,x:\tau@A\\
	\textrm{Signature}         &  & \Sigma                   & ::= \emptyset \mid \Sigma,X::K\\
	                           &  &                          & \ \ \ \ \mid \Sigma,c:\tau\\
	\textrm{Stage variables}   &  &                          & \alpha,\beta,\gamma,...                                                                                                                            \\
	\textrm{Stage}             &  &                          & A,B,C,...                                                                                                                                          \\
	\textrm{Variables}         &  &                          & x,y,z,...                                                                                                                                          \\
	\textrm{Type variables}    &  &                          & X,Y,Z,...                                                                                                                                          \\
\end{align*}

\subsection{Reduction}

\figheader{Term reduction}{\rulefbox{M \longrightarrow N}}
\begin{center}
	\begin{align*}
		 & (\lambda x:\tau.M) N \longrightarrow_\beta M[x \mapsto N]                             \\
		 & \TBL_\alpha (\TB_\alpha M)\longrightarrow_\blacklozenge M                             \\
		 & (\Lambda \alpha.M)\ A \longrightarrow_\Lambda M[\alpha \mapsto A]
	\end{align*}
\end{center}

\subsection{Type System}

\figheader{Well-formed signatures}{\rulefbox{\vdash\Sigma}}
{\small
\begin{center}
  \infrule{
  }{
    \vdash \emptyset
  }
  \hfil
  \infrule{
    \vdash \Sigma \andalso
    \V K \iskind @ \varepsilon \andalso
    X\notin\textit{dom}(\Sigma)
  }{
    \vdash \Sigma, X::K
  }
  \\[2mm]
  \infrule{
    \vdash \Sigma \andalso
    \V \tau :: * @ \varepsilon \andalso
    c\notin\textit{dom}(\Sigma)
  }{
   \vdash \Sigma, c:\tau
 }
\end{center}
}
\figheader{Well-formed type environments}{\rulefbox{\V \Gamma}}

{\small
\begin{center}
  \infrule{
  }{
    \V \emptyset
  }
  \hfil
  \infrule{
    \V \Gamma \andalso
    \Gamma \V \tau :: * @ A \andalso
    x\notin\textit{dom}(\Sigma)
  }{
   \V \Gamma, x:\tau@A
 }
\end{center}
}

\figheader{Well-formed kinds}{\rulefbox{\Gamma \vdash K\iskind @A}}
{\small
\begin{center}
  \infrule[\WStar]{
  }{
    \G\V*\iskind @A
  } \hfil
  \infrule[\WAbs]{
    \G\V \tau::*@A \andalso \G,x:\tau@A\V K\iskind @A
  }{
    \G\V(\Pi x:\tau.K)\iskind @A
  }
\end{center}
}

\figheader{Kinding}{\rulefbox{\G \V \tau::K @ A}}
{\small
\begin{center}
  \infrule[{\KTConst}]{
    X::K \in \Sigma 
  }{
    \G \V X::K@A
  } \\[2mm]
  \infrule[{\KAbs}]{
    \G\V \tau :: *@A \andalso \G,x:\tau@A\V \sigma::*@A
  }{
    \G\V(\Pi x:\tau.\sigma) :: *@A
  } \\[2mm]
  \infrule[{\KApp}]{
    \G\V \sigma:: (\Pi x:\tau.K)@A \andalso \G\V M:\tau@A
  }{
    \G\V \sigma\ M::K[x\mapsto M]@A
  } \\[2mm]
  \infrule[{\KConv}]{
    \G\V \tau::K@A \andalso
    \G\V K\equiv J@A
  }{
    \G\V \tau::J@A
  } \hfil
  \infrule[{\KTW}]{
    \G\V \tau::*@A\alpha
  }{
    \G\V\TW_\alpha \tau::*@A
  }\\[2mm]
  \infrule[{\KGen}]{
    \G\V \tau::K@A \andalso
    \alpha\notin\rm{FTV}(\G)\cup\rm{FTV}(A)
  }{
    \G\V\forall\alpha.\tau::K@A
  } \hfil
  \infrule[{\KCsp}]{
    \G\V \tau::*@A
  }{
    \G\V \tau::*@A\alpha
  }
\end{center}
}

\figheader{Typing}{\rulefbox{\G\V M:\tau @A}}
{\small
\begin{center}
  \infrule[{\TConst}]{
    c:\tau \in \Sigma \andalso
  }{
    \G \V c:\tau@A
  }\andalso
  \infrule[{\TVar}]{
    x:\tau@A \in \G \andalso
  }{
    \G \V x:\tau@A
  } \\[2mm]
  \infrule[\TAbs]{
    \G\V \sigma::*@A \andalso
    \G,x:\sigma@A\V M:\tau@A
  }{
    \G\V(\lambda (x:\sigma).M):(\Pi (x:\sigma).\tau)@A
  } \\[2mm]
  \infrule[{\textsc{T-App}}]{
    \G\V M:(\Pi (x:\sigma).\tau)@A \andalso
    \G\V N:\sigma@A
  }{
    \G\V M\ N : \tau[x\mapsto N]@A
  } \\[2mm]
  \infrule[{\textsc{T-Conv}}]{
    \G\V M:\tau@A \andalso
    \G\V \tau\equiv \sigma :: K@A
  }{
    \G\V M:\sigma@A
  } \\[2mm]
  \infrule[{\TTB}]{
    \G\V M:\tau@{A\alpha}
  }{
    \G\V\TB_{\alpha}M:\TW_{\alpha}\tau@A
  } \andalso
  \infrule[{\TTBL}]{
    \G\V M:\TW_{\alpha}\tau@A
  }{
    \G\V\TBL_{\alpha}M:\tau@{A\alpha}
  } \\[2mm]
  \infrule[\TGen]{
    \G\V M:\tau@A \andalso
    \alpha\notin\rm{FTV}(\G)\cup\rm{FTV}(A)
  }{
    \G\V\Lambda\alpha.M:\forall\alpha.\tau@A
  } \\[2mm]
  \infrule[\TIns]{
    \G\V M:\forall\alpha.\tau@A
  }{
    \G\V M\ A:\tau[\alpha \mapsto A]@A
  } \hfil
  \infrule[\TCsp]{
    \G\V M:\tau@A
  }{
    \G\V \%_\alpha M:\tau@{A\alpha}
  }
\end{center}
}

\figheader{Kind Equivalence}{\rulefbox{\G\V K\E J@A}}
{\small
\begin{center}
  \infrule[{\QKAbs}]{
    \G\V \tau \E \sigma :: *@A \andalso
    \G,x:\tau@A \V K \E J@A
  }{
    \G\V\Pi x:\tau.K \E \Pi x:\sigma.J@A
  } \\[2mm]
  \infrule[{\textsc{QK-Csp}}]{
    \G\V K \E J@A
  }{
    \G\V K \E J@{A\alpha}
  } \hfil
  \infrule[\QKRefl]{
    \G\V K \iskind @A
  }{
    \G\V K\E K@A
  } \\[2mm]
  \infrule[\QKSym]{
    \G\V K \E J@A
  }{
    \G\V J \E K@A
  } \hfil
  \infrule[\QKTrans]{
    \G\V K \E J@A \andalso
    \G\V J \E I@A
  }{
    \G\V K \E I@A
  }
\end{center}
}
\figheader{Type Equivalence}{\rulefbox{\G\V S\E T :: K @A}}
{\small
\begin{center}
  \infrule[{\QTAbs}]{
    \G\V \tau \E \sigma :: *@A \andalso
    \G,x:\tau@A \V \rho \E \pi :: *@A
  }{
    \G\V\Pi x:\tau.\rho \E \Pi x:\sigma.\pi :: *@A
  } \\[2mm]
  \infrule[{\QTApp}]{
    \G\V \tau \E \sigma :: (\Pi x:\rho.K)@A \andalso
    \G\V M \E N : \rho @A
  }{
    \G\V \tau\ M \E \sigma\ N :: K[x \mapsto M]@A
  } \\[2mm]
  \infrule[\QTTW]{
    \G\V \tau \E \sigma :: *@{A\alpha}
  }{
    \G\V \TW_{\alpha} \tau \E \TW_{\alpha} \sigma :: *@A
  } \\[2mm]
  \infrule[\QTGen]{
    \G\V \tau \E \sigma :: *@A \andalso
    \alpha\notin\rm{FTV}(\G)\cup\rm{FTV}(A)
  }{
    \G\V \forall\alpha.\tau \E  \forall\alpha.\sigma :: *@A
  } \\[2mm]
  \infrule[\QTCsp]{
    \G\V \tau \E \sigma :: *@A
  }{
    \G\V \tau \E \sigma :: *@{A\alpha}
  }
  \infrule[\QTRefl]{
    \G\V \tau::K@A
  }{
    \G\V \tau\E\tau :: K@A
  } \hfil
  \infrule[\QTSym]{
    \G\V \tau \E \sigma :: K@A
  }{
    \G\V \sigma \E \tau :: K@A
  } \\[2mm]
  \infrule[\QTTrans]{
    \G\V \tau \E \sigma :: K@A \andalso
    \G\V \sigma \E \rho  :: K@A
  }{
    \G\V \tau \E \rho  :: K@A
  }
\end{center}
}

\figheader{Term Equivalence}{\rulefbox{\G\V M\E N : \tau @A}}
{\small
\begin{center}
  \infrule[\QAbs]{
    \G\V \tau \E \sigma :: *@A \andalso
    \G,x:\tau@A \V M \E N : \rho @A
  }{
    \G\V\lambda x:\tau.M \E \lambda x:\sigma.N : (\Pi x:\tau.\rho)@A
  } \\[2mm]
  \infrule[\QApp]{
    \G\V M \E L : (\Pi x:\sigma.\tau)@A \andalso
    \G\V N \E O : \sigma@A
  }{
    \G\V M\ N \E L\ O : \tau[x \mapsto N]@A
  } \\[2mm]
  \infrule[\QTB]{
    \G\V M \E N : \tau@{A\alpha}
  }{
    \G\V \TB_\alpha M \E \TB_\alpha N : \TW_\alpha \tau@A
  } \hfil
  \infrule[{\QTBL}]{
    \G\V M \E N : \TW_\alpha \tau@A
  }{
    \G\V \TBL_\alpha M \E \TBL_\alpha N : \tau@{A\alpha}
  } \\[2mm]
  \infrule[{\QGen}]{
    \G\V M\E N : \tau@A \andalso
    \alpha \notin \FTV(\G)\cup\FTV(A)
  }{
    \G\V \Lambda\alpha.M \E \Lambda\alpha.N : \forall\alpha.\tau@A
  } \\[2mm]
  \infrule[{\QIns}]{
    \G\V M \E N:\forall\alpha.\tau@A
  }{
    \G\V M\ A \E N\ A : \tau[\alpha \mapsto A]@A
  }\hfil
  \infrule[\QCsp]{
    \G\V M \E N : \tau @A
  }{
    \G\V\%_\alpha M \E \%_\alpha N : \tau@{A\alpha}
  } \\[2mm]
  \infrule[\QRefl]{
    \G\V M:\tau@A
  }{
    \G\V M\E M : \tau@A
  } \hfil
  \infrule[\QSym]{
    \G\V M\E N : \tau@A
  }{
    \G\V N\E M : \tau@A
  } \\[2mm]
  \infrule[\QTrans]{
    \G\V M\E N : \tau@A \andalso
    \G\V N\E L : \tau@A
  }{
    \G\V M\E L : \tau@A
  } \\[2mm]
  \infrule[{\QBeta}]{
    \G,x:\sigma@A\V M:\tau@A \andalso
    \G\V N:\sigma@A
  }{
    \G\V(\lambda x:\sigma.M)\ N\E M[x\mapsto N] : \tau[x \mapsto N]@A
  } \\[2mm]
  \infrule[{\QTBLTB}]{
    \G\V M \E N : \tau@A
  }{
    \G\V \TBL_\alpha(\TB_\alpha M) \E N : \tau @A
  } \\[2mm]
  \infrule[{\QLambda}]{
    \G\V (\Lambda\alpha.M) : \forall\alpha.\tau@A
  }{
    \G\V (\Lambda\alpha.M)\ B \E M[\alpha \mapsto B] : \tau[\alpha \mapsto B]@A
  } \\[2mm]
  \infrule[{\QPercent}]{
    \G\V M:\tau@{A\alpha} \andalso
    \G\V M:\tau@A
  }{
    \G\V\%_\alpha M \E M : \tau@{A\alpha}
  }
\end{center}
}



\section{Proofs}
$\mathcal{J}$ is a metavariable for judgments as in Section \ref{sec:properties}.
We say type environment \(\G\) is a subsequence of type environment \(\D\)
if and only if we can get \(\G\) from \(\D\) by deleting some or no variables without changing the order of the remaining elements.
\begin{lemma}[Weakening]
	If \(\G \V \mathcal{J}@A\) and \(\G\) is a subsequence of \(\D\), then \(\D \V J@A\).
\end{lemma}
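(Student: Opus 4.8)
The plan is to proceed by straightforward induction on the derivation of $\G \V \mathcal{J}$, carried out simultaneously over all six judgment forms ranged over by $\mathcal{J}$, since they are defined by mutual recursion. Throughout I rely on the global convention that every type environment under consideration---in particular $\D$---is well-formed, so that the only thing to check is that each inference rule used to derive $\G \V \mathcal{J}$ can be replayed with $\D$ in place of $\G$.

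For the leaf rules the claim is immediate. In \TVar{} the premise is $x:\tau@A \in \G$; since $\G$ is a subsequence of $\D$ we also have $x:\tau@A \in \D$, so \TVar{} yields $\D \V x:\tau@A$. The rules \TConst{} and \KTConst{} consult only the signature $\Sigma$ and so transfer verbatim. For every rule whose premises and conclusion share the same type environment---the application and conversion rules \TApp, \KApp, \TConv, \KConv; the multi-stage rules \TTB, \TTBL, \TIns, \TCsp, \KTW, \KCsp; and all reflexivity, symmetry, transitivity, and congruence rules together with \QTApp, \QTCsp, \QPercent, \QTBLTB, and \QLambda{}---I apply the induction hypothesis to each premise under the same inclusion $\G \subseteq \D$ and re-apply the rule. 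For the rules carrying a side condition on free stage variables, namely \TGen, \KGen, \QTGen, and \QGen{} with $\alpha \notin \FTV(\G) \cup \FTV(A)$, I first $\alpha$-rename the bound stage variable so that it is fresh for $\D$; this is legitimate because $\alpha$-convertible terms are identified, and it restores the side condition with $\D$ in place of $\G$.

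The cases that deserve care are those whose premises are derived under an extended type environment $\G, x:\sigma@A$: the abstraction rules \WAbs, \KAbs, \TAbs, \QKAbs, \QTAbs, and \QAbs, together with \QBeta. In each such case I first $\alpha$-rename the bound variable $x$ so that $x \notin \textit{dom}(\D)$, again using the convention that bound-variable names are distinct, whence $\G, x:\sigma@A$ is a subsequence of $\D, x:\sigma@A$. To apply the induction hypothesis to the premise over the extended environment I must know that $\D, x:\sigma@A$ is well-formed, which reduces to establishing $\D \V \sigma :: *@A$. For the abstraction rules this follows by applying the induction hypothesis to the accompanying premise ($\G \V \sigma :: *@A$, or the type-equivalence premise $\G \V \tau \E \sigma :: *@A$); for \QBeta{} it follows from the assumed well-formedness of the environment $\G, x:\sigma@A$ occurring in the premise, which yields $\G \V \sigma :: *@A$ and hence, by the induction hypothesis, $\D \V \sigma :: *@A$. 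Having secured the enlarged environment, I apply the induction hypothesis to the body premise and re-apply the rule.

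The only real difficulty is therefore bookkeeping: at every context extension one must choose the bound variable fresh for $\D$ and propagate well-formedness of the enlarged environment so that the induction hypothesis stays applicable. No individual case presents a mathematical obstacle beyond this threading, and the lemma follows by the simultaneous induction described above.
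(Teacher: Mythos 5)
Your proposal is correct and follows essentially the same route as the paper: a straightforward simultaneous induction on the derivation, replaying each rule with $\D$ in place of $\G$. The paper only shows representative cases (\WAbs{}, \KAbs{}, \QTAbs{}) and leaves the freshness of bound variables and the well-formedness of the extended environment implicit under its global conventions, whereas you spell this bookkeeping out explicitly; this is a difference of presentation, not of method.
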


\begin{proof}
	By straightforward induction on the derivation of typing, kinding, well-formed kinding,
	term equivalence, type equivalence or kind equivalence.
	We show only representative cases.

	\begin{rneqncase}{\WAbs{}}{
			\mathcal{J} = (\Pi x:\tau.K)\iskind @A \\
			\G \V \tau :: * @ A & \G,x:\tau@A \V K \iskind @A
		}
		By the induction hypothesis, we have
		\begin{align*}
			 & \D \V \tau::*@A &  & \D, x:\tau@A \V K \iskind @ A
		\end{align*}
		from which $\D\V (\Pi x:\tau.K) \iskind @ A$ follows by $\WAbs$.
	\end{rneqncase}

	\begin{rneqncase}{\KAbs{}}{
			\mathcal{J} = (\Pi x:\tau.\sigma) :: *@A \\
			\G \V \tau::*@A &  & \G,x:\tau@A\V \sigma :: * @ A.
		}
		By the induction hypothesis, we have
		\begin{align*}
			 & \D \V \tau::*@A &  & \D,x:\tau@A\V \sigma :: * @ A
		\end{align*}
		from which $\D \V (\Pi x:\tau.\sigma) :: *@A$ follows.
	\end{rneqncase}

	\begin{rneqncase}{
			\QTAbs{}}{
			\mathcal{J} = \Pi x:\tau.\rho \E \Pi x:\sigma.\pi :: * @ A \\
			\G\V \tau \E \sigma :: *@A  & \G,x:\tau@A \V \rho \E \pi :: *@A.
		}
		By the induction hypothesis, we have
		\begin{align*}
			 & \D\V \tau \E \sigma :: *@A &  & \D,x:\tau@A \V \rho \E \pi :: *@A
		\end{align*}
		from which \( \D\V  \Pi x:\tau.\rho \E \Pi x:\sigma.\pi :: * @ A \) follows.
	\end{rneqncase}
	\qed\end{proof}

\begin{theorem}[Term Substitution]
	If $\G, z:\xi@B, \D \V \mathcal{J}$ and $\G\V P:\xi @B$, then $\G, \D[z \mapsto P] \V \mathcal{J}[z \mapsto P]$.  Similarly, if $\V \G, z:\xi@B, \D$ and
	$\G\V P:\xi @B$, then $\V \G, \D[z \mapsto P]$.
\end{theorem}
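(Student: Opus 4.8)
The plan is to prove both statements simultaneously by induction on the derivation of $\G, z:\xi@B, \D \V \mathcal{J}$ (respectively of $\V \G, z:\xi@B, \D$). A simultaneous induction is forced on us: the eight judgment forms together with type-environment well-formedness are defined by mutual recursion, so the typing/kinding part and the ``Similarly'' context part cannot be established independently. Throughout I would freely invoke the \textbf{Weakening} lemma and the standard syntactic facts about capture-avoiding substitution, using the convention that bound names are pairwise distinct.

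Before touching the cases I would record the necessary substitution-commutation identities, since several dependent rules place substitution instances in their conclusions: \TApp{} concludes $M\ N:\tau[x\mapsto N]@A$, \KApp{} concludes $\sigma\ M::K[x\mapsto M]@A$, and \QApp, \QBeta, and \QTApp{} are analogous. The prerequisite is the composition law $\phi[x\mapsto N][z\mapsto P] = \phi[z\mapsto P][x\mapsto N[z\mapsto P]]$ (for $x\neq z$ and $x\notin\FV(P)$) for each syntactic category, together with the fact that the term substitution $[z\mapsto P]$ commutes with the stage substitution $[\alpha\mapsto B]$ appearing in \TIns{} and \QLambda{} (valid because $z$ is a term variable and $\alpha$ may be assumed fresh for $P$). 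These purely syntactic identities are what align the conclusion of each application rule with the substituted judgment demanded by the statement.

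The only genuinely new leaf is \TVar{} applied to $z$ itself: there $\mathcal{J}=z:\xi@B$ and $\mathcal{J}[z\mapsto P]=P:\xi@B$, using $\xi[z\mapsto P]=\xi$, which holds because $\xi$ is kinded in $\G$ before $z$ is declared, so $z\notin\FV(\xi)$; the goal $\G,\D[z\mapsto P]\V P:\xi@B$ then follows from the hypothesis $\G\V P:\xi@B$ by Weakening. For a variable $y\neq z$, \TVar{} is simply re-applied. The congruence rules --- including the multi-stage rules \TTB, \TTBL, \TGen, \TIns, \TCsp{} and their equivalence counterparts, as well as \QPercent{} (substitute into both premises $M:\tau@A\alpha$ and $M:\tau@A$ and re-apply) --- are discharged by feeding each premise to the induction hypothesis and re-applying the same rule, since $[z\mapsto P]$ commutes with $\TB_\alpha$, $\TBL_\alpha$, $\Lambda\alpha$, $\%_\alpha$, and with stage application. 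The observation that keeps stages consistent is that, by \TVar, every occurrence of $z$ sits at exactly its declared stage $B$, matching the stage at which $P$ is typed, while any surrounding CSP or bracket structure that lifts it to another stage belongs to the context and is preserved. For the well-formedness part I would induct on the shape of $\D$: when $\D=\D',y:\rho@A'$, the induction hypothesis on $\D'$ and on the kinding premise $\rho::*@A'$ yields $\G,\D'[z\mapsto P]\V\rho[z\mapsto P]::*@A'$, and the environment-formation rule reassembles $\V\G,\D[z\mapsto P]$.

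The main obstacle is not any single case but running one induction across all eight mutually recursive judgments while threading the stage annotations correctly. Concretely, the two delicate points are (i) the dependent application rules \TApp, \KApp, \QTApp, \QApp, and \QBeta, where the composition identity above must be invoked to reconcile the two nested substitutions, and (ii) verifying that term substitution never disturbs the stage discipline --- which reduces, as noted, to the fact that $z$ is referenced only at $B$ and $P$ is typed at $B$, so no CSP- or bracket-induced stage shift is invalidated. I expect the stage reasoning to be routine once the freshness convention on stage variables is fixed, leaving the commutation of substitutions as the only step needing explicit care.
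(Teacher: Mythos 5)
Your proposal is correct and takes essentially the same approach as the paper: a simultaneous induction on derivations with case analysis on the last rule, the crucial \TVar{} case for $z$ itself resolved exactly as the paper does (context well-formedness gives $z\notin\FV(\xi)$, hence $\xi[z\mapsto P]=\xi$, then Weakening lifts $\G\V P:\xi@B$ to the extended context), and the dependent application rules reconciled via the substitution-composition identity. The only difference is presentational: you state explicitly the commutation identities (term/term and term/stage substitution) that the paper invokes implicitly with phrases like ``is equivalent to.''
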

\begin{proof}
	Proved simultaneously by induction on derivations with case analysis on the last rule used.
	We show only representative cases.
		{
			\newcommand{\SB}{[z \mapsto P]}
			\newcommand{\GG}{\G}
			\newcommand{\GGV}{\G \V}

			\begin{rneqncase}{\TVar{}}{
					\mathcal{J} = y:\tau@A\\
					y:\tau@A \in (\G, z:\xi@B, \D).
				}
				\begin{itemize}
					\item If $y:\tau@A \in \G$ or $y:\tau@A \in \D$, then it is obvious that $y:\tau\SB@A \in \GG$.
					      \(\G,\D\SB \V y\SB:\tau\SB@A\) from \TVar.

					\item If $y:\tau@A = z:\xi@B$, then
					      $y = z$, $\tau = \xi$, and $A = B$.
					      From the well-formedness of \( \G, z:\xi@B,\D \), there is no $z$ in $\xi$.
					      Therefore, $\tau\SB = \xi\SB = \xi$.
					      It is obvious that $y\SB = z\SB = P$.
					      Thus, $\G \V y\SB : \tau\SB@A$.
					      By Weakening, $\G,\D\SB \V y\SB : \tau\SB@A$

				\end{itemize}
			\end{rneqncase}

			\begin{rneqncase}{\WAbs{}}{
					\mathcal{J} =  (\Pi x:\tau.K) \iskind @A\\
					\G, z:\xi@B, \D \V \tau::*@A & \G, z:\xi@B, \D, x:\tau \V K \iskind@A.
				}
				We can assume $x \neq z$ because we can select fresh $x$ when we construct $\Pi$ type.
				By the induction hypothesis,
				\begin{align*}
					 & \G,\D\SB \V \tau\SB::*@A &  & \G,\D\SB, x:\tau\SB \V K\SB \iskind@.
				\end{align*}
				$\G,\D\SB \V (\Pi x:\tau\SB.K\SB) \iskind @A$ from \WAbs{} and
				it is equivalent to $\G,\D\SB \V (\Pi x:\tau.K)\SB \iskind @A$.
			\end{rneqncase}

			\begin{rneqncase}{\TApp{}}{
					\mathcal{J} = M\ N:\tau[x\mapsto N]@A\\
					\G, z:\xi@B, \D \V M:\Pi(x:\sigma).\tau@A & \G, z:\xi@B, \D \V N:\sigma@A
				}
				We can assume $x \neq z$ because we can select fresh $x$ when we construct $\Pi$ type.
				By the induction hypothesis,
				\begin{align*}
					 & \G, \D\SB \V M\SB: (\Pi(x:\sigma).\tau)\SB@A \\
					 & \G,\D\SB\V N\SB: \sigma\SB@A.
				\end{align*}
				and \(\G, \D\SB \V M\SB: (\Pi(x:\sigma).\tau)\SB@A\) is equivalent to \( \G, \D\SB \V M\SB: (\Pi(x:\sigma\SB).\tau\SB)@A \).
				From \TApp, \(\G,\D\SB (M\SB\ N\SB): \tau\SB[x \mapsto N\SB]@A\) and this is equivalent to
				\(\G,\D\SB \V (M\ N)\SB: \tau[x \mapsto N]\SB@A\).
			\end{rneqncase}
		}
	\qed
\end{proof}

\begin{lemma}[Stage Substitution]
	If $\G \V \mathcal{J}$, then $\G[\beta\mapsto B] \V \mathcal{J}[\beta\mapsto B]$.  Similarly, if $\V \G$, then $\V \G[\beta\mapsto B]$.
\end{lemma}

\begin{proof}
	Proved simultaneously by induction on derivations with case analysis on the last rule used.
	We show only representative cases.
		{
			\newcommand{\SB}{[\beta \mapsto B]}
			\newcommand{\GG}{\G\SB}
			\newcommand{\GGV}{\G\SB \V}

			\begin{rneqncase}{\TGen{}}{
					\mathcal{J} = \Lambda\alpha.M:\forall\alpha.\tau@A\\
					\G\V M:\tau@A & \alpha\notin\rm{FTV}(\G)\cup\rm{FTV}(A).
				}
				We can assume $\alpha \notin B$ because $\alpha$ is a bound variable.
				By the induction hypothesis, we have \(\G\SB\V M\SB:\tau\SB@A\).
				We can prove easily $\alpha \notin \FTV(\GG) \cup \FTV(A)$.
				Then, \(\GGV (\Lambda\alpha.M)\SB:(\forall\alpha.\tau)\SB@A\SB\) by \TGen.
			\end{rneqncase}

			\begin{rneqncase}{\KTW{}}{
					\mathcal{J} = \G\V \TW_\alpha \tau :: * @ A\\
					\G\V \tau :: * @ A\alpha.
				}
				\begin{itemize}
					\item If $\alpha \neq \beta$,\\
					      \( \GGV \tau\SB :: *\SB @ A\alpha\SB \) from the induction hypothesis.
					      \( \GGV \TW_\alpha \tau\SB :: *\SB @ A\SB \) from \KTW.

					\item If $\alpha = \beta$, \\
					      \( \GGV \tau\SB :: *\SB @ A\alpha\SB \) from the induction hypothesis and
					      it is identical to \( \GGV \tau\SB :: * @ AB \).
					      We can get \( \GGV \TW_B \tau\SB :: * @ A \) after repeatedly using \KTW{}.

				\end{itemize}
			\end{rneqncase}

			\begin{rneqncase}{\QGen{}}{
					\mathcal{J} = \Lambda\alpha.M \E \Lambda\alpha.N : \forall\alpha.\tau@A\\
					\G\V M\E N : \tau@A &  \alpha \notin \FTV(\G)\cup\FTV(A).
				}
				From the induction hypothesis, \( \GGV M\SB \E N\SB : \tau\SB@A\SB \).
				We can assume \(\alpha \notin B \) because \(\alpha\) is a bound variable.
				Thus, \( \alpha \notin \FTV(\G\SB)\cup\FTV(A\SB) \).
				\( \GGV \Lambda\alpha.M\SB \E \Lambda\alpha.N\SB : \tau\SB@A\SB \) from \QGen{} and
				it is identical to \( \GGV (\Lambda\alpha.M)\SB \E (\Lambda\alpha.N)\SB : \tau\SB@A\SB \).
			\end{rneqncase}
		}
	\qed
\end{proof}

\begin{lemma}[Agreement]
	\begin{enumerate}
		\item If \(\G\V \tau::K@A\), then \(\G\V K\iskind@A \).
		\item If \(\G\V M:\tau@A\), then \(\G\V \tau::*@A\).
		\item If \(\G\V K\E J@A\), then \(\G\V K\iskind@A\) and \(\G\V J\iskind@A\).
		\item If \(\G\V \tau\E \sigma :: K@A\), then \(\G\V \tau::K@A\) and \(\G\V \sigma::K@A\).
		\item If \(\G\V M\E N : \tau@A\), then \(\G\V M:\tau@A\) and \(\G\V N:\tau@A\).
	\end{enumerate}
\end{lemma}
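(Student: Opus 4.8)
The plan is to prove all five statements at once, by a single simultaneous induction on the derivations of the kinding, typing, and three equivalence judgments, with a case analysis on the last rule used. The five parts are genuinely entangled---\TConv{} in part~2 appeals to part~4, and the congruence \QTApp{} in part~4 appeals to part~5---so they cannot be sequenced and must share one induction. Throughout I would use Weakening and the two Substitution Lemmas freely, and rely on the standing assumption that all type environments and signatures are well-formed.

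Most cases are routine and fall into three groups. For the leaf rules \TVar, \TConst, and \KTConst{} the required well-formedness is read off from the environment and signature: for $x:\tau@A\in\G$ one obtains $\G\V\tau::*@A$ by Weakening from the point where $x$ was declared, while for $c:\tau\in\Sigma$ one first gets $\G\V\tau::*@\varepsilon$ by Weakening and then lifts $\varepsilon$ to $A$ by repeated use of \KCsp. For the introduction rules (\TAbs, \TTB, \TGen, \TCsp{} and the kind-level \WAbs, \KTW, \KGen, \KCsp) I would feed the premises to the induction hypothesis and reapply the matching formation rule, the side condition $\alpha\notin\FTV(\G)\cup\FTV(A)$ transferring verbatim from \TGen{} to \KGen. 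For the equivalence judgments (parts~3--5) the reflexivity rules return their premise directly, symmetry and transitivity merely recombine the two-sided hypotheses, each congruence rule assembles the kinding/typing of both sides componentwise (the application congruences \QTApp, \QApp, \QIns{} additionally invoking a Substitution Lemma for the substitution in the conclusion), and the computational rules \QBeta, \QLambda, \QTBLTB, \QPercent{} retype each side directly---for \QBeta, say, the redex via \TAbs/\TApp{} and the contractum via Term Substitution.

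The real work is in the elimination rules \TApp, \TTBL, \TIns{} and the kind-level \KApp. In \TApp{} the hypothesis yields $\G\V\Pi x:\sigma.\tau::*@A$, from which I must extract $\G,x:\sigma@A\V\tau::*@A$ and then apply Term Substitution with $[x\mapsto N]$ to reach $\G\V\tau[x\mapsto N]::*@A$; likewise \TTBL{} must turn $\TW_\alpha\tau::*@A$ into $\tau::*@A\alpha$, and \TIns{} must turn $\F\alpha.\tau::*@A$ into a stage-substituted kinding of $\tau$ via Stage Substitution. Each of these needs a \emph{generation} (inversion) lemma for kinding, and this is where I expect the main obstacle: the generation lemma must be pushed past the two rules that ignore the syntactic shape of the type, namely \KConv{} and the implicit CSP rule \KCsp. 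The \KConv{} direction is mild---type equivalence never rewrites the outermost constructor, and a companion inversion on kind equivalence shows that a kind equivalent to $*$ is $*$---but \KCsp{} is delicate, since it only appends a stage variable in the trailing position; reconstructing, e.g., the content of $\TW_\alpha\tau$ at exactly $A\alpha$ after a CSP step forces an auxiliary stage-manipulation property for type-level CSP, which I expect to be the genuinely hard point to get right.

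Finally, the \TConv{} case needs the equivalence premise $\G\V\tau\E\sigma::K@A$ to be pinned to kind $*$ so that part~4 delivers $\G\V\sigma::*@A$; this couples part~4 with part~2 (a term's type has kind $*$) but depends on the kind of a type being determined up to equivalence, which I would isolate and verify as a small separate lemma. By contrast the \KConv{} case of part~1 is immediate, since part~3 already returns both kinds as well-formed.
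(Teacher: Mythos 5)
Your proposal follows essentially the same route as the paper's proof: a simultaneous induction on the mutually recursive derivations with case analysis on the last rule, and identical handling of the cases the paper actually exhibits (\QBeta{} retyped via \TAbs{}/\TApp{} for the redex and Term Substitution for the contractum, \TCsp{} discharged by \KCsp{}). Note that the paper's proof is only a three-case sketch (\KCsp{}, \TCsp{}, \QBeta{}) that never shows the elimination rules, so the kinding generation/inversion machinery you rightly single out as the real work for \TApp{}, \TTBL{}, \TIns{}, and \KApp{}---including pushing inversion past \KConv{} and the implicit \KCsp{}---is left entirely implicit there; your account is, if anything, the more complete of the two.
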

\begin{proof}

	We can prove using induction on the derivation tree.
	We show some cases as examples.

	\begin{rneqncase}{\KCsp{}}{
			\G \V K :: * @A\alpha \text{ is derived from } \G \V K :: * @A.
		}
		By \WStar, \(\G\V *\iskind @A\alpha\).
	\end{rneqncase}

	\begin{rneqncase}{\TCsp{}}{
			\G\V \%_\alpha M :: \tau @A\alpha \text{ is derived from } \G\V M :: \tau @A
		}
		By the induction hypothesis, \( \G \V \tau :: * @ A \).
		By \KCsp, \( \G \V \tau :: * @ A\alpha \).
	\end{rneqncase}
	\begin{rneqncase}{\QBeta{}}{
			\G\V(\lambda x:\sigma.M)\ N\E M[x\mapsto N] : \tau[x \mapsto N]@A \text{ is derived from }\\
			\G,x:\sigma@A\V M:\tau@A \text{ and } \G\V N:\sigma@A.
		}
		From \TAbs{} and \TApp, \( \G\V(\lambda x:\sigma.M)\ N : \tau[x \mapsto N]@A \).
		From Term Substitution, \( \G\V M[x\mapsto N] : \tau[x \mapsto N]@A \).
	\end{rneqncase}
	\qed
\end{proof}

As we said in Section \ref{sec:properties}, we generalize Inversion Lemma to use induction.
\begin{lemma}[Inversion Lemma for $\Pi$ Type]
	\begin{enumerate}
		\item If $\G \V (\lambda x:\sigma.M) : \rho$, then there are $\sigma'$ and $\tau'$ such that
		      $\rho = \Pi x:\sigma'.\tau'$, $\G \V \sigma \E \sigma'@A$ and $\G ,x:\sigma'@A\V M:\tau'@A$.
		\item If $\G \V \rho \E (\Pi x:\sigma.\tau) : K @A$, then there are $\sigma', \tau', K$, and $J$ such that
		      $\rho = \Pi x:\sigma'.\tau'$, $\G \V \sigma \E \sigma' : K @A$, and $\G, x:\sigma@A\V \tau \E \tau' : J @A$.
		\item If $\G \V (\Pi x:\sigma.\tau) \E \rho : K @A$, then there are $\sigma', \tau', K$, and $J$ such that
		      $\rho = \Pi x:\sigma'.\tau'$, $\G \V \sigma \E \sigma' : K @A$, and $\G, x:\sigma@A\V \tau \E \tau' : J @A$.
	\end{enumerate}
\end{lemma}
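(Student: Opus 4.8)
The plan is to prove the three statements \emph{simultaneously} by induction on the derivation of the hypothesis, since they are mutually dependent: the typing inversion~(1) appeals to the equivalence inversion~(3) in its \TConv{} case, while the two equivalence inversions (2) and (3) call each other through \QTSym{} and call themselves through \QTTrans{}. I would induct on the height of the hypothesis derivation; in every case the appeals to the induction hypothesis are on strictly smaller subderivations, so the simultaneous induction is well-founded.

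For part~(1), the derivation of $\G \V (\lambda x:\sigma.M) : \rho$ can end only in \TAbs{} or \TConv{}, since $\lambda$-abstractions are introduced only by \TAbs{} and \TConv{} is the one rule applicable to a term without inspecting its shape. The \TAbs{} case is the base case: it yields $\rho = \Pi x:\sigma.\tau$ directly, and I take $\sigma' = \sigma$ (so $\G \V \sigma \E \sigma@A$ by reflexivity, using the well-kindedness premise of \TAbs) and $\tau' = \tau$. In the \TConv{} case we have $\G \V (\lambda x:\sigma.M) : \pi$ and $\G \V \pi \E \rho :: *@A$; by IH~(1) we get $\pi = \Pi x:\sigma_1.\tau_1$ with $\G \V \sigma \E \sigma_1@A$ and $\G, x:\sigma_1@A \V M : \tau_1@A$, and then by IH~(3) applied to $\G \V \Pi x:\sigma_1.\tau_1 \E \rho$ we obtain $\rho = \Pi x:\sigma'.\tau'$ together with $\G \V \sigma_1 \E \sigma'@A$ and $\G, x:\sigma_1@A \V \tau_1 \E \tau'@A$. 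Composing with \QTTrans{} gives $\G \V \sigma \E \sigma'@A$, and \TConv{} transports the type of $M$ from $\tau_1$ to $\tau'$.

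Parts~(2) and~(3) are symmetric, so I carry them together. Their derivation can end in \QTAbs{}, \QTRefl{}, \QTCsp{}, \QTSym{}, or \QTTrans{}; the remaining equivalence rules \QTApp{}, \QTTW{}, and \QTGen{} conclude equivalences whose two sides are, respectively, application types, code types, and $\F$-types, none of which is syntactically a $\Pi$-type, so those cases are vacuous. The \QTAbs{} case is the base case (both sides are already $\Pi$-types carrying the required component equivalences) and \QTRefl{} is immediate. The \QTCsp{} case lowers the stage from $A\alpha$ to $A$ in the premise but leaves the syntactic shape of both types unchanged, so the induction hypothesis applies directly and I re-apply \QTCsp{} to the resulting component equivalences. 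The \QTSym{} case turns an instance of~(2) into an instance of~(3) and vice versa: I invoke the other part's hypothesis and flip each component equivalence with \QTSym{}. The \QTTrans{} case is the crux: from $\rho \E \mu$ and $\mu \E \Pi x:\sigma.\tau$ I first invert the second premise to learn $\mu = \Pi x:\sigma_1.\tau_1$, then invert the first premise—whose right-hand side is now known to be a $\Pi$-type—to learn $\rho = \Pi x:\sigma'.\tau'$, and finally glue the two layers of component equivalences together with \QTTrans{}.

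The main obstacle is the \TConv{} case of part~(1). After transporting the type of $M$ to $\tau'$, the body is still typed in the context $x:\sigma_1@A$, whereas the statement demands the context $x:\sigma'@A$ read off from $\rho$. Bridging this gap needs a \emph{context-conversion} property: replacing a variable's declared type by an equivalent one preserves typing. This is not among the lemmas stated so far, so I would either establish it as a small auxiliary lemma (by a routine induction on the typing derivation, rewriting each use of the variable through \TConv{} and appealing to Agreement for well-kindedness) or fold it into the simultaneous induction. Everything else is bookkeeping; the transitivity composition and this context conversion are the only genuinely delicate points.
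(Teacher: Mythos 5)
Your proposal is correct and takes essentially the same approach as the paper: the three statements are strengthened and proved simultaneously by induction on derivations, with \TAbs{}/\QTAbs{} as base cases, \TConv{} discharged through the equivalence inversions plus \QTTrans{}, \QTSym{} swapping parts (2) and (3), and \QTTrans{} handled by inverting twice and gluing. You are in fact more careful than the paper, whose \TConv{} case silently elides exactly the context-conversion step you flag (passing from $\G, x:\sigma_1@A \V M:\tau'@A$ to $\G, x:\sigma'@A \V M:\tau'@A$); that auxiliary lemma (or folding it into the simultaneous induction) is genuinely needed for the case to go through.
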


\begin{proof}
	We can prove using induction on the derivation tree.
	We show a few main cases.

	\begin{rneqncase}{\TAbs{}}{
			\G\V \sigma::*@A \text{ and } \G,x:\sigma@A\V M:\tau@A.
		}
		We can take $\sigma$ and $\tau$ as $\sigma'$ and $\tau'$ in the statement.
	\end{rneqncase}

	\begin{rneqncase}{\TConv{}}{
			\G \V (\lambda x:\sigma.M) : \rho@A \text{ and } \G \V \rho \E (\Pi x:\sigma'.\tau)@A.
		}
		There are $\sigma'$ and $\tau'$ such that
		$\rho = \Pi x:\sigma'.\tau'$, $\G \V \sigma \E \sigma'@A$ and $\G ,x:\sigma'@A\V M:\tau'@A$
		by the induction hypothesis.
	\end{rneqncase}

	\begin{rneqncase}{\QTRefl{}}{}
		There two cases for the conclusion.
		\begin{itemize}
			\item If $\G \V \rho \E (\Pi x:\sigma.\tau) : K @A$,
			      we can use the third statement of the lemma as the induction hypothesis.
			\item If $\G \V (\Pi x:\sigma.\tau) \E \rho : K @A$,
			      we can use the second statement of the lemma as the induction hypothesis.
		\end{itemize}
	\end{rneqncase}
	\qed
\end{proof}

\begin{lemma}[Inversion Lemma for $\TW$ type]
	\begin{item}
	      \item If $\G \V \TB_\alpha M : \tau@A$, then
	      there is $\sigma$ such that $\tau = \TW_\alpha \sigma$ and $\G \V M : \sigma@A$.
	      \item If $\G \V \rho \E \TW_\alpha \tau : K @A$, then there are $\tau', K$, and $J$ such that
	      $\rho = \TW_\alpha \tau'$ and $\G \V \tau \E \tau' : K @A$.
	      \item If $\G \V \TW_\alpha \tau \E \rho : K @A$, then there are $\tau', K$, and $J$ such that
	      $\rho = \TW_\alpha \tau'$ and $\G \V \tau \E \tau' : K @A$.
	\end{item}
\end{lemma}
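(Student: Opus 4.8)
The plan is to prove all three parts simultaneously by induction on derivations, exactly mirroring the structure used for the preceding Inversion Lemma for $\Pi$ Type: part~1 is the statement about the term $\TB_\alpha M$, while parts~2 and~3 are the two orientations of the accompanying type-equivalence statement that are needed to make the induction go through under \TConv.

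For part~1, I would perform a case analysis on the last rule of the derivation of $\G \V \TB_\alpha M : \tau @ A$. Since $\TB_\alpha M$ is a quotation, the only rules that can conclude such a judgment are \TTB{} and \TConv. In the \TTB{} case the premise is exactly $\G \V M : \sigma @ A\alpha$ with $\tau = \TW_\alpha \sigma$, so the witness $\sigma$ and the required typing of $M$ can be read off directly. In the \TConv{} case we have $\G \V \TB_\alpha M : \rho @ A$ together with $\G \V \rho \E \tau :: K @ A$; applying the induction hypothesis for part~1 to the subderivation gives $\rho = \TW_\alpha \sigma'$ and a typing of $M$, and applying part~2 (or part~3) to the equivalence gives $\tau = \TW_\alpha \sigma''$ with $\G \V \sigma' \E \sigma''$; one last use of \TConv{} retypes $M$ at $\sigma''$.

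For parts~2 and~3, I would analyze the last rule deriving a type equivalence one side of which is $\TW_\alpha \tau$. The only rules that can produce such an equivalence are the congruence rule \QTTW, the stage-lifting rule \QTCsp, and the structural rules \QTRefl, \QTSym, and \QTTrans. \QTTW{} and \QTRefl{} give the conclusion directly; \QTCsp{} preserves the top-level $\TW$ constructor and only changes the stage, so the induction hypothesis applies unchanged; \QTSym{} simply exchanges parts~2 and~3; and \QTTrans{} composes two applications of the induction hypothesis, using \QTTrans{} again to combine the component equivalences.

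The crux of the argument---and the step I expect to need the most care---is establishing that no equivalence rule can relate a $\TW$ type to a type with a different outermost constructor. This is what licenses reading off $\rho = \TW_\alpha \tau'$ in parts~2 and~3, and hence the $\tau = \TW_\alpha \sigma$ conclusion in part~1. It holds because \LMD has no type-level computation (there is no type-level $\beta$-redex, only applications of type operators to terms), so every equivalence axiom is a shape-preserving congruence; the only remaining subtlety is threading the mutual induction between the typing part and the two equivalence parts correctly, precisely as in the $\Pi$-type case.
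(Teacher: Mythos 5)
Your proposal is correct and takes essentially the same route as the paper: the paper likewise strengthens the typing statement with the two orientations of the type-equivalence statement, proves all three simultaneously by induction on derivations, handles \TTB{} by reading the witness off the premise, and handles \TConv{} by applying the induction hypothesis to both subderivations and finishing with one more use of \TConv{}. Your enumeration of the equivalence cases (\QTTW{}, \QTCsp{}, \QTRefl{}, \QTSym{}, \QTTrans{}) is in fact more explicit than the paper's proof, which omits those cases.
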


\begin{proof}
	We can prove using induction on the derivation tree.
	We show some cases as examples.

	\begin{rneqncase}{\TTB{}}{\G \V M : \sigma'@A\alpha.}
		We can take $\sigma'$ as $\sigma$.
	\end{rneqncase}
	\begin{rneqncase}{\TConv{}}{
			\G \V \TB_\alpha M : \tau'@A \text{ and } \G\V\tau' \E \tau :: K@A.
		}
		There is $\sigma$ such that $\tau' = \TW_\alpha \sigma$ and $\G \V M : \sigma@A$
		by using the induction hypothesis to \( \G \V \TB_\alpha M : \tau' @A\).
		There are $\sigma'$ and $K'$ such that $\tau = \TW_\alpha \sigma'$ and $\G \V \sigma \E \sigma' : K @ A$
		by using the induction hypothesis to \( \G \V \TW_\alpha \sigma : \tau :: K' @A\).
		Finally, $\G \V M : \sigma' @ A $ by \TConv.
	\end{rneqncase}
	\qed
\end{proof}

\begin{lemma}[Inversion for $\Lambda$ Type]
	\begin{item}
	      \item If $\G \V \Lambda\alpha.M : \tau$, then
	      there is $\sigma$ such that $\sigma = \forall\alpha.\sigma$ and $\G \V M : \sigma@A$.
	      \item If $\G \V \rho \E \forall\alpha.\tau : K @A$, then there are $\tau', K$ such that
	      $\rho = \forall\alpha.\tau'$ and $\G \V \tau \E \tau' : K @A$.
	      \item If $\G \V \forall\alpha.\tau \E \rho : K @A$, then there are $\tau', K$ such that
	      $\rho = \forall\alpha.\tau'$ and $\G \V \tau \E \tau' : K @A$.
	\end{item}
\end{lemma}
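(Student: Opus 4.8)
The plan is to prove the three statements simultaneously by induction on the derivation, exactly as in the preceding inversion lemmas for $\Pi$- and $\TW$-types. (I first read the first statement as $\tau = \F\alpha.\sigma$, correcting the evident typo.) The conceptual content is identical across all the inversion lemmas: no type-equivalence axiom equates a type whose outermost constructor is $\F$ with a type of any other shape, so that constructor is preserved under $\E$; the auxiliary statements (2) and (3) are precisely what makes this fact available as an induction hypothesis.

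For the typing statement (1), I would analyze the last rule deriving $\G \V \Lambda\alpha.M : \tau @ A$. Only two rules can apply. If it is \TGen, then $\tau = \F\alpha.\sigma$ with $\G \V M : \sigma @ A$ directly, discharging the case. If it is \TConv, the premises are $\G \V \Lambda\alpha.M : \tau' @ A$ and $\G \V \tau' \E \tau :: K @ A$; I apply the induction hypothesis for (1) to the first premise to obtain $\tau' = \F\alpha.\sigma$ and $\G \V M : \sigma @ A$, then apply statement (3) to $\G \V \F\alpha.\sigma \E \tau :: K @ A$ to get $\tau = \F\alpha.\sigma'$ with $\G \V \sigma \E \sigma' : K @ A$, and finally re-apply \TConv to conclude $\G \V M : \sigma' @ A$.

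For the equivalence statements (2) and (3), the direct case is \QTGen, which yields both the required shape of $\rho$ and the body equivalence immediately. The remaining cases are the structural rules: \QTRefl recurses into (2) or (3) depending on the orientation of the conclusion; \QTSym exchanges statements (2) and (3); \QTTrans chains two uses of the induction hypothesis, threading the intermediate type, which again must be a $\F$-type by the hypothesis; and \QTCsp leaves the shape of $\rho$ untouched, only lifting the stage from $A$ to $A\alpha$, so the body equivalence carries over unchanged.

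I expect no genuine obstacle here. The only delicate point is bookkeeping: the three statements must be carried as a single simultaneous induction so that symmetry and transitivity of $\E$ can appeal to one another, and one must check that every congruence axiom on types preserves the top constructor. Compared with the $\Pi$-type inversion lemma this case is in fact lighter, since $\F\alpha.\tau$ binds only a stage variable and introduces no term variable into $\G$, so there is no substitution into the context to track.
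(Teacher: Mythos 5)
Your proposal is correct and takes essentially the same route as the paper: the paper proves this lemma (like its $\Pi$- and $\TW$-type siblings) by simultaneous induction on the derivation, with statements (2) and (3) serving exactly as the strengthening needed to push through the \TConv{} and \QTTrans{}/\QTSym{}/\QTCsp{} cases. Your case analysis (\TGen{} and \TConv{} for typing; \QTGen{} plus the structural equivalence rules for (2) and (3)) is complete and matches the argument the paper sketches, including the correction of the typo $\sigma = \forall\alpha.\sigma$ to $\tau = \forall\alpha.\sigma$.
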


\begin{proof}
	We can prove using induction on the derivation tree.

	\qed
\end{proof}

\begin{lemma}[Inversion Lemma for Application]
	\begin{item}
	      \item If $\G \V (\lambda x:\sigma.M)\ N: \tau@A$ then there are $x$ and $\rho$ such that
	      $\G, x:\sigma \V M : \rho @A$ and $\G \V N:\sigma @ A$.
	\end{item}
\end{lemma}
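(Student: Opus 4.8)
The plan is to proceed by induction on the derivation of $\G \V (\lambda x:\sigma.M)\ N : \tau@A$, analyzing the last rule applied. Since the subject $(\lambda x:\sigma.M)\ N$ is a term-level application $M_1\ N$ whose operator is a $\lambda$-abstraction (and $N$ is a term, not a stage), the only typing rules that can conclude this judgment are \TApp{} and \TConv; every other rule produces a term of an incompatible shape.

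In the \TConv{} case the premise is $\G \V (\lambda x:\sigma.M)\ N : \tau'@A$ for some $\tau'$ with $\G \V \tau' \E \tau :: K@A$. Since the conclusion of the lemma does not mention the overall type $\tau$ at all, the induction hypothesis applied to the premise yields directly the desired $\rho$ together with $\G, x:\sigma@A \V M:\rho@A$ and $\G \V N:\sigma@A$, and we are done. In the \TApp{} case the premises are $\G \V (\lambda x:\sigma.M) : (\Pi x:\sigma_1.\tau_1)@A$ and $\G \V N:\sigma_1@A$ (with $\tau = \tau_1[x \mapsto N]$). I then apply the first clause of the Inversion Lemma for $\Pi$ Type to the former premise, obtaining $\sigma'$ and $\tau'$ with $\Pi x:\sigma_1.\tau_1 = \Pi x:\sigma'.\tau'$ (hence $\sigma_1 = \sigma'$ and $\tau_1 = \tau'$), together with $\G \V \sigma \E \sigma' :: *@A$ and $\G, x:\sigma'@A \V M : \tau'@A$. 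By Agreement on this equivalence, both $\G \V \sigma :: *@A$ and $\G \V \sigma' :: *@A$ hold, so the context $\G, x:\sigma@A$ is well-formed. For $N$ I combine $\G \V N:\sigma'@A$ with the symmetric equivalence $\G \V \sigma' \E \sigma :: *@A$ (by \QTSym) through \TConv{} to obtain $\G \V N:\sigma@A$, exactly as required.

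The remaining gap, and the main obstacle, is relocating $M$ from the context $\G, x:\sigma'@A$ to $\G, x:\sigma@A$: I have $\G, x:\sigma'@A \V M:\tau'@A$ but must produce $\G, x:\sigma@A \V M:\rho@A$, and the two declared types $\sigma,\sigma'$ are only equivalent, not syntactically identical (consider $\sigma = \textrm{Vector}\ (4+1)$ and $\sigma' = \textrm{Vector}\ 5$). This calls for a context-conversion (narrowing) lemma that is not among the results proved earlier: if $\G, x:\sigma'@A, \D \V \mathcal{J}$, $\G \V \sigma :: *@A$, and $\G \V \sigma \E \sigma' :: *@A$, then $\G, x:\sigma@A, \D \V \mathcal{J}$. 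I would establish this first, by induction on the derivation of $\G, x:\sigma'@A, \D \V \mathcal{J}$; the only delicate case is \TVar{} looking up $x$, where re-typing $x$ at its new declared type $\sigma$ by \TVar{} and then coercing back to $\sigma'$ by \TConv{}---using the equivalence transported into the larger context by Weakening---preserves the looked-up type, and all remaining cases are congruence steps that go through by the induction hypothesis.

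Granting narrowing, I apply it to $\G, x:\sigma'@A \V M:\tau'@A$ to obtain $\G, x:\sigma@A \V M:\tau'@A$, and take $\rho = \tau'$; together with $\G \V N:\sigma@A$ from the previous paragraph, this completes the \TApp{} case and hence the proof. I expect the narrowing lemma to be the one genuinely new ingredient, while the rest is a routine assembly of the $\Pi$-inversion lemma, Agreement, \QTSym, and \TConv.
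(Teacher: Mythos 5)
Your proof is correct, and its skeleton---induction on the typing derivation, observing that only \TApp{} and \TConv{} can conclude a judgment whose subject is $(\lambda x:\sigma.M)\ N$, with the \TConv{} case discharged trivially because the lemma's conclusion never constrains the overall type $\tau$---is exactly the induction the paper gestures at; the paper's own proof of this lemma is a one-line sketch with no cases written out. Where you genuinely go beyond the paper is in the \TApp{} case: you correctly notice that the Inversion Lemma for $\Pi$ Type, as literally stated, types the body in the context $\G, x:\sigma'@A$ (with $\sigma'$ the domain of the $\Pi$-type), whereas the present lemma demands $\G, x:\sigma@A$ with the annotation type, and these agree only up to equivalence. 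Bridging them does require a context-conversion (narrowing) lemma, which appears nowhere in the paper; indeed the paper silently elides the same mismatch in the \TConv{} case of its own $\Pi$-inversion proof, and in its Preservation-for-$\beta$ argument it quietly uses a rephrased inversion ($\G, x:\sigma@A \V M:\tau@A$ together with $\G, x:\sigma@A \V \tau \E \tau'@A$) that does not match the stated lemma. So your narrowing lemma is a genuine and necessary addition, at the cost of one more simultaneous induction over all the mutually recursive judgment forms. An alternative that avoids narrowing altogether is to restate the $\lambda$-inversion so the body is typed under the annotation context from the start: prove by induction that $\G \V (\lambda x:\sigma.M):\rho@A$ implies there exist $\tau, K$ with $\G, x:\sigma@A \V M:\tau@A$ and $\G \V (\Pi x:\sigma.\tau) \E \rho :: K@A$; then the \TConv{} case needs only transitivity of type equivalence, and the application lemma follows by taking $\rho = \tau$ and one use of \QTSym{} and \TConv{} on $N$. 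Either route is sound: yours is self-contained given the lemmas as the paper states them, the other requires reformulating the $\Pi$-inversion lemma but keeps the metatheory leaner.
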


\begin{proof}
	We can prove using induction on the derivation tree.
	\qed
\end{proof}

\begin{theorem}[Preservation for $\beta$-Reduction]
	If $\G\V M:\tau@A$ and $M \longrightarrow_{\beta} M'$, then $\G\V M':\tau@A$.
\end{theorem}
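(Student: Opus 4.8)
The plan is to proceed by induction on the derivation of $\G \V M : \tau @ A$, with a case analysis on the last typing rule applied and, within each case, on where in $M$ the contracted redex occurs. Because $\longrightarrow_\beta$ is the compatible closure of the single rule $(\lambda x:\sigma.N)\,P \longrightarrow_\beta N[x\mapsto P]$, each case splits into the possibility that the redex sits at the root of $M$ and the possibilities that it lies inside one of the immediate subterms. Since types carry terms as indices and $\lambda$-abstractions carry type annotations, a fully rigorous argument generalizes the statement to all judgment forms and runs the induction simultaneously; I focus here on the typing judgment and treat the rest as carried along.

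The essential case is \TApp{} with the redex at the root, i.e.\ $M = (\lambda x:\sigma.N)\,P$ and $M' = N[x\mapsto P]$. Here I would apply the Inversion Lemma (for application, and for $\Pi$ types as needed) to recover $\sigma'$ and $\tau'$ with $\G, x:\sigma'@A \V N : \tau'@A$, $\G \V P : \sigma'@A$, and $\tau$ equivalent to $\tau'[x\mapsto P]$. The Term Substitution Lemma then yields $\G \V N[x\mapsto P] : \tau'[x\mapsto P]@A$, and one final use of \TConv{} adjusts the type to $\tau$. This is precisely where Inversion and Term Substitution pull their weight.

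All remaining cases are congruence cases in which the redex lies in a proper subterm; each is discharged by the induction hypothesis followed by a reapplication of the same typing rule. The delicate congruence case is again \TApp: if $M = M_1\,M_2$ with $\tau = \rho[x\mapsto M_2]$ and the redex is inside the argument $M_2$, then reducing $M_2$ to $M_2'$ and reapplying \TApp{} produces the type $\rho[x\mapsto M_2']$ rather than $\tau$. To repair this I would use that $\beta$-reduction is contained in term equivalence, so that $\G \V M_2 \E M_2' : \sigma @ A$ holds by \QBeta{} together with the congruence rules for equivalence, and that type equivalence is a congruence closed under substitution, giving $\G \V \rho[x\mapsto M_2'] \E \rho[x\mapsto M_2] :: K @ A$; one application of \TConv{} then restores the original type $\tau$.

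I expect this last point to be the main obstacle: the dependency of the result type on the argument term in \TApp{} (and, symmetrically, on index terms appearing in type annotations or index positions when the redex lies inside a type) means preservation cannot be read off syntactically but must be mediated through the equivalence judgments. Making this precise requires the auxiliary facts that $\longrightarrow_\beta$ implies the corresponding equivalence and that type and kind equivalence are congruences respecting substitution---both following from the equivalence rules (notably \QBeta{}) and the Term Substitution Lemma, but needing to be established across the mutually recursive judgment forms before the induction closes.
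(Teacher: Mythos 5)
Your proposal is correct and follows the same overall strategy as the paper's proof: induction on the typing derivation, with the root-redex case under \TApp{} discharged exactly as the paper does it --- Inversion for $\Pi$ types, one use of \TConv{} to retype the argument, then the Term Substitution Lemma. Where you genuinely depart from the paper is the congruence case $M_1\ M_2 \longrightarrow_\beta M_1\ M_2'$ with the redex inside the argument: the paper dismisses it with ``the induction hypothesis and \TApp'', but as you observe, re-applying \TApp{} after the induction hypothesis yields the type $\rho[x\mapsto M_2']$ rather than the original $\tau = \rho[x\mapsto M_2]$, so the case does not close without further argument. Your repair --- noting that $\beta$-reduction of well-typed terms is contained in term equivalence (via \QBeta{} and the congruence rules), invoking functionality of substitution for type equivalence to obtain $\G \V \rho[x\mapsto M_2'] \E \rho[x\mapsto M_2] :: K @ A$, and finishing with \TConv{} --- is exactly what is needed, and is the standard move in dependently typed calculi. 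Be aware that the two auxiliary facts you rely on are stated nowhere in the paper: the equivalence rules give \QTApp{} only for types that are syntactically applications, not for substitution instances of arbitrary types, so both ``reduction implies equivalence'' and functionality must be proved as separate lemmas by their own inductions over the mutually recursive judgment forms, as you correctly flag. In this one respect your argument is more complete than the paper's own proof, which is silent at precisely the point where type dependency actually matters.
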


\begin{proof}
	{
		We can prove using induction on the type derivation tree.
		We show some cases as examples.
		\newcommand{\LB}{\longrightarrow_{\beta}}
		\begin{rneqncase}{\TApp{}}{}
			The shape of the reduction is one of following three.
			\begin{itemize}
				\item $(\lambda x:\sigma.N)\ L \LB N[x\mapsto L]$.\\
				      Because the last rule of the type derivation tree is \TApp,
				      we have $\G \V (\lambda x:\sigma.N) : (\Pi x:\sigma'.\tau')@A$ and
				      $\G \V L:\sigma' @A$.
				      By using Inversion Lemma for $\Pi$ type to $\G \V (\lambda x:\sigma.N) : (\Pi x:\sigma'.\tau')@A$,
				      we get $\G, x:\sigma \V N:\tau$ and $\G \V \sigma \E \sigma'$ and $\G ,x:\sigma \V \tau \E \tau'@A$.
				      By \TConv , $\G \V L:\sigma @A$.
				      By Term Substitution Lemma,
				      we get $\G \V N[x\mapsto L]:\tau[x\mapsto L]$.

				\item $M\ N \LB M'\ N$.\\
				      From the induction hypothesis and \TApp, the type is preserved for the reduction.
				\item $M\ N \LB M\ N'$.\\
				      From the induction hypothesis and \TApp, the type is preserved for the reduction.
			\end{itemize}
		\end{rneqncase}
	}
	\qed
\end{proof}

\begin{theorem}[Preservation for Term on $\blacklozenge$-Reduction]
	If $\G\V M:\tau@A$ and $M\longrightarrow_\blacklozenge N$, then $\G\V N:\tau@A$\\
\end{theorem}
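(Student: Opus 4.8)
The plan is to mirror the proof of Preservation for $\beta$-Reduction and proceed by induction on the derivation of $\G \V M : \tau @ A$, with a case analysis on the last typing rule. Because $\longrightarrow_\blacklozenge$ is the compatible closure of $\TBL_\alpha(\TB_\alpha M) \longrightarrow_\blacklozenge M$, in each case the contracted redex is either $M$ itself or sits inside one of its subterms, and these two situations are treated separately.

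The key computational case is \TTBL. Here $M = \TBL_\alpha L$ with $\G \V L : \TW_\alpha \tau @ A'$ and $A = A'\alpha$. If the contracted redex is $M$ itself, then $L = \TB_\alpha N$ and $M \longrightarrow_\blacklozenge N$; I would apply the Inversion Lemma for $\TW$ type to $\G \V \TB_\alpha N : \TW_\alpha \tau @ A'$ to obtain $\G \V N : \tau @ A'\alpha$, that is, $\G \V N : \tau @ A$ as required. If instead the redex lies inside $L$, so $L \longrightarrow_\blacklozenge L'$ and $M \longrightarrow_\blacklozenge \TBL_\alpha L'$, the induction hypothesis gives $\G \V L' : \TW_\alpha \tau @ A'$ and \TTBL{} re-derives $\G \V \TBL_\alpha L' : \tau @ A$. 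The case \TConv{} is routine: the induction hypothesis preserves the premise type $\tau'$ and the same equivalence $\G \V \tau' \E \tau :: K @ A$ re-applies \TConv. All remaining rules (\TAbs, \TApp, \TTB, \TGen, \TIns, \TCsp) are congruence cases in which the redex occurs in a term subterm; each is discharged by the induction hypothesis followed by re-application of the same rule, while \TConst{} and \TVar{} are vacuous.

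The main obstacle is the dependent-type interaction: because types and kinds contain terms (through indices $\tau\ M$ and through the annotations in $\lambda x{:}\sigma.L$ and $\Pi x{:}\sigma.\tau$), the redex may be contracted at a position lying inside a type or kind rather than inside a term. In that situation the syntactic type of $M$ and of $N$ differ, so reapplying the typing rule is not immediate. To close this gap I would prove, by a simultaneous induction accompanying the main statement, the companion facts that if a term occurring inside a type $\sigma$ (resp. a kind $K$) is contracted by $\longrightarrow_\blacklozenge$, yielding $\sigma'$ (resp. $K'$), then $\G \V \sigma \E \sigma' :: * @ A$ (resp. $\G \V K \E K' @ A$). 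The base step of these companion facts is precisely the equivalence rule \QTBLTB, which gives $\G \V \TBL_\alpha(\TB_\alpha M) \E M : \tau @ A$, and the inductive steps follow from the congruence rules for type and kind equivalence (e.g. \QTApp). Equipped with these, the problematic congruence cases are finished by invoking \TConv{} (or \KConv{} at the kind level) to convert the type of $N$ back to $\tau$. I expect this coordination between term reduction and the equivalence judgments---rather than any single case---to be the crux, exactly as the soundness of $\blacklozenge$-reduction with respect to \QTBLTB{} is what makes the dependent setting go through.
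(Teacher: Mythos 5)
Your core argument coincides with the paper's proof: induction on the typing derivation, with the crucial case \TTBL{} split into the top-level redex $\TBL_\alpha(\TB_\alpha N) \longrightarrow_\blacklozenge N$, discharged by the Inversion Lemma for $\TW$ types, and the congruence case, discharged by the induction hypothesis plus \TTBL{}; the remaining rules are handled exactly as you describe.

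Where you diverge is the companion induction for redexes contracted inside types and kinds. The paper neither has nor needs this machinery: $\longrightarrow_\blacklozenge$ is defined only as a relation on \emph{terms}, and its compatible closure is taken over the term-forming constructs, so a redex sitting inside a type annotation (e.g., inside $\sigma$ in $\lambda x{:}\sigma.L$, or inside an index term in $\tau\ M$) is never contracted. Consequently, in every congruence case the type $\tau$ is preserved on the nose, and no appeal to \QTBLTB{}, type equivalence congruences, or \TConv{} is required beyond the ordinary \TConv{} case of the induction. Your companion lemma is sound, and it would indeed become the crux if reduction were extended to act inside annotations (as in some LF presentations where reduction is defined simultaneously on objects, families, and kinds), but under the paper's definition it is additional work that the theorem does not call for. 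So your proposal proves, in effect, a slightly stronger statement than the one the paper needs.
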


\begin{proof}
	{
		We can prove using induction on the type derivation tree.
		We show the case of \TTBL{} as examples.
		Other cases are easy.
		\newcommand{\R}{\longrightarrow_{\blacklozenge}}
		\begin{rneqncase}{\TTBL{}}{}
			There are two cases for $\R$.
			\begin{itemize}
				\item $\TBL\TB M \R M$.\\
				      Because the last rule is \TTBL, we have $\G \V \TB M : \TW_\alpha \tau @A$.
				      By using Inversion Lemma for $\TW$ type to $\G \V \TB M : \TW_\alpha \tau @A$,
				      we get $\G \V M : \tau @A$
				\item $\TBL M \R \TBL M'$.\\
				      We can use the induction hypothesis and \TTBL.
			\end{itemize}
		\end{rneqncase}
		\qed
	}
\end{proof}

\begin{theorem}[Preservation for Term on $\Lambda$ Reduction]
	If $\G\V M:\tau@A$ and $M \longrightarrow_{\Lambda} N$, then $\G\V N:\tau@A$.
\end{theorem}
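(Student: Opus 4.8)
The plan is to proceed by induction on the derivation of $\G \V M : \tau @ A$, exactly mirroring the two preceding preservation proofs, with a case analysis on the last typing rule used. The congruence (context) cases—where the $\Lambda$-redex lies strictly inside $M$—are routine: one applies the induction hypothesis to the reduced subterm and reassembles the derivation with the same typing rule. The only mild subtlety among these is the \TApp{} case when the reduction occurs in the argument, say $M\ N \longrightarrow_\Lambda M\ N'$: the naive reconstruction yields type $\tau[x \mapsto N']@A$ rather than $\tau[x \mapsto N]@A$, so I first note $\G \V N \E N' : \sigma @ A$ (using \QLambda{} together with congruence of term equivalence), derive the type equivalence $\tau[x \mapsto N] \E \tau[x \mapsto N']$, and close the gap with \TConv. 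A \TConv{} at the root is absorbed directly by the induction hypothesis in the usual way.

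The heart of the argument is the base case where the $\Lambda$-redex sits at the root, i.e. $(\Lambda\alpha.M_0)\ B \longrightarrow_\Lambda M_0[\alpha \mapsto B]$. Here the last rule must be \TIns, so we have $\G \V (\Lambda\alpha.M_0) : \forall\alpha.\tau_0 @ A$ with $\tau = \tau_0[\alpha \mapsto B]$. Applying the Inversion Lemma for $\Lambda$ type to this judgment yields $\G \V M_0 : \tau_0 @ A$. I then apply the Stage Substitution Lemma with $[\alpha \mapsto B]$ to obtain $\G[\alpha \mapsto B] \V M_0[\alpha \mapsto B] : \tau_0[\alpha \mapsto B] @ A[\alpha \mapsto B]$. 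By the side condition $\alpha \notin \FTV(\G) \cup \FTV(A)$ attached to the \TGen{} rule that must have introduced $\forall\alpha.\tau_0$, the stage variable $\alpha$ occurs free in neither $\G$ nor $A$, so $\G[\alpha \mapsto B] = \G$ and $A[\alpha \mapsto B] = A$. Hence $\G \V M_0[\alpha \mapsto B] : \tau_0[\alpha \mapsto B] @ A$, which is precisely $\G \V M_0[\alpha \mapsto B] : \tau @ A$, as required.

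I expect the main obstacle to be establishing the freshness of $\alpha$ cleanly. The Inversion Lemma for $\Lambda$ type, as stated, delivers only $\G \V M_0 : \tau_0 @ A$ and does not itself record the side condition, so I would either strengthen its statement to carry $\alpha \notin \FTV(\G) \cup \FTV(A)$ through the \TConv{} cases, or—more simply—invoke the convention that bound stage variables are chosen fresh with respect to the ambient judgment. The latter lets me assume $\alpha \notin \FTV(\G) \cup \FTV(A)$ without loss of generality and thereby neutralize the stage substitution on $\G$ and $A$, so that the conclusion of the Stage Substitution Lemma lands exactly on the desired judgment.
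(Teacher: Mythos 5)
Your proposal is correct and follows essentially the same route as the paper's proof: induction on the typing derivation, with the root-redex case handled by the Inversion Lemma for $\Lambda$ type, then the Stage Substitution Lemma, and finally the freshness convention on the bound stage variable $\alpha$ (so that $\G[\alpha \mapsto B] = \G$ and $A[\alpha \mapsto B] = A$), exactly as in the paper. Your extra care in the \TApp{} congruence case (repairing the type $\tau[x \mapsto N]$ versus $\tau[x \mapsto N']$ via term equivalence and \TConv) addresses a point the paper dismisses as ``easy,'' but it is a refinement of, not a departure from, the same argument.
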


\begin{proof}
	{
		\newcommand{\R}{\longrightarrow_{\Lambda}}
		We can prove using induction on the type derivation tree.
		We show the case of \TIns{}{} as examples.
		Other cases are easy.
		\begin{rneqncase}{\TIns{}}{}
			There are two cases $\R$.
			\begin{itemize}
				\item $(\Lambda\alpha.M)\ B \R M[\alpha \mapsto B]$\\
				      Because the last rule is \TIns{}, we have $\G \V \Lambda\alpha.M\ : \forall\alpha.\tau @ A$.
				      By using Inversion Lemma for $\Lambda$ type to $\G \V \Lambda\alpha.M\ : \forall\alpha.\tau @ A$,
				      we obtain $\G \V M : \tau @ A$.
				      By using Stage Substitution Lemma to $\G \V M : \tau @ A$,
				      we obtain $\G[\alpha \mapsto B] \V M[\alpha \mapsto B] : \tau[\alpha \mapsto B] @ A[\alpha \mapsto B]$.
				      $\alpha$ is a bound variable therefore we can assume $\alpha \notin \FTV(\G) \cup \FTV(A)$.
				      So, we can rewrite $\G[\alpha \mapsto B] \V M[\alpha \mapsto B] : \tau[\alpha \mapsto B] @ A[\alpha \mapsto B]$ to
				      $\G \V M[\alpha \mapsto B] : \tau[\alpha \mapsto B] @ A$.
				\item $M\ B \R M'\ B$\\
				      By the induction hypothesis and \TIns.
			\end{itemize}
		\end{rneqncase}
		\qed
	}
\end{proof}

\begin{dfn}[$\natural$ translation]
	The $\natural$ translation is a translation from $\lambda^\text{MD}$ to $\lambda^\to$.
	\begin{itemize}
		\item Term
		      \begin{flalign*}
			      \natural(x) &= x & \\
			      \natural(\lambda x:\tau.M) &= \lambda x:\natural(\tau).\natural(M) & \\
			      \natural(M\ N) &= \natural(M)\ \natural(N)& \\
			      \natural(\TB_\alpha M) &= \natural(M) & \\
			      \natural(\TBL_\alpha M) &= \natural(M)& \\
			      \natural(\Lambda\alpha.M) &= \natural(M)& \\
			      \natural(M\ B) &= \natural(M) &
		      \end{flalign*}
		\item Type
		      \begin{flalign*}
			      \natural(X) &= X & \\
			      \natural(\Pi x:\tau.\sigma) &= \natural(\tau) \to \natural(\sigma) & \\
			      \natural(\tau\ x) &= \natural(\tau) & \\
			      \natural(\TW_\alpha \tau) &= \natural(\tau) & \\
			      \natural(\forall \alpha.\tau) &= \natural(\tau) &
		      \end{flalign*}
		\item Kind
		      \begin{flalign*}
			      \natural(K) &= * &
		      \end{flalign*}
		\item Context
		      \begin{flalign*}
			      \natural(\phi) &= \phi & \\
			      \natural(\G, x:T@A) &= \natural(\G), \natural(x):\natural(\tau) & \\
			      \natural(\G, X:K@A) &= \natural(\G) &
		      \end{flalign*}
	\end{itemize}
\end{dfn}

\begin{lemma}[Preservation of Equality in $\natural$]
	If $\G \V \tau \E \sigma @ A$ then $\natural(\tau) = \natural(\sigma)$.
\end{lemma}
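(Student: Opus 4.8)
The plan is to proceed by induction on the derivation of the type-equivalence judgment $\G \V \tau \E \sigma :: K @ A$, with a case analysis on the last rule applied. The observation driving the whole argument is that $\natural$ erases exactly those features that the equivalence rules manipulate: stages, the code-type former $\TW_\alpha$, the $\F$-binder, and the term argument of a type-level application. Consequently, $\natural$ collapses type equivalence to a plain syntactic congruence on $\lambda^{\to}$ types, and each rule becomes easy to discharge.

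First I would dispatch the purely equational rules: \QTRefl{} gives $\tau \E \tau$, for which $\natural(\tau) = \natural(\tau)$ holds trivially, while \QTSym{} and \QTTrans{} follow at once from the induction hypotheses together with symmetry and transitivity of syntactic equality on $\lambda^{\to}$ types. Next come the erasure-style rules. For \QTCsp{} the two types are unchanged and only the stage is extended, so since $\natural$ ignores stages the goal is literally the induction hypothesis. For \QTTW{} I would use $\natural(\TW_\alpha\tau) = \natural(\tau)$ and $\natural(\TW_\alpha\sigma) = \natural(\sigma)$ to reduce the goal to $\natural(\tau) = \natural(\sigma)$, which is the induction hypothesis; \QTGen{} is handled identically via $\natural(\F\alpha.\tau) = \natural(\tau)$. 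The congruence rule \QTAbs{} is the only place where the arrow structure is genuinely used: I would compute $\natural(\Pi x:\tau.\rho) = \natural(\tau)\to\natural(\rho)$ and $\natural(\Pi x:\sigma.\pi) = \natural(\sigma)\to\natural(\pi)$ and conclude from the two induction hypotheses $\natural(\tau) = \natural(\sigma)$ and $\natural(\rho) = \natural(\pi)$ that the arrow types coincide.

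The case that merits attention, and the closest thing to an obstacle, is \QTApp. Its premises are $\G \V \tau \E \sigma :: (\Pi x:\rho.K)@A$ and a \emph{term} equivalence $\G \V M \E N : \rho @A$, so one might at first expect to need a companion lemma relating $\natural(M)$ and $\natural(N)$. However, since $\natural$ discards the argument of a type-level application, i.e.\ $\natural(\tau\ M) = \natural(\tau)$ and $\natural(\sigma\ N) = \natural(\sigma)$, the term-equivalence premise is irrelevant: the goal collapses to $\natural(\tau) = \natural(\sigma)$, which is exactly the induction hypothesis coming from the first premise. Thus no auxiliary statement about term equivalence is required, and the whole proof is a routine induction.
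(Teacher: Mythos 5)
Your proposal is correct and follows essentially the same route as the paper, whose proof is just the one-line remark that the lemma holds by induction on the derivation; you simply carry out the case analysis that the paper leaves implicit. Your cases are all discharged correctly, including the key observation in \QTApp{} that $\natural(\tau\ M) = \natural(\tau)$ makes the term-equivalence premise irrelevant, so no companion lemma on terms is needed.
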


\begin{proof}
	We can prove using induction on the type derivation tree.
	\qed\end{proof}

\begin{lemma}[Preservation of Typing in $\natural$]
	If $\G \V M:\tau@A$ in $\lambda^{\text{MD}}$ then $\natural(\G) \V \natural(M): \natural(\tau)$ in $\lambda^\to$.
\end{lemma}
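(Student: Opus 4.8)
The plan is to prove the statement by induction on the derivation of $\G \V M : \tau @ A$, after first isolating the single fact that makes the staging and the dependency disappear: the type translation $\natural(\cdot)$ is invariant under both term and stage substitution. Concretely, I would first show, by a routine structural induction on $\tau$, that $\natural(\tau[x \mapsto N]) = \natural(\tau)$ and $\natural(\tau[\alpha \mapsto B]) = \natural(\tau)$. These hold because the type translation discards every term and stage index: $\natural(\tau\ M) = \natural(\tau)$, $\natural(\TW_\alpha \tau) = \natural(\tau)$, and $\natural(\forall\alpha.\tau) = \natural(\tau)$, so no occurrence of the substituted variable survives translation. I would also fix the translated signature, mapping each $c:\sigma \in \Sigma$ to $c:\natural(\sigma)$ and each type-level constant $X::K$ to a base type $X$ of $\lambda^\to$, and extend $\natural$ to the CSP term former by $\natural(\%_\alpha M) = \natural(M)$, matching the treatment of the other staging constructs.

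With these in hand, the induction is largely mechanical. The cases \TVar{} and \TConst{} follow because $\natural$ keeps variable and constant names while the translated context and signature record exactly $\natural(\tau)$. The cases \TAbs{} and \TApp{} reduce to the ordinary abstraction and application rules of $\lambda^\to$, using $\natural(\Pi x:\sigma.\tau) = \natural(\sigma) \to \natural(\tau)$; in \TApp{} the conclusion carries the type $\tau[x \mapsto N]$, and here the substitution-invariance lemma is exactly what equates $\natural(\tau[x \mapsto N])$ with the $\natural(\tau)$ produced by the $\lambda^\to$ application rule. All the genuinely multi-stage rules collapse: since $\natural(\TB_\alpha M) = \natural(\TBL_\alpha M) = \natural(\Lambda\alpha.M) = \natural(M\ B) = \natural(\%_\alpha M) = \natural(M)$ and the corresponding type formers erase their annotations, the translated subject and its type are unchanged from the premise, so \TTB, \TTBL, \TGen, \TIns, and \TCsp{} each reduce immediately to the induction hypothesis (with \TIns{} again invoking invariance under stage substitution for the type $\tau[\alpha \mapsto B]$).

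The only case that needs an external ingredient is \TConv. There the subject is unchanged but the type is replaced by an equivalent one, $\G \V \tau \E \sigma :: K @ A$. I would discharge it using the already-established Preservation of Equality in $\natural$ lemma, which gives $\natural(\tau) = \natural(\sigma)$ as syntactically identical $\lambda^\to$ types; the induction hypothesis then yields the goal with no further typing step. I do not expect a serious obstacle: the conceptual heart of the argument is simply the observation that $\natural(\cdot)$ forgets all term- and stage-level information in types, which simultaneously trivializes the substitution bookkeeping in \TApp{} and \TIns{} and makes the staging rules degenerate. The remaining work is the routine structural induction together with a check that the translated signature is a well-formed $\lambda^\to$ signature, so that every constant receives a legitimate simple type.
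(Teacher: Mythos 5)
Your proposal is correct and takes essentially the same route as the paper: induction on the typing derivation, with the staging rules collapsing to the induction hypothesis and \TConv{} discharged by the Preservation of Equality in $\natural$ lemma. You are in fact somewhat more careful than the paper, which leaves implicit both the type-level substitution invariance $\natural(\tau[x\mapsto N]) = \natural(\tau)$ and $\natural(\tau[\alpha\mapsto B]) = \natural(\tau)$ (needed to match the types in the \TApp{} and \TIns{} cases) and the missing clauses $\natural(\%_\alpha M)$ and $\natural(c)$ in the definition of the translation.
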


\begin{proof}
	We can prove using induction on the type derivation tree.
	We show the case of \TApp{} and \TConv{} as examples.
	Other cases are easy.
	\begin{rneqncase}{\TApp{}}{
			\G \V M : (\Pi(x:\sigma).\tau) @ A \text{ and } \G \V N :\sigma @A
		}
		From the induction hypothesis, we have $\natural(\G) \V \natural(M) : \natural(\sigma) \to \natural(\tau)$ and $\natural(\G) \V \natural(N) : \natural(\sigma)$.
		Use the Application rule in $\lambda^\to$, we get $\natural(\G) \V \natural(M)\ \natural(N) : \natural(\tau)$.
		Because $\natural(M)\ \natural(N) = \natural(M\ N)$ from the definition of $\natural$, $\natural(\G) \V \natural(M\ N) : \natural(\tau)$ in $\lambda^\to$.
	\end{rneqncase}
	\begin{rneqncase}{\TConv{}}{
			\G\V M:\tau@A \text{ and } \G\V \tau\E\sigma@A
		}
		By using Preservation of typing in $\natural$ to $\G\V \tau\E\sigma@A$, we get $\natural(\tau) = \natural(\sigma)$.
		$\natural(\G) \V \natural(M):\natural(\tau)$ from the induction hypothesis.
		Then, $\natural(\G) \V \natural(M):\natural(\sigma)$.
	\end{rneqncase}
	\qed
\end{proof}

\begin{lemma}[Preservation of Substitution in $\natural$]
	If $\G, x:\sigma \V M:\tau@A$ and $\G \V N:\sigma@A$ in $\lambda^{\text{MD}}$
	then $\natural(M[x \mapsto N])$ = $\natural(M)[x\mapsto\natural(N)]$
\end{lemma}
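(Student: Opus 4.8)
The plan is to prove this identity by structural induction on the term $M$; the typing hypotheses are not in fact needed for the purely syntactic equation, but I freely use the variable convention (bound variables of $M$ are chosen distinct from $x$ and from the free variables of $N$) to justify the capture-avoiding clauses of substitution. The cases split cleanly according to the clauses defining $\natural$ on terms.

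First I would dispatch the routine cases. For $M = x$ both sides equal $\natural(N)$; for a variable $M = y$ with $y \neq x$ (or a constant), substitution acts as the identity and both sides equal $\natural(M)$. For an application $M = M_1\ M_2$, substitution distributes over the two subterms and $\natural$ distributes over application, so the two induction hypotheses combine to give $\natural((M_1\ M_2)[x\mapsto N]) = \natural(M_1)[x\mapsto\natural(N)]\ \natural(M_2)[x\mapsto\natural(N)] = \natural(M_1\ M_2)[x\mapsto\natural(N)]$. For each of the multi-stage formers $\TB_\alpha M'$, $\TBL_\alpha M'$, $\Lambda\alpha.M'$, $M'\ B$ (and the CSP former $\%_\alpha M'$), term substitution simply descends into $M'$ while $\natural$ erases the former entirely, so the goal collapses to the induction hypothesis for $M'$.

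The only case that requires care is the $\lambda$-abstraction $M = \lambda y:\tau.M'$, because $\natural$ retains and translates the type annotation, $\natural(\lambda y:\tau.M') = \lambda y:\natural(\tau).\natural(M')$. Expanding both sides, the bodies agree by the induction hypothesis for $M'$, so what remains is to show that the annotations agree, namely $\natural(\tau[x\mapsto N]) = \natural(\tau)[x\mapsto\natural(N)]$. The main obstacle — though still elementary — is therefore this type-level commutation, which I would isolate as an auxiliary lemma proved by a side induction on the structure of $\tau$. The key observation is that $\natural$ collapses every term index occurring in a type (the clause $\natural(\tau\ M) = \natural(\tau)$ discards the argument), so a $\natural$-translated type is an ordinary simple type containing no free term variables. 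Consequently both $\natural(\tau[x\mapsto N]) = \natural(\tau)$ and $\natural(\tau)[x\mapsto\natural(N)] = \natural(\tau)$ hold, and the two annotations coincide. Feeding this auxiliary equation back into the $\lambda$-case closes the induction.
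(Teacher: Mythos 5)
Your proof is correct and takes essentially the same route as the paper, whose entire proof is the one-line remark that the lemma follows by induction on the typing derivation of $\G, x:\sigma \V M:\tau@A$; since that derivation tracks the structure of $M$ (modulo \TConv{}, which leaves the term unchanged), your structural induction on $M$ is the same argument, with your auxiliary fact $\natural(\tau[x\mapsto N]) = \natural(\tau) = \natural(\tau)[x\mapsto\natural(N)]$ (because $\natural$ discards every term index in a type) supplying exactly the detail the paper leaves implicit for the $\lambda$-abstraction case. Your observation that the typing hypotheses are unnecessary is a mild strengthening rather than a divergence.
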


\begin{proof}
	We prove by induction on the type derivation tree of $\G, x:\sigma \V M:\tau@A$.
	\qed\end{proof}

\begin{lemma}[Preservation of $\beta$-Reduction in $\natural$]
	If $\G \V M:\tau@A$ and $M \longrightarrow_\beta N$ in $\lambda^{\text{MD}}$
	then $\natural(M) \longrightarrow_\beta^+ \natural(N)$.
\end{lemma}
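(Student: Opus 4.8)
The plan is to proceed by induction on the derivation of $M \longrightarrow_\beta N$, that is, on the structure of the compatible closure that defines $\longrightarrow_\beta$. I would separate the base case, where $M$ is itself the contracted redex, from the congruence cases, where the contracted redex lies strictly inside one of the term constructors.

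For the base case we have $M = (\lambda x:\sigma.M_1)\ M_2$ and $N = M_1[x \mapsto M_2]$. Unfolding the definition of $\natural$ gives $\natural(M) = (\lambda x:\natural(\sigma).\natural(M_1))\ \natural(M_2)$, which is again a genuine $\beta$-redex in $\lambda^\to$ precisely because $\natural$ sends abstractions to abstractions and preserves application. Contracting it produces $\natural(M_1)[x \mapsto \natural(M_2)]$ in exactly one step, and by the already-proved Preservation of Substitution in $\natural$ this term is syntactically identical to $\natural(M_1[x \mapsto M_2]) = \natural(N)$. To invoke that substitution lemma I first use the typing hypothesis $\G \V M : \tau@A$ together with the Inversion Lemma for Application to recover $\G, x:\sigma@A \V M_1 : \rho@A$ and $\G \V M_2 : \sigma@A$; this is exactly why the statement carries a typing premise. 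Hence $\natural(M) \longrightarrow_\beta \natural(N)$, an instance of $\longrightarrow_\beta^+$.

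For the congruence cases I would argue constructor by constructor. The constructors that $\natural$ retains---namely the body of $\lambda x:\sigma.(-)$ and the two argument positions of application---are discharged by applying the induction hypothesis to the reduced subterm and then closing under the matching congruence rule of $\longrightarrow_\beta$ in $\lambda^\to$; since $\longrightarrow_\beta^+$ is itself closed under these congruences, the strictly positive step count is maintained. The constructors that $\natural$ erases---$\TB_\alpha(-)$, $\TBL_\alpha(-)$, $\Lambda\alpha.(-)$, stage application $(-)\ B$, and $\%_\alpha(-)$---are even easier: the $\natural$-image of the whole term coincides with the $\natural$-image of the subterm in which the redex sits, so the induction hypothesis applies verbatim and already yields $\longrightarrow_\beta^+$.

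The step I expect to be the main obstacle concerns redexes sitting inside type annotations or type-index positions, because $\natural$ discards the term argument of a type application ($\natural(\tau\ M) = \natural(\tau)$); a redex contracted there would translate to \emph{zero} $\lambda^\to$-steps and thereby break the strictly positive $\longrightarrow_\beta^+$ conclusion. The resolution is that $\longrightarrow_\beta$ is the compatible closure taken over \emph{term} positions only---consistent with the intent that full reduction contracts any term redex, including those under $\lambda$-abstraction and quotation, but does not descend into the type annotation $\sigma$ of $\lambda x:\sigma.M$. Under this reading every redex contracted by $\longrightarrow_\beta$ occupies a term position that $\natural$ tracks, so the problematic case never arises and the $\longrightarrow_\beta^+$ bound is safe. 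This lemma then plugs directly into Strong Normalization: an infinite $\longrightarrow_\beta$ sequence in \LMD{} would map, step for step, to an infinite $\longrightarrow_\beta$ sequence in $\lambda^\to$, contradicting strong normalization of the simply typed $\lambda$-calculus.
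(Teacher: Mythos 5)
Your proof follows essentially the same route as the paper's: induction on the derivation of $M \longrightarrow_\beta N$, with the redex case discharged by unfolding $\natural$ into a genuine $\beta$-redex of $\lambda^\to$ and appealing to Preservation of Substitution in $\natural$, and the congruence cases handled by the induction hypothesis (the paper records only the redex case and omits the rest). Your additional care---recovering the typing premises of the substitution lemma via inversion, and flagging that redexes erased by $\natural$ (e.g.\ inside type-index positions) must be excluded for the strictly positive step count to survive---goes beyond what the paper writes down but does not change the approach.
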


\begin{proof}
	{
		We prove by induction on the derivation of $\beta$ reduction of \LMD.
		We show only the main case.
		\newcommand{\R}{\longrightarrow_{\beta}}
		\begin{rneqncase}{$(\lambda x:\tau.M)\ N \R M[x \mapsto N]$}{}
			From the definition of $\natural$,
			$\natural((\lambda x:\tau.M)\ N)$ = $\lambda x:\natural(\tau).\natural(M)\ \natural(N)$.
			Because $\lambda x:\natural(\tau).\natural(M)\ \natural(N)$ is a typed term in $\lambda^\to$,
			we can do $\beta$ reduction from it.
			As a result of the reduction, we get $\natural(M)[x\mapsto\natural(N)]$.
			From Preservation of substitution in $\natural$, $\natural(M[x \mapsto N])$ = $\natural(M)[x\mapsto\natural(N)]$.
		\end{rneqncase}
		\qed
	}
\end{proof}

\begin{lemma}[Preservation of $\Lambda$-Reduction in $\natural$]
	If $\G \V M:\tau@A$ and $M \longrightarrow_\beta N$ in $\lambda^{\text{MD}}$
	then $\natural(M)$ =  $\natural(N)$.
\end{lemma}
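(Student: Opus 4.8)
The plan is to exploit the fact that the $\natural$-translation discards every stage-related construct---brackets $\TB_\alpha$, escapes $\TBL_\alpha$, stage abstractions $\Lambda\alpha$, stage applications $M\ B$, and cross-stage persistence $\%_\alpha$---whereas stage substitution $[\alpha\mapsto A]$ only ever rewrites such constructs: it renames stage subscripts and may expand a single prefix $\TB_\beta$, $\TBL_\beta$, or $\%_\beta$ into a nest of them (or erase it entirely when $A=\varepsilon$). Consequently $\natural$ ought to be completely blind to stage substitution. (Here the hypothesis is understood as $M \longrightarrow_\Lambda N$; since $\Lambda$-reduction is the one erased by $\natural$, this is what makes the \emph{equality} conclusion $\natural(M)=\natural(N)$ hold, in contrast with the $\longrightarrow_\beta^+$ of the $\beta$ case.) I would therefore first isolate the key auxiliary lemma: for every term $M$, type $\tau$, stage variable $\alpha$, and stage $A$,
\[
  \natural(M[\alpha\mapsto A]) = \natural(M)
  \qquad\text{and}\qquad
  \natural(\tau[\alpha\mapsto A]) = \natural(\tau).
\]

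I would prove these two equations simultaneously by mutual induction on the structure of $M$ and $\tau$, following the clauses defining $\natural$. The variable and constant cases are immediate, as stage substitution leaves $x$ and $c$ untouched. For $\lambda x{:}\tau.M$ and $\Pi x{:}\tau.\sigma$ I push the substitution inward and apply the induction hypotheses to the components---the abstraction case needs both the term and the type equation, which is exactly why the two statements must be established together. The genuinely stage-sensitive cases are $\TB_\beta M$, $\TBL_\beta M$, $\%_\beta M$, $\TW_\beta\tau$, $\Lambda\beta.M$, $\F\beta.\tau$, $M\ B$, and $\tau\ M$: here $[\alpha\mapsto A]$ may turn a single stage prefix into several or into none, but because $\natural$ strips each such prefix outright, the image reduces to $\natural$ of the immediate subterm, and the induction hypothesis closes the case no matter how many prefixes were produced. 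The application clauses $M\ B$ and $\tau\ M$ are analogous, using that $\natural(\tau\ M)=\natural(\tau)$ discards the term argument entirely.

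With the auxiliary lemma in hand, the statement follows by induction on the derivation of $M \longrightarrow_\Lambda N$. In the sole base case, $(\Lambda\alpha.M)\ A \longrightarrow_\Lambda M[\alpha\mapsto A]$, I compute $\natural((\Lambda\alpha.M)\ A)=\natural(\Lambda\alpha.M)=\natural(M)$ from the clauses for stage application and stage abstraction, while $\natural(M[\alpha\mapsto A])=\natural(M)$ by the auxiliary lemma, so the two sides coincide. Every congruence case is routine: $\natural$ either commutes with the surrounding term constructor or collapses it (as for $\TB_\alpha$, $\Lambda\alpha$, and the like), so the induction hypothesis $\natural(M)=\natural(N)$ propagates directly to the enclosing term. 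I expect the auxiliary stage-substitution lemma to carry all the real content; the only subtle point is keeping the induction honest about the expansion $\TB_{\beta[\alpha\mapsto A]}=\TB_{\alpha_1}\cdots\TB_{\alpha_n}$ (and its $\TBL$ and $\%$ analogues), which is harmless precisely because $\natural$ is insensitive to the length of such a nest.
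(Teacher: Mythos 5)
Your proof is correct and takes essentially the same route as the paper: induction on the derivation of $\Lambda$-reduction (reading the statement's $\longrightarrow_\beta$ as the obvious typo for $\longrightarrow_\Lambda$, just as you did), with the base case $(\Lambda\alpha.M)\ A \longrightarrow_\Lambda M[\alpha\mapsto A]$ closed by the fact that $\natural$ erases stage abstraction/application and is invariant under stage substitution. The only difference is one of rigor: you isolate and prove the invariance $\natural(M[\alpha\mapsto A])=\natural(M)$ as an explicit structural-induction lemma (together with its type-level companion), whereas the paper disposes of it in one line by remarking that $\natural(M)$ does not contain $\alpha$.
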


\begin{proof}
	We prove by induction on the derivation of $\Lambda$-reduction of \LMD.
	We show only the main case.
	\begin{itemize}
		\item \textit{Case} \( (\Lambda\alpha.M)\ A \longrightarrow_\Lambda M[\alpha\mapsto A] \).\\
		      By the definition of $\natural$, \(\natural((\Lambda\alpha.M)\ A) = \natural(M)\).
		      Because \(\natural(M)\) does not contain \(\alpha\), \(\natural(M[\alpha\mapsto A]) = \natural(M)\).
	\end{itemize}
	\qed
\end{proof}

\begin{lemma}[Preservation of $\blacklozenge$-Reduction in $\natural$]
	If $\G \V M:\tau@A$ and $M \longrightarrow_\blacklozenge N$ in $\lambda^{\text{MD}}$
	then $\natural(M)$ =  $\natural(N)$.
\end{lemma}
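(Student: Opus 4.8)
The plan is to proceed by induction on the derivation of $M \longrightarrow_\blacklozenge N$, exactly as in the proof of the preceding lemma for $\Lambda$-reduction. Recall that $\longrightarrow_\blacklozenge$ is the least compatible relation containing the single contraction $\TBL_\alpha \TB_\alpha L \longrightarrow_\blacklozenge L$, so the induction has one base case (the contraction itself) together with one congruence case for each term constructor under which reduction may occur.

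First I would treat the base case, where $M = \TBL_\alpha \TB_\alpha L$ and $N = L$. This is immediate from the definition of $\natural$: since the translation erases both escape and quotation, $\natural(\TBL_\alpha \TB_\alpha L) = \natural(\TB_\alpha L) = \natural(L) = \natural(N)$. Note that the typing hypothesis $\G \V M : \tau @ A$ plays no role in this step; it is carried only to keep the statement uniform with the other $\natural$-preservation lemmas.

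Next I would dispatch the congruence cases. In each, the inductive hypothesis supplies $\natural(L) = \natural(L')$ for the immediate subterm $L \longrightarrow_\blacklozenge L'$ that is reduced, and I would conclude using the clause of $\natural$ for the enclosing constructor. For the erasing constructors $\TB_\alpha$, $\TBL_\alpha$, $\Lambda\alpha$, and stage application $M\ B$ the equality is immediate, e.g.\ $\natural(\TB_\alpha L) = \natural(L) = \natural(L') = \natural(\TB_\alpha L')$; for $\lambda x:\tau.M$ and ordinary application $M\ N$ the translation is a homomorphism on these constructors, so equal subterm translations yield equal whole-term translations.

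I do not expect a genuine obstacle here: the entire content of the lemma is that $\blacklozenge$-reduction only rearranges the multi-stage scaffolding that $\natural$ discards, so a redex and its contractum always have identical translations. The only points requiring care are bookkeeping---enumerating all the compatibility cases and, following the style of the $\Lambda$-reduction lemma above, presenting the contraction case explicitly while remarking that the congruence cases follow routinely from the induction hypothesis and the definition of $\natural$.
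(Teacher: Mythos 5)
Your proof is correct and follows essentially the same route as the paper's: induction on the derivation of $\longrightarrow_\blacklozenge$, with the contraction case $\TBL_\alpha \TB_\alpha L \longrightarrow_\blacklozenge L$ immediate from the fact that $\natural$ erases both $\TB_\alpha$ and $\TBL_\alpha$, and the congruence cases following routinely from the induction hypothesis. You merely spell out the compatibility cases that the paper leaves implicit.
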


\begin{proof}
	We prove by induction on the derivation of $\blacklozenge$-reduction of \LMD.
	We show only the main case.
	\begin{itemize}
		\item \textit{Case} \( \TBL_\alpha \TB_\alpha M \longrightarrow_\blacklozenge M \).\\
		      By the definition of $\natural$, \(\natural(\TBL_\alpha \TB_\alpha M) = \natural(M)\).
	\end{itemize}
	\qed
\end{proof}

\begin{theorem}[Strong Normalization]
	If $\G\V^A t:T$ then there is no infinite sequence of terms $(t_i)_{i\ge1}$ and $t_i \longrightarrow_{\beta, \TBL \TB,\Lambda} t_{i+1}$ for $i\ge 1$
\end{theorem}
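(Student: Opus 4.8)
The plan is to combine the already-established $\natural$-translation lemmas with a separate termination argument for the two reductions that the translation collapses. Suppose, toward a contradiction, that there is an infinite sequence $(t_i)_{i \ge 1}$ with $\G \V t_1 : \tau @ A$ and $t_i \longrightarrow t_{i+1}$ for every $i$. By Preservation each $t_i$ is well typed, and by Preservation of Typing in $\natural$ the image $\natural(t_1)$ is typable in $\lambda^\to$, hence strongly normalizing.

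First I would dispose of the $\beta$-steps. Each step $t_i \longrightarrow_\beta t_{i+1}$ satisfies $\natural(t_i) \longrightarrow_\beta^+ \natural(t_{i+1})$ by Preservation of $\beta$-Reduction in $\natural$, whereas each $\Lambda$- or $\blacklozenge$-step satisfies $\natural(t_i) = \natural(t_{i+1})$ by the two corresponding preservation lemmas. Concatenating the $\longrightarrow_\beta^+$ portions yields a $\longrightarrow_\beta$ reduction out of $\natural(t_1)$ whose total length is at least the number of $\beta$-steps occurring among the $t_i$. Since $\natural(t_1)$ is strongly normalizing in $\lambda^\to$, that number is finite; hence there is an index $N$ beyond which every step $t_i \longrightarrow t_{i+1}$ (for $i \ge N$) is a $\Lambda$- or $\blacklozenge$-step.

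The remaining task, and the real obstacle, is to rule out an infinite tail consisting only of $\Lambda$- and $\blacklozenge$-reductions, which the translation cannot see. The naive idea of using term size fails, because a $\Lambda$-step $(\Lambda\alpha.M)\ A \longrightarrow_\Lambda M[\alpha \mapsto A]$ can enlarge the term: when $A = \alpha_1 \cdots \alpha_n$, the stage substitution duplicates every $\TB_\alpha$, $\TBL_\alpha$, and $\%_\alpha$ occurrence into $n$ copies. I would therefore use the lexicographic measure $\mu(M) = (n_\Lambda(M), n_\TBL(M))$, where $n_\Lambda(M)$ counts the $\Lambda$-abstraction subterms of $M$ and $n_\TBL(M)$ counts the escape subterms $\TBL_\alpha(\cdot)$. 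The key observations are that stage substitution never introduces a new $\Lambda$-abstraction, so a $\Lambda$-step strictly decreases $n_\Lambda$ by one regardless of its effect on $n_\TBL$, while a $\blacklozenge$-step $\TBL_\alpha \TB_\alpha M \longrightarrow_\blacklozenge M$ leaves $n_\Lambda$ unchanged and strictly decreases $n_\TBL$ by one. Because closure under compatible contexts affects neither count outside the contracted redex, $\mu$ strictly decreases in the lexicographic (well-founded) order on $\mathbb{N} \times \mathbb{N}$ at every $\Lambda$- or $\blacklozenge$-step. This contradicts the existence of the infinite tail identified above, so no infinite reduction sequence can exist, completing the argument.
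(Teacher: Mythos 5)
Your proposal is correct, and its skeleton matches the paper's: transfer strong normalization from $\lambda^\to$ through the $\natural$-translation to bound the $\beta$-steps, then kill the $\Lambda$- and $\blacklozenge$-steps (which the translation collapses to equalities) by a separate well-foundedness argument. The genuine difference lies in that second component, and there your argument is the more careful one. The paper disposes of the non-$\beta$ steps with the single claim that ``reductions other than $\beta$-reduction reduce the size of a term,'' which is false as stated: exactly as you observe, when $|A| = n \ge 2$ a $\Lambda$-step $(\Lambda\alpha.M)\ A \longrightarrow_\Lambda M[\alpha\mapsto A]$ turns each $\TB_\alpha$, $\TBL_\alpha$, and $\%_\alpha$ occurring in $M$ into $n$ nested copies, so the contractum can be strictly larger than the redex (take, e.g., $(\Lambda\alpha.\TB_\alpha\TBL_\alpha\TB_\alpha c)\ \beta\gamma$ with $c$ a constant). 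Your lexicographic measure $(n_\Lambda(M), n_{\TBL}(M))$ is exactly the repair this gap needs: stage substitution is a homomorphism on term structure that never creates $\Lambda$-abstractions, so a $\Lambda$-step decreases $n_\Lambda$ by precisely one no matter how many escapes it duplicates, a $\blacklozenge$-step preserves $n_\Lambda$ while decreasing $n_{\TBL}$ by one, and both counts are additive under compatible contexts, so the measure strictly decreases on every non-$\beta$ step. That the two proofs are organized contrapositively to each other (the paper argues an infinite sequence must contain infinitely many $\beta$-steps; you argue there are finitely many $\beta$-steps and then no infinite non-$\beta$ tail) is immaterial; what your version buys is that the termination argument for the collapsed steps actually goes through, whereas the paper's size-based justification would need precisely the refinement you supply.
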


\begin{proof}
	If there is an infinite reduction sequence in $\lambda^{\text{MD}}$
	then there are infinite beta reductions in the sequence.
	This is because reductions other than $\beta$-reduction reduce the size of a term.

	Then, we show Strong Normalization by proof by contradiction.
	We assume that there is an infinite reduction in a typed \LMD from typed term $M$.
	We can construct a typed term of simply typed lambda calculus $\natural(M)$ from Preservation of Typing in $\natural$ and
	$\natural(M)$ has infinite reductions from Preservation of Reduction in $\natural$.
	However, $\lambda^\to$ has a property of Strong Normalization, so there is no infinite reductions.
	\qed
\end{proof}

\begin{theorem}[Confluence(Church-Rosser Property)]
	Define $M \longrightarrow N$ as $M \longrightarrow_{\beta} N$ or $M\longrightarrow_\blacklozenge N$ or  $M \longrightarrow_{\Lambda} N$.
	For any term $M$, if $M \longrightarrow^* N$ and $M \longrightarrow^* L$,
	there exists $O$ that satisfies $N \longrightarrow^* O$ and $L \longrightarrow^* O$.
\end{theorem}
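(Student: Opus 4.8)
The plan is to apply Newman's Lemma, exactly as the proof sketch indicates. Since Strong Normalization has already been established for $\longrightarrow$, it suffices to prove the Weak Church--Rosser property (local confluence): whenever $M \longrightarrow N$ and $M \longrightarrow L$ hold by single steps, there is a common reduct $O$ with $N \longrightarrow^* O$ and $L \longrightarrow^* O$. Confluence for $\longrightarrow^*$ then follows immediately from Newman's Lemma. So the only real work is local confluence, which I would prove by a case analysis on the relative positions of the two contracted redexes $R_1$ (producing $N$) and $R_2$ (producing $L$).

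If $R_1$ and $R_2$ occur at disjoint positions, the two contractions commute: performing $R_2$ on $N$ and $R_1$ on $L$ reaches the same term in one step each, closing the diamond. The only other possibility is that one redex is nested inside the other, and here the key observation is that the three left-hand sides $(\lambda x{:}\tau.M)\,N$, $\TBL_\alpha \TB_\alpha M$, and $(\Lambda\alpha.M)\,A$ have pairwise distinct head constructors, so two redexes can never properly overlap at a shared position---they are always either disjoint or one sits strictly inside a subterm of the other. Even the apparent self-overlap of $\blacklozenge$, as in $\TBL_\alpha \TB_\alpha(\TBL_\alpha \TB_\alpha M)$, is benign: both contractions yield $\TBL_\alpha \TB_\alpha M$. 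Hence there are no genuine critical pairs, and every nontrivial case is a nested one in which the inner redex lies inside a subterm that the outer contraction copies, discards, or relocates by substitution.

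To close these nested cases I would establish, by induction on $M$, a set of commutation lemmas for the two substitutions against reduction: if $M \longrightarrow M'$ then $M[x \mapsto N] \longrightarrow^* M'[x \mapsto N]$ and $M[\alpha \mapsto A] \longrightarrow^* M'[\alpha \mapsto A]$; and if $N \longrightarrow N'$ then $M[x \mapsto N] \longrightarrow^* M[x \mapsto N']$. These handle the three substitutive subcases directly. For instance, when the inner redex lies in the body $M$ of $(\Lambda\alpha.M)\,A$, contracting the inner redex first and then the $\Lambda$-redex gives $M'[\alpha\mapsto A]$, while contracting the $\Lambda$-redex first gives $M[\alpha\mapsto A]$, and the first commutation lemma provides $M[\alpha\mapsto A]\longrightarrow^* M'[\alpha\mapsto A]$, joining the two. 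The argument-duplication subcase of $\beta$ is handled symmetrically by the third lemma.

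I expect the main obstacle to lie in the stage substitution $[\alpha \mapsto A]$ and its interaction with $\blacklozenge$-reduction. Because $[\alpha \mapsto A]$ rewrites the $\TBL$ and $\TB$ prefixes, a single $\blacklozenge$-redex $\TBL_\alpha \TB_\alpha M$ is turned into the iterated form $\TBL_{\alpha_n}\!\cdots\TBL_{\alpha_1}\TB_{\alpha_1}\!\cdots\TB_{\alpha_n}(M[\alpha\mapsto A])$ when $A = \alpha_1\cdots\alpha_n$, so a single step on one side of the diamond may correspond to a sequence of $\blacklozenge$-steps on the other. This is precisely why the joins must be stated with $\longrightarrow^*$ rather than single steps, and why Newman's Lemma rather than a strict diamond property is the appropriate tool. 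Once the commutation lemmas are proved with this expansion accounted for, all cases close, and local confluence combined with the already-established Strong Normalization yields Confluence.
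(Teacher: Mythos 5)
Your proposal is correct and follows essentially the same route as the paper: Strong Normalization plus Newman's Lemma reduces the problem to Weak Church--Rosser, which is then established by observing that redexes never properly overlap (they are either disjoint or nested), so the two contractions can be joined. In fact your write-up is more careful than the paper's own two-line sketch, since you make explicit the needed commutation lemmas for term and stage substitution and the key subtlety that $[\alpha\mapsto A]$ can turn a single $\blacklozenge$-redex into a sequence of $\blacklozenge$-steps (or zero steps when $A=\varepsilon$), which is exactly why the joins must be stated with $\longrightarrow^*$.
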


\begin{proof}
	Because we show the Strong Normalization of $\lambda^{\text{MD}}$, we can use Newman's lemma to prove Church-Rosser Property of $\lambda^{\text{MD}}$.
	Then, what we must show is Weak Church-Rosser Property.

	When we consider two different redexes in a $\lambda^{\text{MD}}$ term, they can only be disjoint, or one is a part of the other.
	In short, they never overlap each other.
	So, we can reduce one of them after we reduce another.
	\qed
\end{proof}

\begin{lemma}[Unique Decomposition]
	If $\G$ does not have any variable declared at stage $\varepsilon$
	and $\G \V M : \tau @ A$, then either
	\begin{enumerate}
		\item $ M \in V^A$, or
		\item $M$ can be uniquely decomposed into an evaluation context and a redex, that is, there uniquely exist $B, E^A_B$, and $R^B$ such that $M = E^A_B[R^B]$.
	\end{enumerate}
\end{lemma}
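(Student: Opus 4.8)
The plan is to prove the statement by induction on the derivation of $\G \V M : \tau @ A$, with a case analysis on the last typing rule and, within each rule, on whether the stage $A$ is $\varepsilon$ or a non-empty stage $A'$. The rule \TConv{} leaves $M$ unchanged, so it is dispatched immediately by the induction hypothesis. For \TConst{} and \TVar{} the term is a constant or a variable: a variable $x$ with $x:\tau@A\in\G$ sits, by the hypothesis on $\G$, at a non-empty stage, hence $x\in V^{A}$, and a constant is a base value, so in both cases alternative (1) holds and no redex occurs. This is the first place the ``no variable at stage $\varepsilon$'' hypothesis is used, and I would note that it is exactly what makes the variable fall into the $V^{A'}$ clause rather than being orphaned.

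For the prefix/binder constructors $\lambda$, $\TB_\alpha$, $\TBL_\alpha$, $\Lambda\alpha$, and $\%_\alpha$ I would recurse on the immediate subterm \emph{at its own stage} (for \TTB{} the subterm is typed at $A\alpha$, for \TTBL{} at $A$, and so on). If the induction hypothesis returns a value, I match the whole term against the value grammar at stage $A$ and conclude it is a value; if it returns a decomposition of the subterm into an evaluation context and a redex $R^B$, I prepend the matching context former ($\TB_\alpha(-)$, $\TBL_\alpha(-)$, $\lambda x:\tau.(-)$, $\Lambda\alpha.(-)$, $\%_\alpha(-)$) to obtain a decomposition of $M$. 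Crucially, the relaxed hypothesis survives each recursion: under a $\lambda$ at a non-empty stage we extend $\G$ with a variable at that same non-empty stage, whereas at stage $\varepsilon$ a $\lambda$-abstraction is \emph{immediately} a value and no recursion (hence no offending extension) is needed. The escape case is the delicate one among these: the side condition $A''\neq\varepsilon$ in the value clause for $\TBL_\alpha v^{A''}$ is precisely what separates a nested escape (a value at stage $A''\alpha$ with $A''\neq\varepsilon$) from the redex $\TBL_\alpha\TB_\alpha v^\alpha$ at stage $\alpha$; in the latter subcase I invoke canonical forms (an $\varepsilon$-value of code type $\TW_\alpha\sigma$ must have the shape $\TB_\alpha v^\alpha$) to exhibit the $\alpha$-redex.

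The application cases \TApp{} and \TIns{} require canonical forms. For \TApp{} I recurse on the function, then on the argument; if both are values, then at a non-empty stage the term $v^{A'}\, v^{A'}$ is itself a value, while at stage $\varepsilon$ a value of $\Pi$-type is forced to be a $\lambda$-abstraction, producing the $\beta$-redex $(\lambda x:\tau.M)\, v^\varepsilon$. For \TIns{} a stage application of a value at a non-empty stage is again a value, while at stage $\varepsilon$ a value of $\forall$-type is forced to be $\Lambda\alpha.v^\varepsilon$. These ``forcing'' steps rely on the Inversion lemmas together with an auxiliary fact I would prove first: type equivalence preserves the outermost type former, so that $\Pi$-, $\TW$-, and $\forall$-types lie in distinct equivalence classes and a value's syntactic shape is determined by its type.

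Finally, uniqueness splits into exclusivity and determinacy. Exclusivity---that a value is never decomposable---I would establish by a separate structural induction showing that no $v^A$ fills an evaluation context with a redex, again turning on the $A''\neq\varepsilon$ side conditions. Determinacy of the decomposition then follows because the context grammar is left-to-right and call-by-value: at each node exactly one subterm may contain the hole (the leftmost non-value), and exclusivity rules out the competing reading in which that subterm is simultaneously a value. I expect the genuine obstacle to be the stage-application case at stage $\varepsilon$: the redex grammar lists only $(\Lambda\alpha.v^\varepsilon)\,\varepsilon$, yet a well-typed term may apply a $\Lambda$-value to a \emph{non-empty} stage at stage $\varepsilon$---for instance $(\Lambda\gamma.\TB_\gamma c)\,\alpha$, which typechecks at $\varepsilon$---and such a term is, under the literal definitions, neither a value nor an $\varepsilon$-redex. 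Resolving this, either by reading the instantiation redex as $(\Lambda\alpha.v^\varepsilon)\,B$ for arbitrary $B$ (matching the staged-reduction rule as actually written) or by carrying an invariant that constrains which stages can be applied at stage $\varepsilon$, is the crux on which the theorem turns, and is where I would concentrate the effort.
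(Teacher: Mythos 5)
Your proposal follows essentially the same route as the paper's proof: induction on the typing derivation with a case split on whether the stage is $\varepsilon$, inversion/canonical-forms lemmas to force the shape of $\varepsilon$-values of $\Pi$-, $\TW_\alpha$-, and $\forall$-type, prepending of context formers to decompositions of subterms, and reliance on the relaxed hypothesis (no variables at stage $\varepsilon$) to keep the induction going under binders at non-empty stages. Your observation about the stage-application case is also vindicated rather than an obstacle: the paper's own \TIns{} case silently treats $(\Lambda\alpha.v^\varepsilon)\ C$ for an arbitrary stage $C$ as an $\varepsilon$-redex, i.e., it adopts exactly the reading of the redex grammar that you propose.
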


\begin{proof}
	{
	We prove by induction on the type derivation tree of $\G \V M:\tau@A$.

	\begin{rneqncase}{\TVar{}}{
			\G\V x:\tau@A \text{ is the root of the type derivation tree. }
		}
		We can assume $A \neq \varepsilon$ because $x:\tau@\varepsilon \notin \G$.
		Then, $x \in V^A$.
	\end{rneqncase}

	\begin{rneqncase}{\TTBL}{
			\G \V \TBL_\alpha M :\tau @ A\alpha \text{ is derived from } \G \V M : \TW_\alpha \tau @ A.
		}
		From the induction hypothesis, one of the following holds.
		\begin{enumerate}
			\item $ M \in V^A$.
			\item There is an unique triple of $(B, E^A_B, R^B)$ such that ($B = \varepsilon$ or $B = \beta$) and $M = E^A_B[R^B]$.
		\end{enumerate}
		\begin{itemize}
			\item If $ M \in V^A$ is true
			      \begin{itemize}
				      \item and $A=\varepsilon$, then\\
				            $ M = \TB_\alpha v^\alpha $ from Inversion Lemma and
				            $\TBL_\alpha \TB_\alpha v^\alpha = E^\alpha_\alpha [R^\alpha]$.
				      \item and $A\neq\varepsilon$, then\\
				            $\TBL_\alpha v^A \in V^{A\alpha}$.
			      \end{itemize}
			\item If there is an unique triple of $(B, E^A_B, R^B)$ such that ($B = \varepsilon$ or $B = \beta$) and $M = E^A_B[R^B]$
			      \begin{itemize}
				      \item and $ M = \TB_\alpha E^\alpha_B[R^B] $, then\\
				            $ \TBL_\alpha \TB_\alpha E^\alpha_B[R^B] \longrightarrow_s E^\alpha_B[R^B]$ doesn't hold because $ E^\alpha_B[R^B] \notin v^\alpha$.
				            So, given $B, E^A_B, R^B$ are the unique tuples satisfies the condition.
				      \item Otherwise,\\
				            It is obvious from the induction hypothesis and the definition of $E^A_B$.
			      \end{itemize}
		\end{itemize}

	\end{rneqncase}

	\begin{rneqncase}{\TIns}{
			\G \V M\ C :\tau[\alpha \mapsto C] @ A\\
			\text{ is derived from }  \G \V M : \forall\alpha.\tau @ A.
		}

		\begin{itemize}
			\item If $ A = \varepsilon$,\\
			      By the induction hypothesis, either
			      \begin{enumerate}
				      \item $ M \in V^\varepsilon$ or
				      \item there is a unique triple of $(B, E^\varepsilon_B, R^B)$ such that ($B = \varepsilon$ or $B = \beta$) and $M = E^\varepsilon_B[R^B]$.
			      \end{enumerate}

			      \begin{itemize}
				      \item If $ M \in V^\varepsilon$,\\
				            $ M = \Lambda\alpha.v^\varepsilon$ from Inversion Lemma.
				            Thus, $ \Lambda\alpha.v^\varepsilon\ C = E^\varepsilon_\varepsilon [R^\varepsilon]$
				      \item If there is an unique triple of $(B, E^\varepsilon_B, R^B)$ such that ($B = \varepsilon$ or $B = \beta$) and $M = E^\varepsilon_B[R^B]$,
				            we can decompose $E^\varepsilon_B[R^B]\ B$ uniquely
				            because $ E^\varepsilon_B[R^B] \neq \Lambda\alpha.v^\varepsilon$,
			      \end{itemize}

			\item If $ A \neq \varepsilon $,\\
			      By the induction hypothesis, either
			      \begin{enumerate}
				      \item $ M \in V^A$ or
				      \item there is a unique triple of $(B, E^A_B, R^B)$ such that ($B = \varepsilon$ or $B = \beta$) and $M = E^A_B[R^B]$.
			      \end{enumerate}
			      \begin{itemize}
				      \item If $ M \in V^A$,
				            it is clear that $v^A\ C \in V^A$.
				      \item If there are an unique triple of $(B, E^A_B, R^B)$ such that ($B = \varepsilon$ or $B = \beta$) and $M = E^A_B[R^B]$,
				            we can decompose $E^A_B[R^B]\ C$ uniquely because we cannot $\Lambda$ reduction at stage $A$.
			      \end{itemize}
		\end{itemize}
	\end{rneqncase}
	}
	\qed
\end{proof}

\begin{corollary}[Progress]
	If $x:\tau@\varepsilon \notin \G$ and $\G \V M : \tau @ A$ then $ M \in V^A $ or there is $M'$ such that $M \longrightarrow M'$.
\end{corollary}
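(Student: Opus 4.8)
The plan is to obtain Progress as an immediate consequence of Unique Decomposition. First I would apply Unique Decomposition to the hypothesis $\G \V M : \tau @ A$; the side condition that $\G$ declares no variable at stage $\varepsilon$ is exactly the one that lemma requires. This splits the argument into the two cases provided by that lemma.

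If $M \in V^A$, then the left disjunct of the conclusion holds and there is nothing further to prove. Otherwise, Unique Decomposition yields a stage $B$, an evaluation context $E^A_B$, and a redex $R^B$ with $M = E^A_B[R^B]$, where $B$ is either $\varepsilon$ or a stage variable $\beta$. I would then note that each redex form is the left-hand side of a staged-reduction rule whose hole stage matches $B$: if $B = \varepsilon$, then $R^\varepsilon$ is either $(\lambda x:\tau.M)\ v^\varepsilon$ or $(\Lambda\alpha.v^\varepsilon)\ \varepsilon$, matching the first or second staged-reduction rule; if $B = \beta$, then $R^\beta = \TBL_\beta \TB_\beta v^\beta$, matching the third. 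In every case the redex contracts within its context, so $M = E^A_B[R^B] \longrightarrow_s M'$ for the corresponding contractum $M'$. Since $\longrightarrow_s$ is a subrelation of $\longrightarrow$ (by the earlier Subject Reduction theorem for staged reduction), this also yields $M \longrightarrow M'$, establishing the right disjunct.

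I expect no real obstacle in this corollary: all of the work lies in Unique Decomposition, whose proof already resolves the delicate interplay among stages, values, and evaluation contexts. The only thing Progress adds is verifying the exhaustive, exact correspondence between the redex grammar ($R^\varepsilon$ and $R^\alpha$) and the left-hand sides of the three staged-reduction rules, together with the observation that $B \in \{\varepsilon, \beta\}$ covers every decomposition. This matching holds by inspection, so the corollary follows from Unique Decomposition after a short case analysis.
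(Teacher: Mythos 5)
Your proof is correct and follows exactly the paper's route: the paper also derives Progress immediately from Unique Decomposition, with the redex--rule correspondence and the fact that staged reduction is contained in full reduction supplying the conclusion. The extra detail you spell out (the case analysis on $B$ and the matching of each redex form to a staged-reduction rule) is precisely what the paper leaves implicit in its one-line proof.
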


\begin{proof}
	Immediate from Unique Decomposition.
	\qed\end{proof}

\fi
\end{document}